\theoremstyle{plain}
\newtheorem{thm}{Theorem}
\newtheorem*{thm*}{Theorem}
\newtheorem{lem}[thm]{Lemma}
\newtheorem*{lem*}{Lemma}
\newtheorem{prop}[thm]{Proposition}
\theoremstyle{definition}
\newtheorem{defn}[thm]{Definition}
\newtheorem{rem}[thm]{Remark}
\newtheorem{example}[thm]{Example}
\newcommand{\A}{\mathbf{A}}
\newcommand{\B}{\mathbf{B}}
\newcommand{\C}{\mathbf{C}}
\newcommand{\K}{\mathbf{K}}
\newcommand{\X}{\mathbf{X}}
\newcommand{\N}{\mathbb{N}}
\newcommand{\R}{\mathbb{R}}
\newcommand{\Q}{\mathbb{Q}}
\newcommand{\Z}{\mathbb{Z}}
\renewcommand{\vec}[1]{\mathbf{#1}}
\newcommand{\ba}{\vec{a}}
\newcommand{\bb}{\vec{b}}
\newcommand{\bc}{\vec{c}}
\newcommand{\br}{\vec{r}}
\newcommand{\bu}{\vec{u}}
\newcommand{\bx}{\vec{x}}
\newcommand{\bv}{\vec{v}}
\newcommand{\bw}{\vec{w}}
\newcommand{\be}{\vec{e}}
\newcommand{\bq}{\vec{q}}
\newcommand{\bz}{\vec{z}}
\newcommand{\btau}{\bm{\tau}}
\newcommand{\bmu}{\bm{\mu}} 
\newcommand{\bdelta}{\bm{\delta}} 
\newcommand{\brho}{\bm{\rho}} 
\newcommand{\bxi}{\bm{\xi}}
\newcommand{\bzeta}{\bm{\zeta}}
\DeclareMathOperator{\BLP}{BLP}
\DeclareMathOperator{\CBLP}{CBLP}
\DeclareMathOperator{\SBLP}{SBLP}
\DeclareMathOperator{\AIP}{AIP}
\DeclareMathOperator{\SAC}{SAC}
\DeclareMathOperator{\Pol}{Pol}
\DeclareMathOperator{\PCSP}{PCSP}
\DeclareMathOperator{\CSP}{CSP}
\DeclareMathOperator{\ar}{ar}
\DeclareMathOperator{\CLAP}{CLAP}
\DeclareMathOperator{\power}{\mathcal{P}}
\DeclareMathOperator{\maj}{maj}
\DeclarePairedDelimiter{\floor}{\lfloor}{\rfloor}
\newcommand{\Qconv}{\ensuremath{\mathscr{Q}_{\operatorname{conv}}}}
\newcommand{\Zaff}{\ensuremath{\mathscr{Z}_{\operatorname{aff}}}}
\newcommand{\Mblpaip}{\ensuremath{\mathscr{M}_{\operatorname{\BLP+\AIP}}}}
\newcommand{\ourMinion}{\mathscr{C}}
\newcommand{\bone}{\mathbf{1}}  
\newcommand{\bzero}{\mathbf{0}} 
\DeclareMathOperator{\diag}{diag}
\DeclareMathOperator{\supp}{supp}
\newcommand{\inn}[2]{\mathbf{#1}\textbf{-in-}\mathbf{#2}}
\newcommand{\NAE}{\mathbf{NAE}}
\newcommand{\red}{\leq_p}
\newcommand{\un}{\ensuremath{R_{\operatorname{u}}}}
\begin{document}

\title{CLAP: A New Algorithm for Promise CSPs\thanks{An extended abstract of this work appeared in the Proceedings of the 2022 ACM-SIAM Symposium on Discrete Algorithms (SODA'22)~\cite{Ciardo22:soda}. The research leading to these results has received funding from the European Research Council (ERC) under the European Union's Horizon 2020 research and innovation programme (grant agreement No 714532). Stanislav \v{Z}ivn\'y was supported by a Royal Society University Research Fellowship. The paper reflects only the authors' views and not the views of the ERC or the European Commission. The European Union is not liable for any use that may be made of the information contained therein. This work was also supported by UKRI EP/X024431/1.
For the purpose of Open Access, the authors have applied a CC BY public copyright licence to any Author Accepted Manuscript version arising from this submission. All data is provided in full in the results section of this paper.}}
\author{Lorenzo Ciardo\thanks{Department of Computer Science, University of Oxford, UK.} \and Stanislav {\v{Z}}ivn{\'y}$^\dagger$}
	
\date{\today}
\maketitle

\begin{abstract} 
  We propose a new algorithm for Promise Constraint Satisfaction Problems
  (PCSPs). It is a combination of the \textbf{C}onstraint Basic \textbf{L}P
  relaxation and the \textbf{A}ffine I\textbf{P} relaxation ($\CLAP$). We give a
  characterisation of the power of $\CLAP$ in terms of a minion homomorphism. Using this
  characterisation, we identify a certain weak notion of symmetry which, if
  satisfied by infinitely many polymorphisms of PCSPs, guarantees tractability.
  
  We demonstrate that there are PCSPs solved by $\CLAP$ that are not solved by
  any of the existing algorithms for PCSPs; in particular, not by the $\BLP+\AIP$
  algorithm of Brakensiek et al.~[SICOMP'20] and not by a reduction to tractable
  finite-domain CSPs.
\end{abstract}	

\section{Introduction}
\label{sec:intro}

\paragraph{Constraint Satisfaction}
Constraint Satisfaction Problems (CSPs) have driven some of the most influential
developments in theoretical computer science, from NP-completeness to the PCP
theorem~\cite{Arora98:jacm-probabilistic,Arora98:jacm-proof,Dinur07:jacm} to
semidefinite programming algorithms~\cite{Raghavendra08:everycsp} to the Unique
Games Conjecture~\cite{Khot02stoc}.

A CSP over domain $A$ is specified by a finite collection $\A$ of relations over
$A$, and is denoted by $\CSP(\A)$. Given on input a set of variables and a set
of constraints, each of which uses relations from $\A$, the task is to decide
the existence of an assignment of values from $A$ to the variables that
satisfies
all the constraints. Classic examples of CSPs include $2$-SAT, graph
$3$-colouring, and linear equations of fixed width over finite groups.

For Boolean CSPs, which are CSPs with
$|A|=2$, Schaefer proved 
that every such CSP is either solvable in polynomial time or is
NP-complete~\cite{Schaefer78:stoc}. Feder and Vardi famously conjectured that
the same holds true for CSPs over arbitrary finite
domains~\cite{Feder98:monotone}. Furthermore, they realised the
importance of considering closure properties of solution spaces of CSPs~\cite{Feder98:monotone}, which initiated the algebraic
approach~\cite{Jeavons97:closure,Jeavons98:algebraic,Bulatov05:classifying}. The key
notion in the algebraic approach is that of \emph{polymorphisms}, which are operations that take
solutions to a CSP and are guaranteed to return, by a coordinatewise
application, a solution to the same CSP. All CSPs admit projections (also known
as dictators) as polymorphisms. However, the presence of less trivial polymorphisms, satisfying some notion of symmetry, is necessary for tractability. For instance, the set of solutions to
$2$-SAT is closed under the ternary majority operation $\maj:\{0,1\}^3\to\{0,1\}$
that satisfies the following notion of symmetry: $\maj(a,a,b)=\maj(a,b,a)=\maj(b,a,a)=a$ for
any $a,b\in\{0,1\}$. Similarly, the set of solutions to Horn-SAT is closed under
the binary minimum operation $\min:\{0,1\}^2\to\{0,1\}$ that satisfies a different
notion of symmetry: $\min(a,a)=a$, $\min(a,b)=\min(b,a)$, and
$\min(a,\min(b,c))=\min(\min(a,b),c)$
for any $a,b,c\in\{0,1\}$. The binary
$\max$ operation -- which is a polymorphism of dual Horn-SAT -- has the same notion of symmetry, called
semilattice~\cite{BKW17}. Together with the ternary minority polymorphism,
which captures linear equations on $\{0,1\}$, this gives all non-trivial tractable cases
from Schaefer's dichotomy result.\footnote{The trivial cases, called $0$-
and $1$-valid, are captured by the constant-$0$ and constant-$1$
polymorphisms, respectively.}

The polymorphisms of any CSP form a clone, in that they include all projections
and are closed under
composition. For instance, since Horn-SAT has $\min$ as a polymorphism, it also
has the $4$-ary minimum operation $\min_4(a,b,c,d)=\min(a,\min(b,\min(c,d)))$
as a polymorphism. Building on the connection to universal algebra,
the algebraic approach has been tremendously successful beyond decision CSPs,
e.g. for robust
satisfiability of CSPs~\cite{Dalmau13:toct,Barto16:sicomp,Dalmau19:sicomp}, for exact
optimisation of CSPs~\cite{Kozik15:icalp,tz16:jacm-complexity,Kolmogorov17:sicomp}, and for
characterising the power of
algorithms~\cite{Kun12:itcs,Barto14:jacm,Berman10:few,ktz15:sicomp,Kozik21:sicomp,tz17:sicomp,tz18}. The
culmination of the algebraic approach is the positive resolution of the
dichotomy conjecture by Bulatov~\cite{Bulatov17:focs} and
Zhuk~\cite{Zhuk20:jacm}. We refer the reader to~\cite{BKW17} for a
survey on the algebraic approach.

\paragraph{Promise Constraint Satisfaction}
In this paper, we study Promise Constraint Satisfaction Problems (PCSPs), 
whose systematic study was initiated by Austrin, Guruswami, and
H{\aa}stad~\cite{AGH17}, and Brakensiek and Guruswami~\cite{BG21}. 
PCSPs form a vast generalisation of CSPs. In $\PCSP(\A,\B)$, each constraint comes in two
forms, a strict one in $\A$ and a weak one in $\B$. The goal is to distinguish between (i) the
case in which (the strong form of)  the constraints can be simultaneously satisfied
in $\A$ and (ii) the case in which (even the weak form of) the constraints cannot be
simultaneously satisfied in $\B$. The promise is that it is never the case that the PCSP
is not satisfiable in the strict sense but is satisfiable in the weak sense.
If the strict and weak forms coincide in every constraint (i.e., if $\A=\B$) we get the (non-promise)
CSPs. However, PCSPs include many fundamental problems that are inexpressible
as CSPs. 

The simplest example of strict vs. weak constraints is when the weak constraints
are supersets of the strict constraints on the same domain (the first two
examples below) or on a larger domain (the third example below); the notion of
homomorphism from $\A$ to $\B$ formalises this for any PCSP.

First, can we distinguish a $g$-satisfiable $k$-SAT instance (in the
sense that there is an assignment that satisfies at least $g$ literals in each
clause) from an instance that is not even $1$-satisfiable? This problem was studied in~\cite{AGH17}, where it was shown to be solvable in polynomial time if $\frac{g}{k}\geq\frac{1}{2}$ and
NP-complete otherwise. Recently, this result has been generalised to arbitrary
finite domains~\cite{BWZ21}.

Second, can we distinguish a $3$-SAT
formula that admits an assignment satisfying exactly $1$ literal in each
clause (i.e., a satisfiable instance of $\inn{1}{3}$-SAT) from one that does not admit an assignment satisfying $1$ or $2$ literals in each
clause (i.e., a non-satisfiable instance of \textbf{N}ot-\textbf{A}ll-\textbf{E}qual-$3$-SAT)? Remarkably, while both $\inn{1}{3}$
and $\NAE$ are NP-hard, this promise version 
is solvable in polynomial time~\cite{BG21,BG19}.

Third, can we distinguish a $k$-colourable graph from a graph that is not even $\ell$-colourable, where
$k\leq\ell$? This is the \emph{approximate graph colouring}
problem, which is believed to be NP-hard for any fixed $3\leq
k\leq\ell$, but has been elusive since the 1970s~\cite{GJ76}. In particular, the larger the gap is between $k$ and $\ell$ the easier the problem could in principle
be and, thus, the more challenging it is to prove NP-hardness. The current
state of the art is NP-hardness for $k=3$ and $\ell=5$~\cite{BBKO21}, while already the
case of $k=3$ and $\ell=6$ is open. For any
$k\geq 4$ and $\ell=\ell(k)=\binom{k}{\floor k/2}-1$, NP-hardness has been
established in~\cite{WZ20}.

\medskip

While a systematic study of PCSPs was initiated only recently~\cite{AGH17,BG21},
concrete PCSPs have been considered for a while, e.g. approximate
graph~\cite{GJ76,Wigderson83:jacm,Blum94:jacm-new,Khanna00:combinatorica,Khot01,Guruswami04:sidma}
and hypergraph colouring~\cite{DRS05}.
A highlight
result is the dichotomy of Boolean symmetric
PCSPs~\cite{Ficak19:icalp} (in which all constraint relations are symmetric), following an earlier classification of Boolean
symmetric PCSPs with disequalities~\cite{BG21}. Very recent works have
investigated certain Boolean non-symmetric PCSPs~\cite{bz22:ic} and certain
non-Boolean symmetric PCSPs~\cite{Barto21:stacs}. 
Other recent results include, e.g.,~\cite{ABP20,GS20:icalp,BGS21}. 

Most of the recent progress, including results on the approximate graph
colouring problem~\cite{BBKO21,WZ20} and on the approximate graph
homomorphism problem~\cite{KO19,WZ20}, rely on the algebraic approach to
PCSPs~\cite{BBKO21}. In particular, the breakthrough results
in~\cite{BBKO21}, building on~\cite{BOP18}, established that the
complexity of PCSPs is captured by 
the polymorphism minions and certain types of symmetries
these minions satisfy -- these are non-nested identities on polymorphisms, such
as the majority example but not the semilattice example. Crucially, minions are less structured than
clones: A minion (of functions) is a set of operations closed under permuting
coordinates, identifying coordinates, and introducing dummy coordinates, but not under
composition.\footnote{In this work, we shall use the more abstract notion of
minion introduced in~\cite{bgwz20}, cf.~Definition~\ref{defn_minion}.} Thus,
unlike in our earlier CSP example (corresponding to Horn-SAT), a binary minimum polymorphism of a PCSP
cannot in general be used to generate a $4$-ary minimum polymorphism of the same PCSP.

Despite the momentous results in~\cite{BBKO21}, there is a long way to go to classify all
PCSPs, and it is not even clear whether a dichotomy for all PCSPs should be
expected. When Feder and Vardi conjectured a CSP
dichotomy~\cite{Feder98:monotone}, the Boolean
case~\cite{Schaefer78:stoc} and the graph case~\cite{HellN90} had been fully 
classified. We
seem quite far from these two cases being classified for PCSPs.
Thus, further progress is needed on both the hardness and tractability part.
This paper focusses on the latter.

\paragraph{Finite tractability}
Although PCSPs are (much) more general than CSPs, some PCSPs can be reduced to
tractable CSPs. This idea was introduced in~\cite{BG19} under the name of
homomorphic sandwiching (cf.~Section~\ref{sec:prelims} for a precise definition); PCSPs that are reducible to tractable (finite-domain) CSPs are called
\emph{finitely tractable}. Finite tractability is not sufficient to explain
tractability of all tractable PCSPs. In particular, Barto et al.~\cite{BBKO21} showed that
the above-mentioned example $\inn{1}{3}$ vs. $\NAE$ is not
finitely tractable, despite being a tractable PCSP~\cite{BG21}. We remark that
it is not inconceivable (and in fact was conjectured in~\cite{BG19}) that every
tractable (finite-domain) PCSP could be reducible to a tractable CSP possibly
over an infinite domain; this is the case for the $\inn{1}{3}$ vs. $\NAE$ problem~\cite{BG19}. However, while certain infinite-domain CSPs
are amenable to algebraic methods, the complexity of infinite-domain CSPs is far
from understood, cf.~\cite{BMM18,Bodirsky21,Barto20:sicomp} for recent
work.

Since finite tractability does not capture all tractable PCSPs, there is need
for other algorithmic tools. One possibility is to attempt to extend algorithmic techniques developed for CSPs.

There are two main algorithmic approaches to CSPs. On the one hand, there are
local consistency methods~\cite{Feder98:monotone}, which have been 
studied in
theoretical computer science but also in artificial intelligence, logic, and database theory. The power of local
consistency for CSPs has been characterised
in~\cite{Bulatov09:width,Barto14:jacm}, and it is known that the third level of
consistency solves all so-called bounded-width CSPs~\cite{Barto14:jloc}. On the
other hand, there are CSPs solvable by algorithms based on
generalisations of Gaussian elimination, most notably CSPs with a Mal'tsev
polymorphism~\cite{Bulatov06:sicomp}. 
This method has been pushed to its limit, in a way,
in~\cite{Idziak10:siam,Berman10:few}.
While the NP-hardness part of the CSP dichotomy has been known since~\cite{Bulatov05:classifying}, the challenge in proving the algorithmic part
is the complicated interaction of these two very different
algorithmic approaches.
Although this interaction does not occur in Boolean CSPs, it occurs already in CSPs
on three-element domains~\cite{Bulatov06:jacm}.

The characterisation of the power of the first level of the consistency methods,
$1$-consistency (also known as arc-consistency~\cite{Mackowrth77:aij}), has been
lifted from CSPs~\cite{Feder98:monotone} to PCSPs in~\cite{BBKO21}. Rather than
establishing $1$-consistency combinatorially, one can employ convex relaxations.

\paragraph{Relaxations}
A canonical analogue of $1$-consistency is the \emph{basic linear programming
relaxation} ($\BLP$)~\cite{Kun12:itcs}, which in fact is stronger than
$1$-consistency~\cite{Kun16}. The characterisation of the power of $\BLP$ has been
lifted from CSPs~\cite{Kun12:itcs} to PCSPs in~\cite{BBKO21}, both in terms of a
minion and a property of polymorphisms. The power of $\BLP$ is captured by a
minion consisting of rational stochastic vectors\footnote{A vector is \emph{stochastic} if its entries are nonnegative and sum up to one.} or, equivalently, by the
presence of symmetric polymorphisms of all arities; these are polymorphisms
invariant under any permutation of the coordinates. For example, we have seen
that Horn-SAT, a classic CSP, has a binary symmetric polymorphism, namely $\min$. We have also
seen that $\min$ can generate a $4$-ary operation $\min_4$, which is symmetric.
Similarly, $\min$ can generate (via composition) symmetric operations of all
arities, and thus Horn-SAT is solved by $\BLP$. 

A different relaxation of PCSPs is the \emph{basic affine integer programming
relaxation} ($\AIP$)~\cite{BG19}. The power of $\AIP$ has been characterised, both in
terms of a minion and a property of polymorphisms, in~\cite{BBKO21}. The minion
capturing $\AIP$ consists of integer affine vectors.\footnote{An integer vector
is \emph{affine} if its entries sum up to one.} Concerning polymorphisms, $\AIP$
is captured by polymorphisms of all odd arities that are invariant under
permutations that only permute odd and even coordinates separately, and
additionally satisfy that adjacent coordinates cancel each other out. The
$\inn{1}{3}$ vs. $\NAE$ problem is solved by $\AIP$ (cf.~Example~\ref{ex:aip}).

Brakensiek et al.~\cite{bgwz20} proposed a combination of the two above-mentioned relaxations, called $\BLP+\AIP$. Their algorithm has many interesting features. Firstly, it solves PCSPs that admit only infinitely many
symmetric polymorphisms (i.e., not all arities are required as in the case of
$\BLP$). Secondly, it solves all tractable Boolean CSPs, thus demonstrating how
research on PCSPs can shed new light on (non-promise) CSPs. 
In fact, \cite{bgwz20} established the power of
$\BLP+\AIP$ in terms of a minion and (a property of) polymorphisms. The minion
capturing $\BLP+\AIP$ is essentially a product of the $\BLP$ and $\AIP$
minions~\cite{bgwz20}. Concerning polymorphisms, $\BLP+\AIP$ is captured by
polymorphisms of all odd arities that are invariant under permutations that only
permute odd and even coordinates. 

It may be that $\BLP+\AIP$ is the only algorithm needed to solve all tractable \emph{Boolean} PCSPs. 
However, as already observed in~\cite{bgwz20}, $\BLP+\AIP$ does not solve some
rather simple, tractable, \emph{non-Boolean} PCSPs. 
Motivated by this, we investigate algorithms that are stronger than $\BLP+\AIP$.
We note that all PCSPs hitherto known to be tractable are solved by $\BLP+\AIP$ or by finite tractability (i.e., by a reduction to a tractable finite-domain CSP). In this work, we provide an example of a PCSP that is tractable (through our algorithm) but is not solved by either of those two algorithmic techniques.

\paragraph{Contributions} 

Building on the work of Brakensiek et al.~\cite{bgwz20}, we study stronger
relaxations for PCSPs and give three main contributions.

\smallskip\noindent \textbf{(1) CLAP} \quad
Our first contribution is the introduction of $\CLAP$ to the study of PCSPs. 
Our goal was to design an algorithm that, unlike
$\BLP+\AIP$, solves all CSPs of bounded width.
While all bounded-width CSPs can be solved by
$3$-consistency~\cite{Barto14:jloc}, and thus also by the third level of the
Sherali-Adams hierarchy for $\BLP$ (e.g., by~\cite{tz17:sicomp}), Kozik showed
that already 
(a special case of) the singleton arc-consistency ($\SAC$) algorithm,
introduced in~\cite{DB97}
(cf.~\cite{BD08,Chen13:jlc}), 
solves all bounded-width
CSPs~\cite{Kozik21:sicomp}.
Thus, we study the LP relaxation that we call the \emph{singleton} $\BLP$ ($\SBLP$),
which is at least as strong as $\SAC$. A special case of $\SBLP$ (without this name)
implicitly appeared in the literature, e.g. in~\cite{AGH17,BG21} for Boolean
PCSPs.
The idea behind $\SBLP$ is essentially to run $\SAC$ but replace the arc-consistency
check by the $\BLP$; i.e., the algorithm repeatedly takes a variable-value pair
$(x,a)$ and tests the feasibility of the $\BLP$ with the requirement that $x$
should be assigned the value $a$. If this LP is infeasible then $a$ is removed
from the domain of $x$. This is repeated until convergence. If any variable ends
up with an empty domain then $\SBLP$ rejects, otherwise it accepts. Overall, the
number of $\BLP$ calls occurring for an instance of $\PCSP(\A,\B)$ with
variable-set $X$ is at most polynomial in the size of $X$. As mentioned above,
this simple algorithm solves all bounded-width CSPs~\cite{Kozik21:sicomp}.

We adopt a modification of $\SBLP$ that turns out to be more naturally captured
by a minion-oriented analysis: the \emph{constraint} $\BLP$ ($\CBLP$). This
(possibly) stronger algorithm is a generalisation of $\SBLP$ in which
we do not consider only variable-value pairs $(x,a)$, but rather the
constraint-assignment pairs $(\bx,\ba)$ for every constraint in the instance. As
in $\SBLP$, if fixing a (local) assignment to a constraint yields an infeasible
$\BLP$ then the assignment is removed from the constraint relation. Upon
convergence, which takes at most polynomially many $\BLP$ calls, if any constraint
ends up with an empty relation then $\CBLP$ rejects, otherwise it accepts. 

Our algorithm $\CLAP$ first runs $\CBLP$ and then, upon termination, refines the solutions of $\CBLP$ by running (essentially) $\AIP$. If one believes the suggestion in~\cite{bgwz20} that constantly many
rounds of the Sherali-Adams hierarchy for $\BLP+\AIP$ could solve all tractable
(non-promise) CSPs, then it is not outrageous to believe that the same could be
true for $\CLAP$, and $\CLAP$ might be easier to analyse than such an algorithm.

\smallskip\noindent\textbf{(2) Characterisation}\quad
Our second contribution is a minion characterisation of the power of $\CLAP$,
stated as Theorem~\ref{thm:main1}. The objects in the minion are essentially
matrices with a particular structure, which we call skeletal (cf.~Definition~\ref{def:skeletal}). These matrices capture the $\CBLP$ part of
$\CLAP$ and together with certain integer affine vectors form the minion (cf.~Definition~\ref{def:minion}). Another, more conceptual contribution is the introduction of a minion of matrices to the study of PCSPs.

\smallskip\noindent\textbf{(3) $\textbf{H}$-symmetric polymorphisms}\quad
The minion characterisation is crucial to our third contribution: the identification of a sufficient condition for
$\CLAP$ to work in terms of the symmetries of the polymorphisms. This
is stated as Theorem~\ref{thm:Hsymm}, using the notion of $H$-symmetry. This condition can be more easily checked for concrete templates, thus allowing us to design a separating example that is
not finitely tractable and is
 not solved
by $\BLP+\AIP$ (nor by local consistency methods, see~\cite{Atserias22:soda}), but is solved by $\CLAP$. It follows that our new algorithm is strictly more powerful than $\BLP+\AIP$ (and separated by an interesting PCSP that is not reducible to a tractable finite-domain CSP via ``gadget reductions'', which capture the algebraic approach to PCSPs~\cite{BBKO21}).
 
For a matrix $H$, a polymorphism $f$ is $H$-symmetric if $f$ is
invariant under permutations of the coordinates but only on a specific set of inputs determined by $H$ (cf.~Definition~\ref{def_H_tieless_symmetric}). For instance, if $H$ is a row vector 
then we obtain the requirement that $f$ be symmetric on all inputs.
If $H$ is the identity matrix then we require that $f$ be symmetric only on
inputs in which different entries occur with different multiplicities. In
general, the intuition is that we capture \enquote{symmetry with 
exceptions that depend on multiplicities}. We refer the reader to the discussion in
Section~\ref{sec:clap} for details. 

After necessary background material in Section~\ref{sec:prelims}, our algorithm $\CLAP$ and the
main results are presented in Section~\ref{sec:clap}; the proofs appear in Sections~\ref{sec_power_CLAP_minions} and~\ref{sec_H_symmetric}.

\section{Preliminaries}
\label{sec:prelims}
We
let $\N=\{1,2,\ldots\}$ and $\N_0=\N\cup\{0\}$. The cardinality of $\N$ shall be denoted by $\aleph_0$.
For $k\in\N$, $[k]$ denotes the set $\{1,\ldots,k\}$. For a set
$A$, $\power(A)$ denotes the set of all subsets of $A$.
We denote by $\red$ many-one polynomial-time reductions.
We shall use standard notation for vectors and matrices. Vectors will be treated
as column vectors and whenever convenient identified with the corresponding
(row) tuples. Both tuples and vectors will be typed in bold font. We denote by
$\be_i$ the $i$-th standard unit vector of the appropriate size (which will
be clear from the context); i.e., $\be_i$ is equal to $1$ in the $i$-th
coordinate and $0$ elsewhere. We denote by $\bzero_p$ and by $\bone_p$ the
all-zero and all-one vector, respectively, of size $p$; if the size is clear, we occasionally drop the subscript. 
The \emph{support}
of a 
vector $\bv=(v_i)$ of size $p$ is the set $\supp(\bv)=\{i\in [p]:v_i\neq 0\}$. 
$I_p$ denotes the identity matrix of order $p$, while $O$ denotes an all-zero matrix of suitable size.

\paragraph{Promise CSPs}
A \emph{signature} $\sigma$ is a finite set of relation symbols $R$, each with
its arity $\ar(R)\in\N$. A \emph{relational structure} over a signature
$\sigma$, or a $\sigma$-structure, is a finite universe $A$, called
the \emph{domain} of $\A$, and a relation $R^\A\subseteq A^{\ar(R)}$ for
each symbol $R\in\sigma$. 
For two $\sigma$-structures $\A$ and $\B$, a mapping 
$f:A\to B$ is called a \emph{homomorphism} from $\A$ to $\B$, denoted by
$f:\A\to\B$, if $f$ preserves all relations; that is, for every $R\in\sigma$ and
every tuple $\textbf{a}\in R^\A$, we have $f(\textbf{a})\in R^\B$, where $f$ is applied
coordinatewise. The existence of a homomorphism from $\A$ to $\B$ is denoted by
$\A\to\B$. A PCSP \emph{template} is a pair $(\A,\B)$ of relational structures
over the same signature such that $\A\to\B$.
Without loss of generality, we will often assume that $A$, the domain of $\A$, is $[n]$.

\begin{defn}
  Let $(\A,\B)$ be a PCSP template. 
  The \emph{decision version} of $\PCSP(\A,\B)$ is the following problem: Given
  as input a relational structure $\X$ over the same signature as $\A$ and $\B$,
  output \textsc{Yes} if $\X\to\A$ and $\textsc{No}$ if $\X\not\to\B$.
  The \emph{search version} of $\PCSP(\A,\B)$ is the following problem: Given as
  input a relational structure $\X$ over the same signature as $\A$ and $\B$ and such
  that $\X\to\A$, find a homomorphism from $\X$ to $\B$.
\end{defn}

For a relational structure $\A$, the \emph{constraint satisfaction problem}
(CSP) with template $\A$~\cite{Feder98:monotone}, denoted by $\CSP(\A)$, is
$\PCSP(\A,\A)$.

\begin{example}
  For $k\geq 2$, let $\K_k$ be the structure with domain $[k]$
  and a binary relation $\{(i,j)\in [k]^2\mid i\neq j\}$. Then, $\CSP(\K_k)$ is
  the standard graph $k$-colouring problem. For $k\leq \ell$,
  $\PCSP(\K_k,\K_\ell)$ is the \emph{approximate graph colouring
  problem}~\cite{GJ76}. In the decision version, the task is to decide whether a graph
  is $k$-colourable or not even $\ell$-colourable. In the search version, given
  a $k$-colourable graph $G$, the task is to find an $\ell$-colouring of $G$. It
  is widely believed that for any fixed $3\leq k\leq \ell$,
  $\PCSP(\K_k,\K_\ell)$ is NP-hard; i.e., constantly many colours do not help.
  The current most general NP-hardness result is known for $k=3$ and $\ell=5$ by
  Bul\'in, Krokhin, and Opr\v{s}al~\cite{BBKO21} and for $k\geq 4$ and
  $\ell=\ell(k)=\binom{k}{\floor k/2}-1$ by
  Wrochna and \v{Z}ivn\'y~\cite{WZ20}.
\end{example}

We call a PCSP template $(\A,\B)$ \emph{tractable} if any instance of $\PCSP(\A,\B)$ can be
solved in polynomial time in the size of the input structure $\X$.
It is easy to show that the decision version reduces to the search
version~\cite{BBKO21} (but the converse is not known in general); for CSPs, the
two versions are equivalent~\cite{Cohen04,Bulatov05:classifying}.
Our results are for the decision version.

\begin{example}\label{ex:1-in-3}
  Let $\inn{1}{3}$ be the Boolean structure with domain $\{0,1\}$ and a single ternary
  relation $\{(0,0,1),(0,1,0),(1,0,0)\}$. Let $\NAE$ be the structure with
  domain $\{0,1\}$ and a single ternary relation
  $\{0,1\}^3\setminus\{(0,0,0),(1,1,1)\}$. Then, $\CSP(\inn{1}{3})$ is the
  (positive) $1$-in-$3$-SAT problem and $\CSP(\NAE)$ is the (positive)
  Not-All-Equal-$3$-SAT problem. Since both of these problems are
  NP-hard~\cite{Schaefer78:stoc}, the PCSP templates $(\inn{1}{3},\inn{1}{3})$
  and $(\NAE,\NAE)$ are both intractable. However, the PCSP template
  $(\inn{1}{3},\NAE)$ is tractable, as shown by Brakensiek and Guruswami~\cite{BG21}.
\end{example}

\begin{defn}
  Let $(\A,\B)$ be a PCSP template with signature $\sigma$. An operation $f:A^L\to
  B$, where $L\in\N$, is a \emph{polymorphism} of arity $L$ of $(\A,\B)$ if for every
  $R\in\sigma$ of arity $k=\ar(R)$ and for any possible $L\times k$ matrix
  whose rows are tuples in $R^\A$, the application of $f$ on the columns of the
  matrix gives a tuple in $R^\B$. We denote by $\Pol(\A,\B)$ the set of all
  polymorphisms of $(\A,\B)$.
\end{defn}

\begin{example}\label{ex:aip}
  The unary operation $\neg:\{0,1\}\to\{0,1\}$ defined by $\neg(a)=1-a$ is a
  polymorphism of $(\NAE,\NAE)$ but not a polymorphism of
  $(\inn{1}{3},\inn{1}{3})$. For any odd $L$, the $L$-ary operation
  $f:\{0,1\}^L\to\{0,1\}$ defined by
  $f(a_1,\ldots,a_L)=1$ if $a_1-a_2+a_3-\cdots+a_{L}>0$  and
  $f(a_1,\ldots,a_L)=0$ otherwise is a polymorphism of $(\inn{1}{3},\NAE)$.
\end{example}

\paragraph{Minions}
Polymorphisms of CSPs form clones; i.e., $\Pol(\A,\A)$ contains all
projections (also known as dictators) and is closed under
composition~\cite{BKW17}. Polymorphisms of the (more general) PCSPs form
\emph{minions}; i.e, they are closed under taking minors.\footnote{We remark that
clones are also closed under taking minors.} Formally, given an $L$-ary function $f:A^{L}\to B$, its \emph{minor} relative to a map $\pi:[L]\to[L']$ is the $L'$-ary function $f_{/\pi}:A^{L'}\to B$ defined by 
\begin{align}
\label{eq_minor_functions}
f_{/\pi}(a_1,\ldots,a_{L'})=f(a_{\pi(1)},\ldots,a_{\pi(L)}).
\end{align} 
Equivalently, a minor of $f$ is a function
obtained from $f$ by identifying variables, permuting variables, and introducing
dummy variables. Rather than focussing on minions of functions, we consider here
abstract minions, as described and used in~\cite{bgwz20}.
\begin{defn}
\label{defn_minion}
  A \emph{minion} $\mathscr{M}$ consists in the disjoint union of sets $\mathscr{M}^{(L)}$ for $L\in \N$ equipped with operations $(\cdot)_{/\pi}:\mathscr{M}^{(L)}\rightarrow\mathscr{M}^{(L')}$ for all functions $\pi:[L]\rightarrow [L']$, which satisfy
\begin{itemize}
\item $(M_{/\pi})_{/\tilde{\pi}}=M_{/\tilde{\pi}\circ \pi}$ for $\pi:[L]\rightarrow [L']$, $\tilde{\pi}:[L']\rightarrow [L'']$ and
\item $M_{/\operatorname{id}}=M$
\end{itemize}
for all $M\in\mathscr{M}^{(L)}$. 
\end{defn}
\begin{defn}
For two minions $\mathscr{M}$ and $\mathscr{N}$, a \emph{minion homomorphism} $\xi:\mathscr{M}\rightarrow\mathscr{N}$ is a map that preserves arities and minors: Given $M\in\mathscr{M}^{(L)}$ and $\pi:[L]\rightarrow[L']$, $\xi(M)\in \mathscr{N}^{(L)}$ and $\xi(M_{/\pi})=\xi(M)_{/\pi}$.
\end{defn}

For any PCSP template $(\A,\B)$, the set $\Pol(\A,\B)$ of its polymorphisms equipped with the operations described by~\eqref{eq_minor_functions}  is a
minion~\cite{BBKO21}.
One of the results in~\cite{BBKO21} established that minion homomorphisms
give rise to polynomial-time reductions: If there is a minion homomorphism from
$\Pol(\A,\B)$ to $\Pol(\A',\B')$, then $\PCSP(\A',\B')\red\PCSP(\A,\B)$. Minions
are also useful for characterising the power of algorithms, as we will discuss
later. 

\begin{rem}
Although we will not use this categorical view, we remark that a minion is nothing but a functor from the category of nonempty finite sets to the category of nonempty sets, and a minion homomorphism is a natural transformation.

\end{rem}

\paragraph{Existing algorithms}
One way to establish tractability of PCSPs is to reduce to CSPs.
Let $(\A,\B)$ be a PCSP template. A structure $\C$ is called a (homomorphic)
\emph{sandwich} if $\A\to\C\to\B$. It is known that, in this case,
$\PCSP(\A,\B)\red \CSP(\C)$.\footnote{This is a special case of homomorphic
relaxation~\cite{BBKO21}, which we do not need here.} Thus, if $\C$ is a
tractable CSP template then $(\A,\B)$ is a tractable PCSP template. If $\C$
has a finite domain,
we say that $(\A,\B)$ is \emph{finitely tractable}.

\begin{example}\label{ex:finite}
  The PCSP template $(\inn{1}{3},\NAE)$ from Example~\ref{ex:1-in-3} is
  tractable~\cite{BG21} but not finitely tractable unless P=NP, as shown in~\cite{BBKO21}.
\end{example}
Another way to establish tractability for PCSPs is to leverage convex
relaxations. In Section~\ref{sec:intro}, we mentioned three studied relaxations:
$\BLP$~\cite{Kun12:itcs}, $\AIP$~\cite{BG21}, and $\BLP+\AIP$~\cite{bgwz20}. Their powers have been characterised
in~\cite{BBKO21,bgwz20} in
terms of certain minions and polymorphism identities. The details of these relaxations and the
characterisations are provided in Appendix~\ref{app:relaxations}.

All PCSPs hitherto known to be tractable are solved by finite tractability (i.e., by a
reduction to a tractable finite-domain CSP) or by $\BLP+\AIP$.
 The next example
identifies a simple PCSP template not captured by either of these two methods.

\begin{example}
  \label{ex:new1}
  Consider the relational structures $\A=(A;R_1^\A,R_2^\A)$ and
  $\B=(B;R_1^\B,R_2^\B)$ on the domain $A=B=\{0,\ldots,6\}$ with the following
  relations: $R_1^\A=\{(0,0,1),(0,1,0),(1,0,0)\}$ is $\inn{1}{3}$ on $\{0,1\}$,
  $R_1^\B=\{0,1\}^3\setminus\{(0,0,0),(1,1,1)\}$ is $\NAE$ on $\{0,1\}$, and
  $R_2^\A=R_2^\B=\{(2,3),(3,2),(4,5),(5,6),(6,4)\}$.
  The identity mapping is a homomorphism from $\A$ to $\B$, so $(\A,\B)$ is a
  PCSP template. Since the directed graph corresponding to $R_2^\A=R_2^\B$ is a
  disjoint union of a directed $2$-cycle and a directed $3$-cycle,~\cite[Example
  6.1]{bgwz20} shows that the $\BLP+\AIP$ algorithm does not solve $\PCSP(\A,\B)$. 
  We claim that the template ($\A,\B$) is not finitely tractable. For
  contradiction, assume that there is a finite relational structure
  $\C=(C;R_1^\C,R_2^\C)$ such that $\A\rightarrow\C\rightarrow\B$ and $\CSP(\C)$
  is tractable.  We will argue that this would imply finite tractability of 
  $(\inn{1}{3},\NAE)$, which contradicts the result in~\cite{BBKO21}
  (unless P=NP); cf.~Example~\ref{ex:finite}.
  Indeed, the existence of such $\C$ gives the following chain of homomorphisms:
\begin{align}
\label{chain_morphisms_1645_0806}
  \inn{1}{3}=(\{0,1\};R_1^\A)\rightarrow (A;R_1^\A)\rightarrow (C;R_1^\C)\rightarrow(B;R_1^\B)\rightarrow (\{0,1\};R_1^\B)=\NAE
\end{align}
where the first map is the inclusion of $\{0,1\}$ in $A$, the second and the
  third are the maps witnessing $\A\rightarrow\C\rightarrow\B$, and the fourth
  is any map $g:B\rightarrow\{0,1\}$ such that $g(0)=0$ and $g(1)=1$. Let
  $\tilde{\C}=(C;R_1^\C)$. Observe that $\tilde{\C}$ is tractable since the
  inclusion map gives a minion homomorphism $\Pol(\C,\C)\rightarrow
  \Pol(\tilde{\C},\tilde{\C})$, and thus
  $\CSP(\tilde{\C})=\PCSP(\tilde{\C},\tilde{\C})\red\PCSP(\C,\C)=\CSP(\C)$
  by~\cite[Theorem~3.1]{BBKO21}. This proves the claim,
  as~(\ref{chain_morphisms_1645_0806}) established $\inn{1}{3}\rightarrow
  \tilde{\C}\rightarrow\NAE$.
  
Notice that the assignment $f\mapsto g\circ f|_{\{0,1\}^L}$ (where $f$ is a polymorphism of $(\A,\B)$ of arity $L$ and $g$ is the map considered above) yields a minion homomorphism from $\Pol(\A,\B)$ to $\Pol(\inn{1}{3},\NAE)$.
As established in~\cite[Corollary~4.2]{Atserias22:soda}, the template $(\inn{1}{3},\NAE)$ does not have bounded width -- i.e., is not solved by local consistency methods. It follows from~\cite[Lemma~7.5]{BBKO21} that $(\A,\B)$ does not have bounded width either.

\end{example}

The template from Example~\ref{ex:new1} will be proved tractable later (in
Example~\ref{ex:new2}) using our
new algorithm, which we will present next.

\section{The CLAP algorithm}
\label{sec:clap}

Let $(\A,\B)$ be a PCSP template with signature $\sigma$ and let $\X$ be an
instance of $\PCSP(\A,\B)$. Without loss of generality, we assume that $\sigma$
contains a unary symbol $\un$ such that $\un^\X=X$, $\un^\A=A$, and $\un^\B=B$.
If this is not the case, the signature and the instance can be extended without
changing the set of solutions.
Our algorithm -- the combined \textbf{C}B\textbf{L}P+\textbf{A}I\textbf{P} algorithm ($\CLAP$), presented in
Algorithm~\ref{alg:clap} and discussed below -- builds on $\BLP$~\cite{BBKO21} and
$\BLP+\AIP$~\cite{bgwz20}. 

$\CLAP$ works in two stages. In the first stage, it runs $\CBLP$; i.e., a modified version of the singleton arc-consistency algorithm (cf.~\cite{DB97}) where $(i)$ the ``arc-consistency'' part is replaced by $\BLP$, and $(ii)$ the ``singleton'' part is boosted by requiring that every constraint-assignment pair (as opposed to every variable-value pair)
is fixed at each iteration. In the second stage, it refines $\CBLP$ by doing an additional sanity check: At least one of the solutions computed by $\CBLP$ should be compatible with a solution of $\AIP$. As in~\cite{bgwz20}, this second stage requires that the $\AIP$ solution should only use those variables from the $\CBLP$ solution that have nonzero weight. There are two equivalent ways to enforce this requirement: Either by storing the nonzero variables at each iteration of $\CBLP$ in the first stage of the algorithm, or by simply running $\BLP+\AIP$ as a black box in the second stage of the algorithm. We adopt the latter option to achieve a simpler presentation. Concretely, the first stage of $\CLAP$ is performed by initialising the sets $S_{\bx,R}$ of constraint-assignment pairs to the entire relation $R^\A$, and then progressively shrinking these sets by cycling over all constraint-assignment pairs and removing a pair whenever it yields an infeasible $\BLP$. The second stage, that occurs if all sets $S_{\bx,R}$ are nonempty, is performed by cycling over each feasible constraint-assignment pair and running $\BLP+\AIP$ on it. As soon as one constraint-assignment pair is accepted by $\BLP+\AIP$, the algorithm terminates and outputs $\textsc{Yes}$. If no constraint-assignment pair is accepted, the algorithm outputs $\textsc{No}$.

As in Appendix~\ref{app:relaxations}, where $\BLP$, $\AIP$, and $\BLP+\AIP$ are presented in full detail for completeness, by $\lambda_{\bx,R}(\ba)$ we denote the variable of $\BLP(\X,\A)$ associated with $\bx\in R^\X$ and $\ba\in R^\A$, where $R\in\sigma$.
The algorithm has polynomial time complexity in the size of the input instance: Letting $g=\sum_{R\in\sigma}|R^\X||R^\A|$, $\mathcal{O}(g^2)$ $\BLP$ calls and $\mathcal{O}(g)$ $\BLP+\AIP$ calls occur.
We say that $\CLAP$ \emph{accepts} an instance $\X$ of $\PCSP(\A,\B)$ if
Algorithm~\ref{alg:clap} returns \textsc{Yes}.
We say that $\CLAP$ \emph{solves} $\PCSP(\A,\B)$ if, for every instance $\X$ of
$\PCSP(\A,\B)$, we have (i) if $\X\to \A$ then $\CLAP$ accepts $\X$, and (ii)
if $\X$ is accepted by $\CLAP$ then $\X\to\B$.

\begin{algorithm}[tbh]
	\SetAlgoLined
  \KwIn{\quad\ \ \ an instance $\X$ of $\PCSP(\A,\B)$ of signature $\sigma$}
  \KwOut{\quad \textsc{yes} if $\X\to\A$ and \textsc{no} if $\X\not\to\B$}
  \medskip
  \For{$R\in \sigma$, $\bx\in R^\X$}{
  	set $S_{\bx,R}:=R^\A$\;
  }
  \Repeat{no set $S_{\bx,R}$ is changed}{
    \For{$R\in \sigma, \bx\in R^\X$, $\ba\in S_{\bx,R}$}{
      \If{$\BLP(\X,\A)$ with $\lambda_{\bx,R}(\ba)=1$ and $\lambda_{\bx',R'}(\ba')=0$ for every $R'\in\sigma$, $\bx'\in R'^\X$, and $\ba'\not\in S_{\bx',R'}$ is not feasible}{remove $\ba$ from $S_{\bx,R}$\;}
    }
  }
\eIf{some $S_{\bx,R}$ is empty}{return \textsc{No};}{
 \For{$R\in \sigma, \bx\in R^\X$, $\ba\in S_{\bx,R}$}
  {\If{$\BLP+\AIP(\X,\A)$ with $\lambda_{\bx,R}(\ba)=1$ and $\lambda_{\bx',R'}(\ba')=0$ for every $R'\in\sigma$, $\bx'\in R'^\X$, and $\ba'\not\in S_{\bx',R'}$ is feasible}{return \textsc{Yes};}}
return \textsc{No};
}
\caption{The $\CLAP$ algorithm}
	\label{alg:clap}
\end{algorithm}

\paragraph{Characterisation} 
Our first main result -- Theorem~\ref{thm:main1} -- is a minion-theoretic characterisation of the power of the
$\CLAP$ algorithm. In particular, we will introduce in
Definition~\ref{def:minion} a minion $\ourMinion$ such that, for any $\PCSP$
template $(\A,\B)$, the $\CLAP$ algorithm solves $\PCSP(\A,\B)$ if and only if
there is a minion homomorphism from $\ourMinion$ to $\Pol(\A,\B)$. The two
directions will be proved in Theorems~\ref{minion_homo_implies_CBLP_works}
and~\ref{CBLP_works_implies_minion_homo}, respectively, in
Section~\ref{sec_power_CLAP_minions}. 
Combining Theorem~\ref{minion_homo_implies_CBLP_works} with our second main result -- Theorem~\ref{thm:Hsymm}, proved in
Section~\ref{sec_H_symmetric} -- will then yield a sufficient condition for $\CLAP$ to solve a given $\PCSP$ template, in terms of a weak notion of symmetry for the polymorphisms of the template.

The $L$-ary objects of the minion $\ourMinion$ are pairs $(M,\bmu)$, where $M$
is a matrix with $L$ rows and infinitely many columns
encoding the $\BLP$ computations of $\CLAP$ and $\bmu$
is an $L$-ary vector of integers encoding the $\AIP$ computation of $\CLAP$. The
matrices $M$ in $\ourMinion$ have a special structure, which we call
\enquote{skeletal}. 

\begin{defn}\label{def:skeletal}
  Let $M$ be a $p\times \aleph_0$ matrix with $p\in\N$.
  We say that $M$ is \emph{skeletal} if, for each $j\in [p]$, either
$\be_j^TM=\bzero^T_{\aleph_0}$ or $M\be_i=\be_j$ for some $i\in\N$. 
\end{defn}
\noindent In other words, either the $j$-th
row of $M$ is the zero vector or some column of $M$ is the $j$-th
standard unit vector.
Equivalently, $M$ is skeletal if there exist permutation matrices $P\in\R^{p,p}$ and $Q\in\R^{\aleph_0,\aleph_0}$ such that $\displaystyle
PMQ=\begin{bmatrix}
I_k & \tilde{M}\\
O & O
\end{bmatrix}$
for some $k\leq p$ and some $\tilde M\in \R^{k,\aleph_0}$. The
name indicates that the \enquote{body} of a skeletal matrix (the nonzero rows)
is completely supported by a \enquote{skeleton} (the identity block).

We are now ready to define the minion $\ourMinion$. The $L$-ary objects of
$\ourMinion$ are pairs $(M,\bmu)$, where $M$ is a skeletal matrix of size $L\times\aleph_0$ and $\bmu$ is
an affine vector (i.e., an integer vector whose entries sum up to one) of size $L$. We require that every column of $M$ should be
stochastic and $M$ should have only finitely many
different columns; the latter is formalised in $(c_5)$ in
Definition~\ref{def:minion}, which says that starting from some point all the
columns are equal. We also require a particular relationship between $M$
and $\bmu$ formalised in $(c_4)$. 

\begin{defn}\label{def:minion}
For $L\in\N$, let $\ourMinion^{(L)}$ be the set of pairs $(M,\bmu)$ such that $M\in\Q^{L, \aleph_0}$, $\bmu\in \Z^L$, and the following requirements are met:
\begin{align*}
\begin{array}{ll}
  (c_1)\quad M \mbox{ is entrywise nonnegative};\hspace*{2cm}
  &
(c_4)\quad \supp(\bmu)\subseteq\supp(M\be_1);\\
(c_2)\quad  \bone_L^TM=\bone_{\aleph_0}^T;
  &
  (c_5)\quad \exists t\in \N \mbox{ such that } M\be_i=M\be_t\hspace{.4cm}\forall i\geq t;\\
(c_3)\quad \bone_L^T\bmu=1; 
& (c_6)\quad M \mbox{ is skeletal. }
\end{array}
\end{align*}
We define $\ourMinion$ as the disjoint union of $L$-ary parts,
  $\ourMinion\coloneqq\bigcup_{L\geq 1}\ourMinion^{(L)}$.
\end{defn}
\noindent We defined $\ourMinion$ as a set. For $\ourMinion$ to be a minion, we need to define the
minor operation on $\ourMinion$ and verify that it preserves the structure of
$\ourMinion$. This is easy and done in Section~\ref{sec:minion}.

Our first result is the following characterisation of the power of $\CLAP$.

\begin{thm}\label{thm:main1}
Let $(\A,\B)$ be a PCSP template. Then, $\CLAP$ solves $\PCSP(\A,\B)$ if and only if there is a minion homomorphism from $\ourMinion$ to $\Pol(\A,\B)$.
\end{thm}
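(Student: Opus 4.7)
The plan is to split Theorem~\ref{thm:main1} into its two implications, to be proved as Theorems~\ref{minion_homo_implies_CBLP_works} and~\ref{CBLP_works_implies_minion_homo}, and handle them by the standard minion-theoretic template of \cite{BBKO21,bgwz20}, adapted to the two-stage structure of $\CLAP$ and the matrix/vector objects of $\ourMinion$. Before either direction, I would first verify that $\ourMinion$ is in fact a minion: define a minor operation on pairs $(M,\bmu)$ by acting on rows, and check that each of the conditions $(c_1)$--$(c_6)$ is preserved. All of these are essentially routine except for $(c_6)$, which needs a short argument since row-identification can change which columns are standard unit vectors.

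For soundness, I would assume a minion homomorphism $\xi:\ourMinion\to\Pol(\A,\B)$ exists and $\CLAP$ accepts an instance $\X$. Let $(\bx_0,R_0,\ba_0)$ be the pair surviving the second stage with a feasible $\BLP+\AIP$ witness, and write $\lambda^\star,\bmu^\star$ for its $\BLP$ and $\AIP$ components. For each $v\in X$ I would assemble $(M_v,\bmu_v)\in\ourMinion^{(|A|)}$ as follows: the first column of $M_v$ is $\lambda^\star_v\in\Q^A$, the further columns enumerate the rational $\lambda_v$-vectors arising from the surviving singleton fixings of the first stage (padded periodically to length $\aleph_0$), and $\bmu_v$ is the integer affine vector that $\bmu^\star$ assigns to $v$. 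Conditions $(c_1)$--$(c_3)$ come from $\BLP$/$\AIP$ stochasticity and affineness; $(c_5)$ from the fact that only finitely many distinct $\BLP$ solutions appear across polynomially many calls; $(c_4)$ from the support-compatibility built into the $\BLP+\AIP$ second stage. The key condition $(c_6)$ encodes the effect of $\CBLP$: using the unary symbol $\un$, whenever a value $a\in A$ survives in the pruned domain of $v$ some fixing forces $\lambda_v=\be_a$ in the corresponding $\BLP$ run, yielding a column $\be_a$ in $M_v$; conversely, values pruned early give a zero row. Applying $\xi$ returns polymorphisms $f_v$ of arity $|A|$; fixing an enumeration $a_1,\dots,a_{|A|}$ of $A$ gives the candidate homomorphism $h(v):=f_v(a_1,\dots,a_{|A|})$, and minor-preservation of $\xi$, applied to the natural minors relating the $(M_v,\bmu_v)$ across a single constraint $\bx\in R^\X$, delivers $h(\bx)\in R^\B$.

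For completeness, I would assume $\CLAP$ solves $\PCSP(\A,\B)$ and construct a canonical free instance $\X^\star$ of $\PCSP(\A,\B)$ whose variables are indexed by elements of $\ourMinion^{(|A|)}$ and whose constraint tuples are precisely those obtained as coordinate-wise minor images of a common $\ourMinion$-object along a tuple in $R^\A$. Acceptance of $\X^\star$ by $\CLAP$ is witnessed internally: the stochastic columns of a skeletal matrix supply the $\BLP$ solutions required by any chosen fixing in the first stage, while the affine coordinate $\bmu$ of the same object provides the $\AIP$ companion in the second stage, with $(c_4)$ guaranteeing the support-compatibility demanded by $\BLP+\AIP$. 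By hypothesis $\X^\star\to\B$, and reading off the target of each variable defines $\xi(M,\bmu)\in\Pol(\A,\B)^{(|A|)}$; minor-preservation follows directly from the way constraints of $\X^\star$ were chosen, and extension to other arities is routine once the $|A|$-ary part is fixed.

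The main technical obstacle, in both directions, is the skeletality condition $(c_6)$ and its interaction with $(c_4)$. On the soundness side one must trace carefully through the first-stage loop to confirm that every retained value $a$ in the pruned domain of $v$ actually appears with weight $1$ in some undiscarded $\BLP$ run -- not merely with positive weight -- so that the extracted $M_v$ is genuinely skeletal; this is precisely the reason for adopting the constraint variant $\CBLP$ rather than the plain singleton $\SBLP$. On the completeness side, the free instance must be engineered so that $\CLAP$ accepts it while $(c_4)$ and $(c_6)$ remain jointly verifiable: these are exactly the conditions that couple the $\CBLP$ and $\AIP$ stages and distinguish $\ourMinion$ from a naive product of the $\BLP$ and $\AIP$ minions of \cite{bgwz20}, so the coupling must be kept alive inside the free construction rather than decoupled into two independent witnesses.
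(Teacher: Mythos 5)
Your overall route coincides with the paper's (verify $\ourMinion$ is a minion, translate the algorithm's runs into elements of $\ourMinion$, and use the free-structure correspondence of Lemma~\ref{lem:free}), but the completeness direction has a genuine gap: the free instance you call $\X^\star$, with variables indexed by $\ourMinion^{(|A|)}$, is infinite whenever $|A|\geq 2$, whereas $\CLAP$ and the notion of ``$\CLAP$ solves $\PCSP(\A,\B)$'' are defined only for finite input structures, so the step ``by hypothesis $\X^\star\to\B$'' is not licensed. The paper avoids this by running the whole argument on the finite sub-minions $\ourMinion_D$ of Section~\ref{sec_compactness_argument}, whose free structures are finite, obtaining a minion homomorphism $\xi_D:\ourMinion_D\to\Pol(\A,\B)$ for every $D$, and then gluing these by K\H{o}nig's Lemma (Proposition~\ref{prop_compactness_argument}); your plan needs this or an equivalent compactness device. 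In the same direction, ``acceptance is witnessed internally'' also glosses over two points handled in Proposition~\ref{prop_2208_2504} and Theorem~\ref{CBLP_works_implies_minion_homo}: the second stage runs $\BLP+\AIP$, which works with a relative-interior point of the $\BLP$ polytope rather than with the witness you supply (one needs the support-maximality observation to transfer the $\AIP$ companion), and you must actually nominate the pair that passes the second stage -- the paper takes the unary constraint at $(\be_1\bone_{\aleph_0}^T,\be_1)$ precisely so that $(c_4)$, which ties $\bmu$ to the \emph{first} column, gives the required support containment.

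On the soundness side there are two further gaps. First, ``$\CLAP$ solves'' also requires showing that every instance with $\X\to\A$ is accepted; your sketch omits this half entirely (the paper proves it via the canonical homomorphisms of Remark~\ref{rem_canonical_homs_A_SA_ZA} and the $\CLAP$ condition). Second, and more substantively, your final step ``minor-preservation of $\xi$, applied to the natural minors relating the $(M_v,\bmu_v)$ across a single constraint'' presupposes a common parent $(Q,\bdelta)\in\ourMinion^{(|R^\A|)}$, built from the \emph{joint} $\BLP$/$\AIP$ weights that the runs assign to the constraint $\bx\in R^\X$, of which the $(M_{x_i},\bmu_{x_i})$ are minors; constructing this object and verifying $(c_1)$--$(c_6)$ for it -- in particular its skeletality -- is a genuine step in the paper's proof, and it is exactly here, not for the variable-level matrices $M_v$ as your last paragraph suggests, that the constraint-level fixings of $\CBLP$ are indispensable. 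A run that fixes only a variable-value pair pins down a marginal but not the joint distribution on $R^\A$, so if your columns enumerate only ``singleton fixings'' the matrix $Q$ need not be skeletal; the paper includes as columns the $\BLP$ solutions for \emph{all} surviving constraint-assignment pairs and shows that, for the run fixing $(\bx,R,\ba)$, the joint distribution at $\bx$ is forced to equal $\be_\ba$ because all its marginals are unit vectors.
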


\paragraph{H-symmetry}
Our second main result is a sufficient condition on a PCSP template $(\A,\B)$
to guarantee that $\CLAP$ solves $\PCSP(\A,\B)$. The condition is through
symmetries satisfied by polymorphisms of the template. In particular,
in Theorem~\ref{thm:Hsymm} we will show that if $\Pol(\A,\B)$ contains
infinitely many operations that are \enquote{$H$-symmetric} for a suitable
matrix $H$, then there is a minion homomorphism from $\ourMinion$ to
$\Pol(\A,\B)$, and thus $\CLAP$ solves $\PCSP(\A,\B)$ by Theorem~\ref{minion_homo_implies_CBLP_works}.

In order to define the notion of $H$-symmetry, we need a few auxiliary
definitions. A vector $\bw=(w_i)\in\R^p$ is
\emph{tieless} if, for any two indices $i\neq i'\in [p]$, $w_i\neq 0$
$\Rightarrow$ $w_i\neq w_{i'}$. A \emph{tie matrix} is a matrix having integer
nonnegative entries, each of whose columns is a tieless vector. Given an
$m\times p$ tie matrix $H$, we say that a vector $\bv\in \R^p$ is
\emph{$H$-tieless} if $H\bv$ is tieless.

Let $A$ be a finite set, let $L\in\N$, and take $\ba=(a_1,\dots,a_L)\in A^L$.
We define the (multiplicity) vector $\ba^\#$ as the integer vector of size $|A|$ whose $a$-th entry is
$|\{i\in [L]:a_i=a\}|$ for each $a\in A$.

\begin{defn}
\label{def_H_tieless_symmetric}
Let $A,B$ be finite sets, and consider a function $f:A^L\rightarrow B$ for some $L\in\N$. Given an $m\times |A|$ tie matrix $H$, we say that $f$ is \emph{$H$-symmetric} if
\begin{align*}
f_{/\pi}(\ba)=f(\ba) && \forall \pi: [L]\rightarrow [L] \mbox{ permutation}, \hspace{.3cm}\forall \ba\in A^L
\mbox{ such that }\ba^\# \mbox{ is $H$-tieless}.
\end{align*}
\end{defn}

\noindent
Our second result is the following sufficient condition for tractability of PCSPs.

\begin{thm}
\label{thm:Hsymm}
Let $(\A,\B)$ be a PCSP template and suppose $\Pol(\A,\B)$ contains
  $H$-symmetric operations of arbitrarily large arity for some $m\times |A|$ tie
  matrix $H$, $m\in\N$. Then there exists a minion homomorphism from
  $\ourMinion$ to $\Pol(\A,\B)$. 
\end{thm}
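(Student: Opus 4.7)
The plan is to construct a minion homomorphism $\xi\colon\ourMinion\to\Pol(\A,\B)$ directly, assigning to each $(M,\bmu)\in\ourMinion^{(L)}$ an $L$-ary polymorphism obtained by feeding a carefully crafted multiset of input variables into a large-arity $H$-symmetric polymorphism of $(\A,\B)$. The key data are $M$, whose finitely many distinct columns (by $(c_5)$) and skeletal structure (by $(c_6)$) allow one to encode a multiset of coordinate-repetitions, and $\bmu$, whose integer-affine nature (by $(c_3)$) provides an $\AIP$-style correction on the first column, legally supported thanks to $(c_4)$.

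Concretely, I would clear denominators by a common $Q$ and fix a parameter $T$ larger than the stabilisation index $t$, defining multiplicities
\[
m_i \;:=\; Q\sum_{j=1}^T (M\be_j)_i+\mu_i\;\in\;\N_{0},
\]
with $\sum_i m_i=QT+1=:K$ by $(c_2)$ and $(c_3)$. Choosing an $H$-symmetric polymorphism $g_K\in\Pol(\A,\B)$ of arity $K$ (available for an unbounded set of $K$'s by hypothesis), I set $\xi(M,\bmu)(a_1,\ldots,a_L):=g_K(y_1,\ldots,y_K)$, where $(y_1,\ldots,y_K)\in A^K$ lists each $a_i$ exactly $m_i$ times in a fixed canonical order. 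The polymorphism property follows from that of $g_K$ by the usual row-expansion trick: an $L\times k$ $R^\A$-matrix expands to a $K\times k$ $R^\A$-matrix, to which $g_K$ applies coordinatewise. Under a minor $\pi\colon[L]\to[L']$ the denominators of $M_{/\pi}$ divide those of $M$, so the same $Q,T$ serve both sides, and the multiplicities transform covariantly ($m'_{i'}=\sum_{i\in\pi^{-1}(i')}m_i$). Hence both $\xi(M,\bmu)_{/\pi}(b_1,\ldots,b_{L'})$ and $\xi(M_{/\pi},\bmu_{/\pi})(b_1,\ldots,b_{L'})$ are $g_K$-images of $K$-tuples agreeing as multisets over $A$, and $H$-symmetry of $g_K$ equates the two whenever the common multiplicity vector $\ba^\#\in\N_0^{|A|}$ of these tuples is $H$-tieless.

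The crux is thus establishing $H$-tielessness of $\ba^\#$ for every input; this is the main obstacle. Since each $m_i$ is affine in $T$ (once $T\geq t$), every entry of $H\ba^\#$ is an affine function of $T$ with integer coefficients determined by $M,\bmu,H$ and the input, so any non-identically-holding coincidence between two such entries excludes only finitely many values of $T$. A union bound over the finitely many inputs in $A^L$ and over index pairs $j,j'$ leaves an infinite safe set of $T$'s, which I would intersect with the unbounded set of arities for which an $H$-symmetric polymorphism exists. The hard part will be ruling out the degenerate case where two affine functions $(H\ba^\#)_j(T),(H\ba^\#)_{j'}(T)$ coincide identically in $T$: here the skeletal condition $(c_6)$ becomes essential, since the standard-unit columns of $M$ force each active row to contribute a distinct coefficient profile to the $m_i$'s, and this distinctness, combined with the tielessness of the columns of $H$, prevents collapse after the $H$-transformation. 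I expect the combinatorial/linear-algebraic argument interlocking $(c_6)$, tielessness of $H$'s columns, and the integer shift $\bmu$ to constitute the main technical difficulty of the proof.
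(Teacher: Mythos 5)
Your overall shape agrees with the paper's proof (inflate each coordinate $i$ of an input to a block whose size is dictated by $M$ and $\bmu$, feed the inflated tuple to one large $H$-symmetric polymorphism, and use $H$-symmetry to equate the two sides of the minor condition), but the step you yourself defer -- ruling out ties that hold \emph{identically} in $T$ -- is a genuine gap, and your hoped-for resolution is false. With your uniform column weighting (partial sums $\sum_{j\le T}M\be_j$), skeletality does \emph{not} force distinct rows to get distinct weights: take $L=4$, columns of $M$ equal to $(\tfrac12,\tfrac12,0,0)^T,\be_1,\be_2,\be_3,\be_4$ and then constantly $(\tfrac14,\tfrac14,\tfrac14,\tfrac14)^T$, and $\bmu=(2,-1,0,0)^T$; all of $(c_1)$--$(c_6)$ hold, yet your multiplicities are $(T+3,\,T,\,T-1,\,T-1)$ for \emph{every} $T$, so with $H=I_{|A|}$ any input placing two distinct values on coordinates $3$ and $4$ yields a tied vector $H\ba^{\#}$ for all $T$. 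Hence no choice of $T$ licenses the symmetry of $g_K$ on all relevant inputs, and minor-compatibility genuinely breaks; your union bound over finitely many bad $T$'s never gets off the ground. The missing idea is precisely the paper's Tiebreak Lemma~\ref{lemma_tie_terminator}: a \emph{non-uniform} stochastic weighting $\bv$ of the columns with $\be_1^T\bv>0$, obtained by a minimality-plus-perturbation argument that exploits skeletality, making $M\bv$ $H$-tieless simultaneously for a finite family of skeletal matrices. Since such a $\bv$ (and a single polymorphism $f$ of a single arity) can only serve finitely many matrices at once, the paper works inside the finite subminions $\ourMinion_D$ and recovers the full homomorphism by compactness (Proposition~\ref{prop_compactness_argument}); your per-element choice of $(Q,T,K)$ has no analogous uniformity, so the required identity $\xi((M,\bmu)_{/\pi})=\xi(M,\bmu)_{/\pi}$ compares functions built from \emph{different} polymorphisms $g_K,g_{K'}$ with different parameters, which your argument does not reconcile.

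Two further steps fail as stated. First, your bookkeeping needs an $H$-symmetric polymorphism of arity exactly $K=QT+1$; the hypothesis only provides arbitrarily large arities, and an unbounded set of arities need not meet the arithmetic progression $\{QT+1: T\ge t\}$, so the ``intersection'' you invoke can be empty. The paper instead takes any sufficiently large available arity $c$, writes $c=N\alpha+\beta$, and uses block sizes $\be_i^T(\alpha N M\bv+\beta\bmu)$, letting $\beta$ copies of the affine vector $\bmu$ absorb the remainder. Second, nonnegativity of your $m_i=Q\sum_{j\le T}(M\be_j)_i+\mu_i$ can fail for every $T$: e.g.\ $L=2$, columns $(\tfrac12,\tfrac12)^T,\be_1,\be_2,\be_2,\dots$ and $\bmu=(-4,5)^T$ satisfy Definition~\ref{def:minion} but give $m_1=3-4<0$ for all $T$. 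The paper avoids both problems with the scaling $N=2\lceil\sigma_1^H+1\rceil D^2N'$: condition $(c_4)$ guarantees $(M\bv)_i>0$ whenever $\mu_i\neq 0$, and the factor $\alpha N$ then dominates $\beta|\mu_i|\le\beta D$, which simultaneously yields nonnegativity and shows (via the singular-value estimate) that the $\beta\bmu$ perturbation cannot reintroduce ties into $\alpha N H P_{\pi_{\ba}\circ\pi}M\bv$. None of this quantitative machinery, nor the reduction to $[n]$ via Lemma~\ref{lem_1347_1805}, appears in your proposal, so the proof is incomplete at exactly its load-bearing points.
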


Recall from Definition~\ref{def:skeletal} the notion of a skeletal matrix. As it will be clear from the rest of the paper, the \enquote{skeleton} represents the 
link between $\CLAP$ and the above-defined notion of $H$-symmetry. Indeed, on the one hand the
presence of the identity block in a skeletal matrix captures the fact that each $\BLP$ solution computed
by $\CLAP$ gives probability $1$ to some constraint-assignment pair and probability $0$ to all other constraint-assignment pairs for the same constraint (cf.~line~6 of~Algorithm~\ref{alg:clap}). On the other hand, Lemma~\ref{lemma_tie_terminator} (stated
and proved in Section~\ref{sec_H_symmetric}) shows that finitely many skeletal matrices can always be
simultaneously reduced to $H$-tieless probability distributions -- which are exactly the distributions on which $H$-symmetric functions are symmetric (cf.~Definition~\ref{def_H_tieless_symmetric}).

We now mention some consequences of Theorem~\ref{thm:Hsymm}. 
First, observe that a vector of size $1$ is always tieless. Hence, if we take any $1\times |A|$ integer nonnegative
matrix as $H$, we have that $H$ is a tie matrix and $\ba^\#$ is $H$-tieless for each tuple $\ba$ in
the domain of $f$; therefore, for such an $H$, $f$ being $H$-symmetric reduces to
$f$ being symmetric. On the other hand, having Definition~\ref{def_H_tieless_symmetric} in mind, adding rows to $H$ increases the chance
that $H\ba^\#$ has some ties, in which case $f$ is released from the
requirement of being symmetric on $\ba$. In this sense, $H$ encodes the
\enquote{exceptions to symmetry} that $f$ is allowed to have: The more rows $H$ has, the stronger Theorem~\ref{thm:Hsymm} becomes. If, for
instance, $H$ is the identity matrix of order $|A|$, then an $H$-symmetric
operation needs to be symmetric only on those tuples where each entry occurs
with a different multiplicity. A very special example of such an
$I_{|A|}$-symmetric operation is a function $f$ that returns (the homomorphic image of) the
\emph{most-frequent} entry in the input tuple whenever it is unique, and, in
any other case, $f$ is, say, (the homomorphic image of) a projection. Other, more creative choices for $H$ allow capturing operations having more complex exceptions to symmetry, as shown in Example~\ref{ex:new2}.
 
Theorems~\ref{thm:main1} and~\ref{thm:Hsymm} together
establish that the $\CLAP$ algorithm solves any PCSP template admitting
arbitrarily large polymorphisms having some exceptions to
symmetry that can be encoded via a tie matrix.  

The importance of the next example lies in the fact that it provably separates $\CLAP$ from finite tractability and $\BLP+\AIP$; i.e., there are PCSP templates solvable by $\CLAP$ that are not finitely tractable and not solvable by $\BLP+\AIP$.

\begin{example}
\label{ex:new2}
Recall the PCSP template $(\A,\B)$ from Example~\ref{ex:new1}, where it was
  shown that $\PCSP(\A,\B)$ is not finitely tractable and not solved by the $\BLP+\AIP$ algorithm from~\cite{bgwz20}.
  We will show that $\PCSP(\A,\B)$ is solved by $\CLAP$.
  
  Take $L\in\N$ and consider the function $f:A^L\rightarrow B$ defined as follows: For $\ba=(a_1,\dots,a_L)\in A^L$,
\begin{itemize}
\item
if $\ba\in \{0,1\}^L$, look at $\ba^\#_1$, i.e., the multiplicity of $1\in A$ in the tuple $\ba$;
\begin{itemize}
\item[$\ast$]
if $\ba^\#_1<\frac{L}{3}$, set $f(\ba)=0$;
\item[$\ast$]
if $\ba^\#_1>\frac{L}{3}$, set $f(\ba)=1$;
\item[$\ast$]
if $\ba^\#_1=\frac{L}{3}$, set $f(\ba)=a_1$;
\end{itemize}
\item
if $\ba\in \{2,3,4,5,6\}^L$,
\begin{itemize}
\item[$\ast$]
if there is a unique element $a\in A$ having maximum multiplicity in $\ba$, set $f(\ba)=a$;
\item[$\ast$]
if there is more than one element of $A$ having maximum multiplicity in $\ba$, set $f(\ba)=a_1$;
\end{itemize}
\item
  otherwise, set $f(\ba)=0$.\footnote{Assigning any value in
    $\{0,\ldots,6\}$ to $f(\ba)$ would work here.}
\end{itemize}
We claim that $f\in\Pol(\A,\B)$. To see that $f$ preserves $R_1$, consider a
  tuple $\brho=(\br_1,\dots,\br_L)$ of elements of $R_1^\A$, where $\br_i=(a_i,b_i,c_i)$ for $i\in [L]$. We shall let $\ba=(a_1,\dots,a_L)$, $\bb=(b_1,\dots,b_L)$, and $\bc=(c_1,\dots,c_L)$. Notice that 
\begin{align}
\label{eqn_sum_al_be_ga_1006_1819}
\ba^\#_1+\bb^\#_1+\bc^\#_1=L. 
\end{align}
If $f(\ba)=f(\bb)=f(\bc)=0$, then $\ba^\#_1\leq\frac{L}{3}$, $\bb^\#_1\leq\frac{L}{3}$, and $\bc^\#_1\leq\frac{L}{3}$; by~\eqref{eqn_sum_al_be_ga_1006_1819}, this implies that $\ba^\#_1=\bb^\#_1=\bc^\#_1=\frac{L}{3}$. Hence, $(0,0,0)=(f(\ba),f(\bb),f(\bc))=(a_1,b_1,c_1)=\br_1\in R_1^\A$, a contradiction. Similarly, $f(\ba)=f(\bb)=f(\bc)=1$ would yield $\ba^\#_1\geq\frac{L}{3}$, $\bb^\#_1\geq\frac{L}{3}$, and $\bc^\#_1\geq\frac{L}{3}$; again by~\eqref{eqn_sum_al_be_ga_1006_1819}, this implies that $\ba^\#_1=\bb^\#_1=\bc^\#_1=\frac{L}{3}$, hence $(1,1,1)=(f(\ba),f(\bb),f(\bc))=(a_1,b_1,c_1)=\br_1\in R_1^\A$, also a contradiction. We conclude that $f(\brho)=(f(\ba),f(\bb),f(\bc))\in R_1^\B$, thus showing that $f$ preserves $R_1$. 

As for $R_2$, let $\brho=(\br_1,\dots,\br_L)$ be a tuple of elements of $R_2^\A$, where $\br_i=(a_i,b_i)$ for $i\in [L]$, and let $\ba=(a_1,\dots,a_L)$ and $\bb=(b_1,\dots,b_L)$. The directed graph having vertex set $\{2,3,4,5,6\}$ and edge set $R_2^\A=R_2^\B$ consists of the disjoint union of a directed $2$-cycle and a directed $3$-cycle and, hence, all of its vertices have in-degree and out-degree one. As a consequence, the multiplicity of a directed edge $(a,b)$ in the tuple $\brho$ equals both the multiplicity of $a$ in $\ba$ and the multiplicity of $b$ in $\bb$. Therefore, if the tuple $\brho$ has a unique element $\br=(a,b)$ with maximum multiplicity, then $f(\brho)=(f(\ba),f(\bb))=(a,b)=\br\in R_2^\B$. Otherwise, $f(\brho)=(a_1,b_1)=\br_1\in R_2^\B$. This shows that $f$ preserves $R_2$, too, and is thus a polymorphism of $(\A,\B)$.

Consider the matrix $H=\diag(1,2,1,1,1,1,1)$, and observe that $H$ is a tie matrix. We claim that $f$ is $H$-symmetric. Let $\pi:[L]\rightarrow [L]$ be a permutation, and take a tuple $\ba=(a_1,\dots,a_L)\in A^L$ such that $\ba^\#$ is $H$-tieless; i.e., the vector $H\ba^\#=(\ba^\#_0,2\ba^\#_1,\ba^\#_2,\ba^\#_3,\ba^\#_4,\ba^\#_5,\ba^\#_6)$ is tieless. Write $\tilde{\ba}=(a_{\pi(1)},\dots,a_{\pi(L)})$, and observe that $\tilde{\ba}^\#=\ba^\#$.
\begin{itemize}
\item
If $\ba\in\{0,1\}^L$, we get $\ba^\#_0\neq2\ba^\#_1$; since $\ba^\#_0+\ba^\#_1=L$, this gives $2\ba^\#_1\neq L-\ba^\#_1$ so that $\ba^\#_1\neq \frac{L}{3}$. As a consequence, $f(\ba)=f(\tilde{\ba})$.
\item
If $\ba\in\{2,3,4,5,6\}^L$, the condition above implies that the tuple $(\ba^\#_2,\ba^\#_3,\ba^\#_4,\ba^\#_5,\ba^\#_6)$ has a unique maximum element and, hence, there is a unique element $a$ of $A$ having maximum multiplicity in $\ba$ (and in $\tilde{\ba}$). Therefore, $f(\ba)=a=f(\tilde{\ba})$.
\item
If $\ba\not\in \{0,1\}^L\cup \{2,3,4,5,6\}^L$, then $f(\ba)=0=f(\tilde{\ba})$.
\end{itemize}
We conclude that, in each case, $f(\ba)=f(\tilde{\ba})=f_{/\pi}(\ba)$, which means that $f$ is $H$-symmetric.
By Theorems~\ref{thm:main1} and~\ref{thm:Hsymm}, $\CLAP$ solves $\PCSP(\A,\B)$.
\end{example}

\begin{rem}
Consider the minion $\Mblpaip$ from~\cite{bgwz20} (cf.~Appendix~\ref{subsec:blp-aip}). A direct consequence of
Example~\ref{ex:new2}, Theorem~\ref{thm:main1}, and~\cite[Lemma 5.4]{bgwz20} is that there is no
minion homomorphism from $\Mblpaip$ to $\ourMinion$. On the other hand, the function
\begin{align*}
\vartheta:\ourMinion&\rightarrow\Mblpaip\\
(M,\bmu)&\mapsto (M\be_1,\bmu)
\end{align*}
is readily seen to be a minion homomorphism. It follows that $\CLAP$ solves any
  PCSP template solved by $\BLP+\AIP$ (as is also clear from the description of
  the two algorithms).
\end{rem}

\begin{rem}
Similar to~\cite{bgwz20}, the assumption in Theorem~\ref{thm:Hsymm} can be weakened as follows: Instead of requiring $H$-symmetric polymorphisms of arbitrarily large arity, it turns out to be enough requiring $H$-block-symmetric polymorphisms of arbitrarily large width, where the definition of an $H$-block-symmetric operation mirrors that of a block-symmetric operation in~\cite{bgwz20}. The proof of this possibly stronger result is very similar to that of Theorem~\ref{thm:Hsymm}. For completeness, we include it in Appendix~\ref{sec:block_sym_pol}.
We point out that we do not know whether the condition in Theorem~\ref{thm:Hsymm} (or the possibly weaker condition based on $H$-block-symmetric polymorphisms) is necessary for tractability via CLAP, but we suspect it is not.
\end{rem}

\begin{rem}
A possibly stronger version of the $\CLAP$ algorithm consists in running $\BLP+\AIP$ (instead of just $\BLP$) at each iteration in the \textbf{for} loop in lines $5$--$9$ of Algorithm~\ref{alg:clap}, and then removing the additional \textbf{for} loop in lines $14$--$18$. This algorithm can be called C($\BLP+\AIP$). An analysis entirely analogous to the one presented in this paper shows that the power of C($\BLP+\AIP$) is captured by the minion $\tilde{\ourMinion}$ defined like $\ourMinion$ with the following difference: The $L$-ary elements of $\tilde{\ourMinion}$ are pairs $(M,N)$, where $M$ is as in $\ourMinion$ while $N$ is an integer matrix of the same size as $M$ taking the role of $\bmu$ (in particular, $N$ satisfies the \enquote{refinement condition} $\supp(N\be_i)\subseteq\supp(M\be_i)$ $\forall i\in\N$, analogous to $(c_4)$ in Definition~\ref{def:minion}). A possible direction for future research is to investigate whether the richer structure of $\tilde{\ourMinion}$ can be exploited to obtain a stronger version of Theorem~\ref{thm:Hsymm}. 
\end{rem}

\begin{rem}
For CSPs, the characterisation of bounded width~\cite{Barto14:jacm,Bulatov09:width} and its collapse~\cite{Barto14:jloc} was preceded by a characterisation of width-1 CSPs~\cite{Feder98:monotone,Dalmau99} and the collapse of width~2 to width~1~\cite{Dalmau09}. Thus the difference between width-1 CSPs and bounded-width CSPs is well understood. BLP and SBLP are the (convex relaxation) analogues of width~1 and SAC, respectively, and SAC solves all bounded-width CSPs~\cite{Kozik21:sicomp}. Therefore, a natural question is whether a similar analysis can cast light on the difference in power between BLP on one side, and SBLP (and thus perhaps also of CBLP and CLAP) on the other side. We remark on two obstacles: Firstly, BLP is strictly more powerful than width~1 for CSPs~\cite{Kun16}. Secondly, a good characterisation of the power of SBLP (and stronger algorithms studied in the present paper) would imply that these algorithms solve, in the special case of CSPs, all bounded-width CSPs -- a non-trivial result implied by~\cite{Kozik21:sicomp}.
\end{rem}

\section{The power of the CLAP algorithm}
\label{sec_power_CLAP_minions}

The goal of this section is to prove Theorem~\ref{thm:main1}. In
Section~\ref{sec:minion}, we will verify that $\ourMinion$, which appears in
the statement of Theorem~\ref{thm:main1}, is indeed a minion. In
Sections~\ref{sec_compactness_argument} and~\ref{sec:cnd}, we will establish a
compactness argument and present a condition that captures
$\CLAP$, respectively; both will be needed in the proof of
Theorem~\ref{thm:main1}. The two directions of Theorem~\ref{thm:main1} will be
then proved in Section~\ref{sec:proof}.

\medskip
Minions are not only useful for capturing the complexity of PCSPs but also for
characterising the power of algorithms. This will be done by using the concept
of the free structure generated, for a given minion, by a relational
structure~\cite{bgwz20} (cf.~\cite[Definition~4.1]{BBKO21} for the definition in
the special case of minions of functions).
\begin{defn}\label{def:free}
Let $\mathscr{M}$ be a minion and let $\A$ be a (finite) relational structure with signature $\sigma$. The \emph{free structure} $\mathbb{F}_{\mathscr{M}}(\A)$ is a relational structure with domain $\mathscr{M}^{(|A|)}$ (potentially infinite) and signature $\sigma$. Given a relation $R\in\sigma$ of arity $k$, a tuple $(M_1,\dots,M_k)$ of elements of $\mathscr{M}^{(|A|)}$ belongs to $R^{\mathbb{F}_{\mathscr{M}}(\A)}$ if and only if there is some $Q\in \mathscr{M}^{(|R^\A|)}$ such that $M_i=Q_{/\pi_i}$ for each $i\in[k]$, where $\pi_i:R^\A\to A$ maps $\textbf{a}\in R^\A$ to its $i$-th coordinate $a_i$.
\end{defn}
\noindent The next result will be useful to establish the connection between our
algorithm $\CLAP$, presented in Section~\ref{sec:clap}, and the minion $\ourMinion$.

\begin{lem}\label{lem:free}
Let $\mathscr{M}$ be a minion and let $(\A,\B)$ be a PCSP template. Then there is a minion homomorphism from $\mathscr{M}$ to $\Pol(\A,\B)$ if and only if $\mathbb{F}_\mathscr{M}(\A)\to\B$.
\end{lem}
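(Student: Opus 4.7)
The plan is to pass between the two sides using the obvious ``evaluation'' and ``substitution'' constructions. Throughout, fix an enumeration $A=\{a_1,\ldots,a_n\}$ where $n=|A|$, and for each relation symbol $R$ of arity $k$ identify $R^\A$ with $[|R^\A|]$ so that the projections $\pi_i:R^\A\to A$ of Definition~\ref{def:free} become functions $[|R^\A|]\to[n]$.

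For the forward direction, given a minion homomorphism $\xi:\mathscr{M}\to\Pol(\A,\B)$ I would define $h:\mathscr{M}^{(n)}\to B$ by $h(M)\coloneqq\xi(M)(a_1,\ldots,a_n)$. To verify that $h$ is a homomorphism $\mathbb{F}_\mathscr{M}(\A)\to\B$, I take $(M_1,\ldots,M_k)\in R^{\mathbb{F}_\mathscr{M}(\A)}$, fix a witness $Q\in\mathscr{M}^{(|R^\A|)}$ with $M_i=Q_{/\pi_i}$ for each $i$, and use preservation of minors under $\xi$ to get $\xi(M_i)=\xi(Q)_{/\pi_i}$. Unpacking the minor formula~\eqref{eq_minor_functions} shows that evaluating $\xi(M_i)$ at $(a_1,\ldots,a_n)$ equals the value of $\xi(Q)$ on the tuple $(\pi_i(\mathbf{r}))_{\mathbf{r}\in R^\A}$; arranging the tuples of $R^\A$ as the rows of an $|R^\A|\times k$ matrix, this is exactly the $i$-th column. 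Since $\xi(Q)\in\Pol(\A,\B)$, applying it column-by-column yields a tuple in $R^\B$, which is precisely $(h(M_1),\ldots,h(M_k))$.

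For the backward direction, given a homomorphism $h:\mathbb{F}_\mathscr{M}(\A)\to\B$ I would define, for each $M\in\mathscr{M}^{(L)}$ and each $\mathbf{b}=(b_1,\ldots,b_L)\in A^L$, the value $\xi(M)(\mathbf{b})\coloneqq h(M_{/\tau_\mathbf{b}})$, where $\tau_\mathbf{b}:[L]\to A$ is the map $j\mapsto b_j$. Preservation of arities is built in. Preservation of minors follows from the functoriality axiom of Definition~\ref{defn_minion}: for $\pi:[L]\to[L']$ and $\mathbf{b}\in A^{L'}$, one checks directly that $\tau_\mathbf{b}\circ\pi=\tau_{(b_{\pi(1)},\ldots,b_{\pi(L)})}$, hence $\xi(M_{/\pi})(\mathbf{b})=h(M_{/\tau_\mathbf{b}\circ\pi})=\xi(M)_{/\pi}(\mathbf{b})$. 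To check that each $\xi(M)$ belongs to $\Pol(\A,\B)$, I take tuples $\mathbf{a}_1,\ldots,\mathbf{a}_L\in R^\A$, let $\mathbf{c}_1,\ldots,\mathbf{c}_k$ be the columns of the associated $L\times k$ matrix, and set $Q\coloneqq M_{/\sigma}$, where $\sigma:[L]\to R^\A$ maps $j\mapsto\mathbf{a}_j$. Since $(\pi_i\circ\sigma)(j)=\pi_i(\mathbf{a}_j)=(\mathbf{c}_i)_j=\tau_{\mathbf{c}_i}(j)$, one gets $Q_{/\pi_i}=M_{/\tau_{\mathbf{c}_i}}$, so Definition~\ref{def:free} places $(M_{/\tau_{\mathbf{c}_1}},\ldots,M_{/\tau_{\mathbf{c}_k}})$ in $R^{\mathbb{F}_\mathscr{M}(\A)}$, and applying $h$ delivers the required tuple of $R^\B$.

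The key bookkeeping obstacle is common to both directions: aligning the several minor indices so that the $k$ different projections $(\cdot)_{/\pi_i}$ simultaneously recover the $k$ columns of the matrix of $R^\A$-tuples at play. Once the enumerations of $A$ and $R^\A$ are fixed, this collapses to the single identity $\pi_i\circ\sigma=\tau_{\mathbf{c}_i}$ (and its mirror on the forward side), after which both implications close routinely.
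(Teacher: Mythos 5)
Your proposal is correct and follows essentially the same route as the paper's own proof: the forward direction evaluates $\xi(M)$ at the identity tuple $(a_1,\ldots,a_n)$ and uses minor-preservation with the witness $Q$, while the backward direction defines $\xi(M)(\mathbf{b})=h(M_{/\tau_\mathbf{b}})$ and verifies polymorphism and minor-preservation via exactly the composition identities ($\pi_i\circ\sigma=\tau_{\mathbf{c}_i}$ and $\tau_\mathbf{b}\circ\pi=\tau_{(b_{\pi(1)},\ldots,b_{\pi(L)})}$) that the paper uses, up to renaming of the maps.
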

\noindent The proof of Lemma~\ref{lem:free} is based on that of~\cite[Lemma~4.4]{BBKO21},
which proves one-to-one correspondence but only for minions of functions. For
completeness, we prove Lemma~\ref{lem:free} in Appendix~\ref{sec:free-proof}.

\subsection{$\ourMinion$ is a minion}\label{sec:minion}

The minor operation on $\ourMinion$ is naturally defined via a matrix
multiplication with a matrix that encodes the minor map. For a function
$\pi:[L]\rightarrow [L']$, let ${P_\pi}$ be the $L'\times L$ matrix whose
$(i,j)$-th entry is $1$ if $\pi(j)=i$, and $0$ otherwise. Note that
${P_\pi^T}\bone_{L'}=\bone_{L}$ and, for each $i\in [L']$, ${P_\pi^T}
\be_i=\sum_{j\in\pi^{-1}(i)}\be_j$. 

\begin{defn}
For $(M,\bmu)\in \ourMinion^{(L)}$, we define
$M_{/\pi}={P_\pi} M$ and $\bmu_{/\pi}= {P_\pi}\bmu$,
and we let the minor of $(M,\bmu)$ with respect to $\pi$ be 
$(M,\bmu)_{/\pi}\coloneqq (M_{/\pi},\bmu_{/\pi})$.
\end{defn}
\noindent We remark that this definition is consistent
with the minions $\Qconv$ and $\Zaff$ studied in~\cite{BBKO21}, and the minion $\Mblpaip$ studied in~\cite{bgwz20},
cf.~Appendices~\ref{subsec:blp},~\ref{subsec:aip}, and~\ref{subsec:blp-aip}.

\begin{prop}
$\ourMinion$ is a minion.
\end{prop}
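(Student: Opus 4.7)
The plan is to verify two things: that the minor operation just defined is well-defined (i.e., the pair $(P_\pi M, P_\pi \bmu)$ still satisfies conditions $(c_1)$--$(c_6)$ of Definition~\ref{def:minion} and therefore lies in $\ourMinion^{(L')}$), and that it behaves functorially. Both will rest on the elementary identities $P_{\operatorname{id}}=I$ and $P_{\tilde\pi}P_\pi=P_{\tilde\pi\circ\pi}$, which I would record as a preliminary remark (they are immediate from the definition of $P_\pi$, using that $\pi$ is a function so $\pi^{-1}(i)$ singletons stack correctly). Given these identities, the associativity axiom $(M_{/\pi})_{/\tilde\pi}=M_{/\tilde\pi\circ\pi}$ and the identity axiom $M_{/\operatorname{id}}=M$ follow by one line of matrix algebra, both for the $M$-component and the $\bmu$-component.

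The bulk of the work is checking closure, i.e.\ that if $(M,\bmu)\in\ourMinion^{(L)}$ and $\pi:[L]\to[L']$, then $(P_\pi M, P_\pi\bmu)$ satisfies $(c_1)$--$(c_6)$. Conditions $(c_1)$--$(c_3)$ are straightforward from the facts that $P_\pi$ has $\{0,1\}$ entries and $P_\pi^T\bone_{L'}=\bone_L$. For $(c_5)$, the same cutoff index $t$ works, since $P_\pi M\be_i=P_\pi M\be_t$ whenever $M\be_i=M\be_t$. For $(c_4)$, I would note that $(P_\pi\bmu)_k=\sum_{j\in\pi^{-1}(k)}\bmu_j$ and $(P_\pi M\be_1)_k=\sum_{j\in\pi^{-1}(k)}(M\be_1)_j$; if the former is nonzero then some $\bmu_j$ with $\pi(j)=k$ is nonzero, hence by $(c_4)$ for $(M,\bmu)$ we have $(M\be_1)_j>0$, and since all entries of $M\be_1$ are nonnegative by $(c_1)$, the corresponding sum is strictly positive.

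The step I expect to require a little care, and thus the main obstacle, is $(c_6)$: preservation of skeletality. Fix $k\in[L']$. If $\pi^{-1}(k)=\emptyset$ or every $j\in\pi^{-1}(k)$ indexes a zero row of $M$, then the $k$-th row of $P_\pi M$ is a sum of zero rows and hence vanishes, satisfying the first alternative of Definition~\ref{def:skeletal}. Otherwise there is some $j^\star\in\pi^{-1}(k)$ with $\be_{j^\star}^T M\neq\bzero^T$, and skeletality of $M$ supplies an $i\in\N$ with $M\be_i=\be_{j^\star}$; applying $P_\pi$ to this equality yields $P_\pi M\be_i=P_\pi\be_{j^\star}=\be_{\pi(j^\star)}=\be_k$, which is the second alternative. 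The subtle point to stress is that even when several rows $j\in\pi^{-1}(k)$ are nonzero so that the $k$-th row of $P_\pi M$ is a sum of several nonzero rows, we only need to exhibit one column of $P_\pi M$ equal to $\be_k$, and any single witnessing column of $M$ for any such $j^\star$ suffices.

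Once these six conditions are verified and the functorial identities are in place, the proposition is complete. I would present the argument as a short proof consisting of a paragraph on $(c_1)$--$(c_5)$, a paragraph on $(c_6)$ following the case analysis above, and a closing paragraph checking associativity and the identity axiom via $P_{\tilde\pi}P_\pi=P_{\tilde\pi\circ\pi}$.
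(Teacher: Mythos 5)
Your proposal is correct and follows essentially the same route as the paper's proof: verify $(c_1)$--$(c_3)$ and $(c_5)$ directly, handle $(c_4)$ via the support/nonnegativity argument, prove preservation of skeletality by locating a nonzero row $j^\star\in\pi^{-1}(k)$ and pushing its witnessing column through $P_\pi$, and conclude functoriality from $P_{\tilde\pi}P_\pi=P_{\tilde\pi\circ\pi}$ and $P_{\operatorname{id}}=I$. No gaps; the case analysis you flag for $(c_6)$ is exactly the content of the paper's argument.
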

\begin{proof}
Write $M=[m_{ij}]$ and $\bmu=(\mu_i)$. Observe that $M_{/\pi}\in \Q^{L',\aleph_0}$ and $\bmu_{/\pi}\in \Z^{L'}$. The requirements $(c_1),(c_2),(c_3)$, and $(c_5)$ are trivially satisfied by $(M,\bmu)_{/\pi}$. As for $(c_4)$, suppose that $\be_i^T{P_\pi}M\be_1=0$ but $\be_i^T{P_\pi}\bmu\neq 0$. It follows that $\mu_j\neq 0$ for some $j\in\pi^{-1}(i)$. Hence, $m_{j1}>0$ and, then, 
\begin{align*}
\be_i^T{P_\pi}M\be_1=\sum_{j'\in \pi^{-1}(i)}\be_{j'}^TM\be_1\geq \be_j^TM\be_1>0,
\\
\end{align*}
which is a contradiction. We now show that $M_{/\pi}$ is skeletal. Choose $j\in [L']$, and suppose that $\be_j^TM_{/\pi}\neq \bzero_{\aleph_0}^T$. We obtain
\begin{align*}
\bzero_{\aleph_0}\neq M^T{P_\pi^T} \be_j
=
\sum_{{\ell}\in \pi^{-1}(j)}M^T\be_{\ell}
\end{align*}
and, in particular, $\exists {\ell}\in \pi^{-1}(j)$ such that $\be_{\ell}^TM\neq \bzero_{\aleph_0}^T$. Since $M$ is skeletal, this implies that $M\be_i=\be_{\ell}$ for some $i\in\N$. This yields
\begin{align*}
M_{/\pi}\be_i={P_\pi}M\be_i={P_\pi}\be_{\ell}=\be_{\pi({\ell})}=\be_j
\end{align*}
as required. Hence, $(c_6)$ is satisfied, too, and $(M,\bmu)_{/\pi}\in \ourMinion^{(L')}$. 

Finally, considering $\tilde\pi:[L']\to[L'']$ and the identity map $\operatorname{id}:[L]\to[L]$,
one readily checks that ${P_{\tilde{\pi}\circ\pi}}={P_{\tilde\pi}} {P_{\pi}}$ and ${P_{\operatorname{id}}}=I_L$. Hence, the minor operations defined above satisfy the requirements of Definition~\ref{defn_minion}.
\end{proof}

\subsection{A compactness argument for $\ourMinion$}
\label{sec_compactness_argument}
The set $\ourMinion^{(L)}$ of the $L$-ary objects in $\ourMinion$ is infinite
unless $L=1$. As a consequence, given a relational structure $\A$ whose domain
has size at least $2$, the free structure $\mathbb{F}_\ourMinion(\A)$ has an
infinite domain. We now describe a standard compactness argument analogous
to~\cite[Remark 7.13]{BBKO21} that will circumvent this inconvenience.

For $D,L\in\N$, consider the set 
\begin{align*}
\ourMinion_D^{(L)}=\{(M,\bmu)\in\ourMinion^{(L)}: DM \mbox{ is entrywise integer, } M\be_i=M\be_D \hspace{.2cm}\forall i\geq D, \mbox{ and }\bone_L^T|\bmu|\leq D\},
\end{align*}
where $|\bmu|$ denotes the vector whose entries are the absolute values of the entries of $\bmu$.
Since $\ourMinion_D^{(L)}$ is unambiguously determined by $L\times (D+1)$ integer numbers belonging to the set $\{-D,\dots,D\}$, it is finite. Observe that the set $\ourMinion_D=\bigcup_{L\in\N}\ourMinion_D^{(L)}$ is closed under taking minors. Indeed, given $(M,\bmu)\in \ourMinion_D^{(L)}$ and $\pi:[L]\rightarrow [L']$, $D{P_\pi}M={P_\pi}DM$ is entrywise integer, ${P_\pi}M\be_i={P_\pi}M\be_D$ $\forall i\geq D$, and $\bone_{L'}^T|{P_\pi}\bmu|\leq \bone_{L'}^T{P_\pi}|\bmu|=\bone_L^T|\bmu|\leq D$. Hence, $\ourMinion_D$ is a sub-minion of $\ourMinion$. Observe also that $\ourMinion=\bigcup_{D\in \N}\ourMinion_D$. To see this, take $(M,\bmu)\in \ourMinion^{(L)}$ and suppose that $M\be_i=M\be_t$ $\forall i\geq t$. Let $\tilde{D}$ be a common denominator of the finite set of rational numbers $\{m_{ij}:i\in [L],j\in [t]\}$, so that $\tilde{D}M$ is entrywise integer. Let also $\hat D=\bone_L^T|\bmu|$. Then, $(M,\bmu)\in \ourMinion_{t\tilde{D}\hat{D}}$. 
\begin{prop}
\label{prop_compactness_argument}
Let $\mathscr{M}$ be a minion such that $\mathscr{M}^{(L)}$ is finite for each $L\in\N$, and suppose that there exist minion homomorphisms $\xi_D:\ourMinion_D\rightarrow\mathscr{M}$ for each $D\in \N$. Then there exists a minion homomorphism $\zeta:\ourMinion\rightarrow\mathscr{M}$. 
\end{prop}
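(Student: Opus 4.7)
The plan is a standard compactness / diagonal argument, using the fact that each $\mathscr{M}^{(L)}$ is finite and that $\ourMinion$ is (visibly) countable.

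First I would establish two preliminary facts. (i) $\ourMinion$ is countable: each $L$-ary element $(M,\bmu)$ is determined by finitely many rational numbers and integers, because condition $(c_5)$ forces $M$ to have only finitely many distinct columns. So $\ourMinion=\bigcup_L\ourMinion^{(L)}$ is a countable union of countable sets. (ii) For every $(M,\bmu)\in\ourMinion$, the set $E(M,\bmu)\coloneqq\{D\in\N:(M,\bmu)\in\ourMinion_D\}$ is nonempty (by the argument given just before the statement, taking $D=t\tilde D\hat D$), and moreover if $D_0\in E(M,\bmu)$ then every positive integer multiple of $D_0$ that is at least $D_0$ also lies in $E(M,\bmu)$ (a direct check of the three defining conditions of $\ourMinion_D$). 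Consequently, for any finite subset $S\subseteq\ourMinion$, the intersection $\bigcap_{(M,\bmu)\in S}E(M,\bmu)$ contains arbitrarily large common multiples and is therefore infinite.

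Next I would enumerate $\ourMinion=\{m_1,m_2,\ldots\}$ and build, by a diagonal / pigeonhole argument, a sequence $D_1<D_2<\ldots$ together with nested infinite index sets $\N\supseteq J_1\supseteq J_2\supseteq\ldots$ such that $D_j\in E(m_n)$ for every $j\in J_n$ and $\xi_{D_j}(m_n)$ takes a single value $\zeta(m_n)\in\mathscr{M}^{(\operatorname{ar}(m_n))}$ for all $j\in J_n$. The construction is the usual one: at stage $n$, the preliminary fact (ii) lets me pick an infinite subset of $J_{n-1}$ consisting of indices $j$ with $D_j\in E(m_n)$, and then finiteness of $\mathscr{M}^{(\operatorname{ar}(m_n))}$ lets me refine to an infinite $J_n$ along which $\xi_{D_j}(m_n)$ is constant. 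Passing to the diagonal $(j_n)$ with $j_n\in J_n$ chosen strictly increasing, one has $\xi_{D_{j_k}}(m_n)=\zeta(m_n)$ for every $k\geq n$.

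Finally, I would verify that the map $\zeta:\ourMinion\to\mathscr{M}$ defined this way is a minion homomorphism. Arity preservation is built into the definition. For minor preservation, given $m_n\in\ourMinion^{(L)}$ and $\pi:[L]\to[L']$, write $(m_n)_{/\pi}=m_{n'}$ with $n'\in\N$. For all sufficiently large $k$, $j_k\in J_n\cap J_{n'}$, hence
\[
\zeta(m_{n'})=\xi_{D_{j_k}}(m_{n'})=\xi_{D_{j_k}}\bigl((m_n)_{/\pi}\bigr)=\xi_{D_{j_k}}(m_n)_{/\pi}=\zeta(m_n)_{/\pi},
\]
using that each $\xi_{D}$ is a minion homomorphism. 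This gives $\zeta((m_n)_{/\pi})=\zeta(m_n)_{/\pi}$, as required.

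The only subtle point is fact (ii): the inclusion $\ourMinion_D\subseteq\ourMinion_{D'}$ fails for general $D\leq D'$ (because of the integrality requirement $DM\in\Z^{L\times\aleph_0}$), so one must work with common multiples rather than a tail of integers. Once that is handled, the rest is a routine König-style diagonalisation.
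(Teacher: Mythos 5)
Your proof is correct, but it implements the compactness step differently from the paper. The paper does not enumerate the elements of $\ourMinion$; it instead forms the nested \emph{finite} sets $\ourMinion_{\{D\}}=\bigcup_{L\leq D}\ourMinion_{D!}^{(L)}$ (the factorial playing exactly the role your common multiples play, since $D\mapsto\ourMinion_D$ is not monotone), considers the tree whose level-$D$ vertices are the restrictions of the maps $\xi_{D'}$ to $\ourMinion_{\{D\}}$, observes that finiteness of each $\mathscr{M}^{(L)}$ makes this tree locally finite, and extracts an infinite branch by K\H{o}nig's Lemma; the union of the resulting chain of compatible partial maps is $\zeta$. Your route fixes a countable enumeration of $\ourMinion$ and stabilises the values $\xi_D(m_n)$ along a subsequence of $D$'s by pigeonhole, which is the same compactness principle executed pointwise rather than through partial homomorphisms; both proofs rest on precisely the two facts you isolate (each element of $\ourMinion$ lies in $\ourMinion_D$ for a cofinal set of $D$, and each $\mathscr{M}^{(L)}$ is finite), and your verification of minor preservation correctly uses that each $\ourMinion_D$ is closed under minors, so that $\xi_{D_{j_k}}\bigl((m_n)_{/\pi}\bigr)$ is defined and equals $\xi_{D_{j_k}}(m_n)_{/\pi}$. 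One small point to nail down: as written, the recursive choice of $J_n$ presupposes that infinitely many of the $D_j$ with $j\in J_{n-1}$ lie in $E(m_n)$, which is not automatic for an arbitrary increasing sequence chosen ``together with'' the $J_n$'s; the cleanest fix is to fix $D_j=j!$ (or $D_j=\operatorname{lcm}(1,\dots,j)$) in advance, so that your fact (ii) makes $\{j: D_j\in E(m_n)\}$ cofinite and hence its intersection with the infinite set $J_{n-1}$ infinite --- the same device as the paper's use of $D!$ in defining $\ourMinion_{\{D\}}$.
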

\begin{proof}
For $D\in\N$, let $\ourMinion_{\{D\}}=\bigcup_{L\leq D}\ourMinion_{D!}^{(L)}$. Observe that $\ourMinion_{\{D\}}$ is a finite set and  $\ourMinion_{\{D\}}\subseteq \ourMinion_{\{D+1\}}$. Moreover, $\bigcup_{D\in\N}\ourMinion_{\{D\}}=\bigcup_{D\in\N}\ourMinion_D=\ourMinion$. Indeed, given $D'\in \N$, we have that $\ourMinion_{\{D'\}}\subseteq\ourMinion_{D'!}\subseteq \bigcup_{D\in\N}\ourMinion_D$, and, given $L\in\N$, $\ourMinion_{D'}^{(L)}\subseteq \ourMinion_{(D'L)!}^{(L)}\subseteq \ourMinion_{\{D'L\}}\subseteq \bigcup_{D\in\N}\ourMinion_{\{D\}}$. Consider an infinite rooted tree whose vertices are all the restrictions of the homomorphisms $\xi_D$ to some $\ourMinion_{\{D'\}}$, whose root is the empty mapping, and the parent of a vertex corresponding to a function $\ourMinion_{\{D'+1\}}\rightarrow\mathscr{M}$ is the vertex corresponding to the restriction of the function to $\ourMinion_{\{D'\}}$. This is an infinite connected tree. Moreover, since $\mathscr{M}^{(L)}$ is finite for each $L\in\N$ and since minion homomorphisms preserve the arities, there exist only finitely many distinct restrictions of minion homomorphisms to $\ourMinion_{\{D\}}$; hence, the tree is locally finite. By K\H{o}nig's Lemma, it contains an infinite path, which corresponds to an infinite chain of maps $\zeta_i:\ourMinion_{\{i\}}\rightarrow\mathscr{M}$ such that $\zeta_{i+1}$ extends $\zeta_i$ $\forall i\in\N$. Their union $\zeta:\ourMinion\rightarrow\mathscr{M}$ is then a minion homomorphism.
\end{proof}

\subsection{The CLAP condition}\label{sec:cnd}

Given a finite set $C$, 
consider the set $\mathbb{S}(C)$ of the
rational stochastic vectors of size $|C|$. 
Let ${U}\subseteq C^k$. For $i\in [k]$,
consider the $|C|\times |U|$ matrix ${E}^{({U},i)}$ such that, for $c\in C$ and
$\bc=(c_1,\dots,c_k)\in{U}$, the $(c,\bc)$-th entry of ${E}^{({U},i)}$ is $1$ if
$c_i=c$, and $0$ otherwise. Given $\bxi\in \mathbb{S}({U})$ and $i\in [k]$, we define the $i$-th \emph{marginal} of $\bxi$ as
\begin{align*}
\bxi^{(i)}={{E}^{({U},i)}}\bxi.
\end{align*}
Observe that
\begin{align*}
{\bxi^{(i)}}^T\bone_{|C|}=\bxi^T {{E}^{({U},i)}}^T\bone_{|C|}=\bxi^T\bone_{|{U}|}=1,
\end{align*}
so that $\bxi^{(i)}\in\mathbb{S}(C)$. 
We also define the set $\mathbb{Z}(C)$ of the integer vectors of size $|C|$ whose entries sum up to $1$. Given ${U}\subseteq C^k$, $\bzeta\in \mathbb{Z}({U})$, and $i\in [k]$, we define
\begin{align*}
\bzeta^{(i)}={{E}^{({U},i)}}\bzeta.
\end{align*}
As before, observe that
\begin{align*}
{\bzeta^{(i)}}^T\bone_{|C|}=\bzeta^T {{E}^{({U},i)}}^T\bone_{|C|}=\bzeta^T\bone_{|{U}|}=1,
\end{align*}
so $\bzeta^{(i)}\in\mathbb{Z}(C)$.

Let $\A$ be a relational structure having domain $A$ and signature $\sigma$. We define the relational structures $\mathbb{S}(\A)$ and $\mathbb{Z}(\A)$ as follows:
\begin{itemize}
\item
$\mathbb{S}(\A)$ has domain $\mathbb{S}(A)$ and, for every symbol $R\in\sigma$ of arity $k$,\\ $R^{\mathbb{S}(\A)}=\{(\bxi^{(1)},\dots,\bxi^{(k)}):\bxi\in\mathbb{S}(R^{\A})\}$;
\item
$\mathbb{Z}(\A)$ has domain $\mathbb{Z}(A)$ and, for every symbol $R\in\sigma$ of arity $k$,\\ $R^{\mathbb{Z}(\A)}=\{(\bzeta^{(1)},\dots,\bzeta^{(k)}):\bzeta\in\mathbb{Z}(R^{\A})\}$.
\end{itemize}

\begin{rem}
$\mathbb{S}(\A)$ and $\mathbb{Z}(\A)$ are denoted by $\operatorname{LP}(\A)$ and
  $\operatorname{IP}(\A)$ in~\cite{BBKO21}, respectively. As noted
  in~\cite[Remarks~7.11 and~7.21]{BBKO21}, $\mathbb{S}(\A)$ coincides with the free structure of the minion $\Qconv$ generated by $\A$ and, similarly, $\mathbb{Z}(\A)$ is the free structure of the minion $\Zaff$ generated by $\A$. (See Appendices~\ref{subsec:blp} and~\ref{subsec:aip} for the definitions of $\Qconv$ and $\Zaff$, respectively.) In particular, given a relational
  structure $\X$ with signature $\sigma$, $\BLP$ accepts $\X$ as an instance of
  $\CSP(\A)$ if and only if $\X\rightarrow \mathbb{S}(\A)$; similarly, $\AIP$
  accepts $\X$ as an instance of $\CSP(\A)$ if and only if $\X\rightarrow \mathbb{Z}(\A)$.
\end{rem}

\begin{rem}
\label{rem_canonical_homs_A_SA_ZA}
The assignment $f:a\mapsto \be_a$ for each $a\in A$ yields both a canonical
  homomorphism from $\A$ to $\mathbb{S}
(\A)$ and a canonical homomorphism from $\A$ to $\mathbb{Z}
(\A)$. Indeed, for $R\in\sigma$ of arity $k$ and $\ba=(a_1,\dots,a_k)\in R^\A$,
\begin{align*}
f(\ba)=(\be_{a_1},\dots,\be_{a_k})=({{E}^{(R^\A,1)}}\be_\ba,\dots,{{E}^{(R^\A,k)}}\be_\ba)
\end{align*}
which belongs to both $R^{\mathbb{S}(\A)}$ and $R^{\mathbb{Z}(\A)}$ since $\be_\ba\in\mathbb{S}(R^\A)\cap\mathbb{Z}(R^\A)$.
\end{rem}

 In Proposition~\ref{prop_2208_2504}, we characterise the instances of a given PCSP template for which the $\CLAP$ algorithm returns \textsc{yes} in terms of the condition described in the following definition.
\begin{defn}
\label{defn_CBLPC_2504_2212}
Let $(\A,\B)$ be a PCSP template, where $\A$ and $\B$ have signature $\sigma$.
  Given an instance $\X$ of $\PCSP(\A,\B)$, we say that $\X$ has the $\CLAP$
  \emph{condition} if the following holds: $\forall R\in\sigma$ of arity $k$
  $\exists s^R:R^\X\rightarrow\power(R^\A)\setminus\{\emptyset\}$ such that

\begin{itemize}
\item[$(I)$]
$\forall \bx=(x_1,\dots,x_k)\in R^\X, \forall \ba=(a_1,\dots,a_k)\in s^R(\bx)$ there is a homomorphism $h_{\bx,\ba}:\X\rightarrow\mathbb{S}(\A)$ that satisfies:
\begin{enumerate}
\item
$h_{\bx,\ba}(x_i)=\be_{a_i}$ $\forall i\in [k]$;
\item
$\forall \tilde R\in \sigma$ of arity $\tilde k$, $\forall \tilde\bx=(\tilde{x}_1,\dots,\tilde{x}_{\tilde{k}})\in \tilde R^\X$  $\exists\bxi\in\mathbb{S}(\tilde R^\A)$ such that
\begin{itemize}
\item[$\ast$]
$h_{\bx,\ba}(\tilde x_i)={{E}^{(\tilde R^\A,i)}}\bxi\hspace{2cm}\forall i\in [\tilde k]$;
\item[$\ast$]
$\supp(\bxi)\subseteq s^{\tilde R}(\tilde \bx)$.
\end{itemize}
\end{enumerate}
\item[$(II)$]
$\exists \bar{R}\in\sigma, \bar{\bx}\in \bar{R}^\X, \bar{\ba}\in s^{\bar{R}}(\bar{\bx})$ such that there is a homomorphism $g:\X\rightarrow\mathbb{Z}(\A)$ that satisfies:
\begin{itemize}
\item[$1'.$]
$\forall \tilde R\in \sigma$ of arity $\tilde k$, $\forall \tilde\bx=(\tilde{x}_1,\dots,\tilde{x}_{\tilde{k}})\in \tilde R^\X$  $\exists\bxi\in\mathbb{S}(\tilde R^\A)$, $\exists\bzeta\in\mathbb{Z}(\tilde R^\A)$ such that
\begin{itemize}
\item[$\ast$]
$h_{\bar{\bx},\bar{\ba}}(\tilde x_i)={{E}^{(\tilde R^\A,i)}}\bxi\hspace{2cm}\forall i\in [\tilde k]$;
\item[$\ast$]
$g(\tilde x_i)={{E}^{(\tilde R^\A,i)}}\bzeta\hspace{2.44cm}\forall i\in [\tilde k]$;
\item[$\ast$]
$\supp(\bzeta)\subseteq \supp(\bxi)\subseteq s^{\tilde{R}}(\tilde{\bx})$.
\end{itemize}
\end{itemize}
\end{itemize}
\end{defn} 

\begin{prop}
\label{prop_2208_2504}
Given an instance $\X$ of $\PCSP(\A,\B)$, $\CLAP$ accepts $\X$ if and only if $\X$ has the $\CLAP$ condition.
\end{prop}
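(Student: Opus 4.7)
The strategy is to translate, in both directions, between the execution of Algorithm~\ref{alg:clap} and the data in Definition~\ref{defn_CBLPC_2504_2212}, using the known correspondences between $\BLP$ feasibility and homomorphisms to $\mathbb{S}(\A)$, and between $\AIP$ feasibility and homomorphisms to $\mathbb{Z}(\A)$ (as recalled in the Remark at the end of Section~\ref{sec:cnd}). Concretely, the final sets $S_{\bx,R}$ after the first stage of $\CLAP$ will play the role of the sets $s^R(\bx)$ in the $\CLAP$ condition, and each successful $\BLP$ (respectively $\BLP+\AIP$) call will be recast as a homomorphism $h_{\bx,\ba}$ (respectively as the pair $(h_{\bar\bx,\bar\ba},g)$).

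For the direction \textbf{$\CLAP$ accepts $\X$ $\Rightarrow$ $\CLAP$ condition holds}, I take $s^R(\bx)$ to be the value of $S_{\bx,R}$ at termination of the first stage. Non-emptiness is immediate since otherwise $\CLAP$ would have returned \textsc{No} in line~11. For part~$(I)$, fix $\bx\in R^\X$ and $\ba\in s^R(\bx)$; since $\ba$ was never removed, the $\BLP(\X,\A)$ instance in line~6 with $\lambda_{\bx,R}(\ba)=1$ and $\lambda_{\bx',R'}(\ba')=0$ for $\ba'\notin S_{\bx',R'}$ is feasible at termination. Reading off, for each constraint $\tilde R$ and tuple $\tilde\bx\in\tilde R^\X$, the restriction of the LP variables $\lambda_{\tilde\bx,\tilde R}(\cdot)$ as a stochastic vector $\bxi\in\mathbb{S}(\tilde R^\A)$ yields the required homomorphism $h_{\bx,\ba}:\X\to\mathbb{S}(\A)$ via $h_{\bx,\ba}(\tilde x_i)={E}^{(\tilde R^\A,i)}\bxi$; the marginal/consistency constraints of the $\BLP$ guarantee that this map is well defined and preserves all relations, while the zeroing constraint forces $\supp(\bxi)\subseteq S^{\tilde R}(\tilde\bx)=s^{\tilde R}(\tilde\bx)$, and the pinning constraint forces $h_{\bx,\ba}(x_i)=\be_{a_i}$. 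For part~$(II)$, since $\CLAP$ returned \textsc{Yes}, there exists some $(\bar R,\bar\bx,\bar\ba)$ for which $\BLP+\AIP$ is feasible under the same pinning and zeroing; the $\BLP$ part of this feasible solution gives the required $h_{\bar\bx,\bar\ba}$ and the $\AIP$ part (a collection of integer affine vectors $\bzeta\in\mathbb{Z}(\tilde R^\A)$) gives the homomorphism $g:\X\to\mathbb{Z}(\A)$, with the compatibility $\supp(\bzeta)\subseteq\supp(\bxi)$ being exactly the refinement condition built into the definition of $\BLP+\AIP$ in~\cite{bgwz20} (cf.~Appendix~\ref{app:relaxations}).

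For the converse direction \textbf{$\CLAP$ condition holds $\Rightarrow$ $\CLAP$ accepts $\X$}, the key inductive claim is that $s^R(\bx)\subseteq S_{\bx,R}$ holds throughout the execution of the first stage. The base case is trivial since initially $S_{\bx,R}=R^\A$. For the inductive step, suppose the algorithm is about to test an element $\ba\in s^R(\bx)\cap S_{\bx,R}$. By part~$(I)$ of the $\CLAP$ condition, there is a homomorphism $h_{\bx,\ba}:\X\to\mathbb{S}(\A)$ satisfying the listed requirements; the stochastic vectors $\bxi$ associated with each $(\tilde\bx,\tilde R)$, viewed as values $\lambda_{\tilde\bx,\tilde R}(\ba'):=\xi_{\ba'}$, form a feasible solution to the tested $\BLP$. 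Indeed, the pinning $\lambda_{\bx,R}(\ba)=1$ holds because $h_{\bx,\ba}(x_i)=\be_{a_i}$, and the zeroing $\lambda_{\bx',R'}(\ba')=0$ for $\ba'\notin S_{\bx',R'}$ holds by the inductive hypothesis $s^{R'}(\bx')\subseteq S_{\bx',R'}$ together with the support condition $\supp(\bxi)\subseteq s^{\tilde R}(\tilde\bx)$. Hence $\ba$ is not removed, preserving the inclusion. After termination, every $S_{\bx,R}$ contains the nonempty $s^R(\bx)$, so the first stage does not reject. To pass the second stage, take the distinguished triple $(\bar R,\bar\bx,\bar\ba)$ from part~$(II)$; the established inclusion ensures $\bar\ba\in S_{\bar\bx,\bar R}$, so this triple is tested in the loop of lines~14--18. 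The pair $(h_{\bar\bx,\bar\ba},g)$ from part~$(II)$ translates, via $\bxi$ and $\bzeta$, into a feasible solution of the corresponding $\BLP+\AIP$ instance, so the algorithm returns \textsc{Yes}.

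The only nontrivial step is the inductive preservation of $s^R(\bx)\subseteq S_{\bx,R}$ in the backward direction, since it requires the support condition in part~$(I)$ to be strong enough that the homomorphism $h_{\bx,\ba}$ remains a feasible $\BLP$ witness even as the sets $S_{\bx',R'}$ shrink; this is precisely why Definition~\ref{defn_CBLPC_2504_2212} demands the support inclusion $\supp(\bxi)\subseteq s^{\tilde R}(\tilde\bx)$ uniformly for every constraint, rather than only at the pinned one. The remaining work is routine bookkeeping translating between LP/IP variables and marginals of stochastic or affine vectors.
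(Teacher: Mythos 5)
Your overall strategy coincides with the paper's: take $s^R(\bx)$ to be the terminal sets $S_{\bx,R}$ in the forward direction, and in the backward direction maintain the invariant $s^R(\bx)\subseteq S_{\bx,R}$ throughout the first stage so that the homomorphisms $h_{\bx,\ba}$ keep witnessing feasibility of the pinned-and-zeroed $\BLP$ calls (the paper compresses this induction into ``follows almost analogously''; spelling it out as you do is fine and is indeed the right bookkeeping).

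There is, however, a genuine gap at the final step of your backward direction, and it is exactly the point the paper singles out as a subtlety. You conclude by saying that the pair $(h_{\bar\bx,\bar\ba},g)$ ``translates into a feasible solution of the corresponding $\BLP+\AIP$ instance, so the algorithm returns \textsc{Yes}.'' But $\BLP+\AIP$ (Algorithm~\ref{alg:blp-aip}) is not a joint feasibility test over pairs $(\lambda,\tau)$: it first computes a \emph{relative interior} point $h'$ of the polytope of $\BLP$ solutions satisfying $\lambda_{\bar\bx,\bar R}(\bar\ba)=1$ and the zeroing constraints, and then asks whether $\AIP$ refined by the zero set \emph{of that particular $h'$} is feasible. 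The homomorphism $h_{\bar\bx,\bar\ba}$ supplied by part~$(II)$ of the $\CLAP$ condition need not be (and in general is not) this relative interior point, so your pair $(h_{\bar\bx,\bar\ba},g)$ is not, as stated, a run of the algorithm. To close the gap one must add the observation that a relative interior point of a nonempty polytope has maximal support among all feasible points, so every LP variable that vanishes at $h'$ also vanishes at $h_{\bar\bx,\bar\ba}$; combined with $\supp(\bzeta)\subseteq\supp(\bxi)$ from $1'$, this shows that $g$ itself is a feasible solution of the $\AIP$ refined by $h'$, and hence the \textbf{if} test in line~15 succeeds. This is precisely the argument the paper makes (in the body and in the accompanying footnote, attributing the underlying equivalence to the analysis in~\cite{bgwz20}); without it, your last sentence does not follow from what precedes it.
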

\begin{proof}
Suppose that $\CLAP$ accepts $\X$ and let $\{S_{\bx,R}: R\in\sigma,\bx\in R^\X\}$
  be the family of sets generated by the algorithm at termination. For each
  $R\in \sigma$, consider the map
  $s^R:R^\X\rightarrow\power(R^\A)\setminus\{\emptyset\}$ defined by
  $s^R(\bx)=S_{\bx,R}$. For each $\bx \in R^\X, \ba\in s^R(\bx)$, consider the
  corresponding solution to $\BLP(\X,\A)$ generated by the algorithm. Letting $w_x$
  be the probability distribution on $A$ associated with $x\in X$ in the linear
  program, we observe that the assignment $x\mapsto w_x$ yields a homomorphism
  (call it $h_{\bx,\ba}$) from $\X$ to $\mathbb{S}(\A)$ that satisfies the
  requirement $1$. Moreover, letting $\bxi$ be the probability distribution
  associated with a constraint $\tilde\bx\in \tilde R^\X$ for some $\tilde
  R\in\sigma$, observe that $h_{\bx,\ba}$ also satisfies the requirement $2$.
  Finally, let $\bar{R}\in\sigma, \bar{\bx}\in \bar{R}^\X, \bar{\ba}\in
  S_{\bar{\bx},\bar{R}}$ be such that the condition in the \textbf{if} statement of
  line 15 of Algorithm~\ref{alg:clap} is met. Then $1'$ follows from the description of $\BLP+\AIP$. 

The converse implication follows almost analogously, except for the following
  subtlety. The $\BLP+\AIP$ algorithm requires that the $\BLP$ solution should
  be picked from the \emph{relative interior} of the polytope of the feasible
  solutions (cf.~Algorithm~\ref{alg:blp-aip} in Appendix~\ref{subsec:blp-aip}). However, the homomorphism $h_{\bar{\bx},\bar{\ba}}$ whose existence witnesses part $(II)$ of the $\CLAP$ condition may correspond to a $\BLP$ solution that is not in the relative interior of the polytope $P$ of the feasible solutions of $\BLP(\X,\A)$ satisfying $\lambda_{\bar\bx,\bar R}(\bar\ba)=1$ and $\lambda_{\bx',R'}(\ba')=0$ for every $R'\in\sigma$, $\bx'\in R'^\X$, and $\ba'\not\in S_{\bx',R'}$. If that is the case, the algorithm would not consider $(h_{\bar{\bx},\bar{\ba}},g)$ as a solution for $\BLP+\AIP$. However, letting $h'$ be a solution in the relative interior of $P$, the conditions $(I)$ and $(II)$ of $\CLAP$ are still satisfied if we let $h'$ replace $h_{\bar{\bx},\bar{\ba}}$; and, in this case, the homomorphisms witnessing the $\CLAP$ condition do correspond to solutions found by the $\CLAP$ algorithm.\footnote{Another way to phrase this is by saying that the existence of a pair $(h,g)$ of homomorphisms such that each variable for $g$ is zero whenever the corresponding variable for $h$ is zero is equivalent to the existence, for \emph{any} $h'$ in the nonempty relative interior of the polytope of solutions of the $\BLP$, of a solution $g'$ of $\AIP$ that sets to zero any variable that is zero in $h'$. This is implicit in the analysis in~\cite{bgwz20}.}
Hence, $\CLAP$ accepts $\X$.
\end{proof}

\subsection{Proof of Theorem~\ref{thm:main1}}\label{sec:proof}

Our first goal is to prove the following.

\begin{thm}
\label{minion_homo_implies_CBLP_works}
If there is a minion homomorphism from $\ourMinion$ to $\Pol(\A,\B)$ then $\CLAP$ solves $\PCSP(\A,\B)$.
\end{thm}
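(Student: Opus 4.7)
The plan is to prove the two defining conditions of \enquote{$\CLAP$ solves $\PCSP(\A,\B)$} separately. The completeness direction -- if $\X\to\A$ then $\CLAP$ accepts $\X$ -- is immediate and does not use the minion homomorphism hypothesis: any homomorphism $h:\X\to\A$ yields an integer-valued feasible solution to $\BLP(\X,\A)$ (and to $\BLP+\AIP(\X,\A)$) in which $\lambda_{\bx,R}(h(\bx))=1$ for every constraint $\bx\in R^\X$, so the pair $h(\bx)$ is never deleted from $S_{\bx,R}$ during the first stage of Algorithm~\ref{alg:clap}, and the second stage succeeds on any such pair. The bulk of the argument lies in the soundness direction: if $\CLAP$ accepts $\X$, then $\X\to\B$.

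For soundness, Lemma~\ref{lem:free} converts the given minion homomorphism into a homomorphism $\alpha:\mathbb{F}_\ourMinion(\A)\to\B$, and by Proposition~\ref{prop_2208_2504} the instance $\X$ satisfies the $\CLAP$ condition, with witnessing maps $\{s^R\}$, homomorphisms $\{h_{\bx,\ba}\}$, a distinguished pair $(\bar\bx,\bar\ba)$, and a homomorphism $g:\X\to\mathbb{Z}(\A)$. The strategy is to build a homomorphism $\beta:\X\to\mathbb{F}_\ourMinion(\A)$, so that $\alpha\circ\beta$ witnesses $\X\to\B$. For each $x\in X$, set $\beta(x)\coloneqq(M_x,\bmu_x)$ where $\bmu_x\coloneqq g(x)\in\mathbb{Z}(A)$ and $M_x$ is the $|A|\times\aleph_0$ matrix whose columns -- enumerated by starting with the pair $(\bar\bx,\bar\ba)$, continuing through all pairs $(\bx,\ba)$ with $\bx\in R^\X$ and $\ba\in s^R(\bx)$, and padding with the last column repeated -- are the stochastic vectors $h_{\bx,\ba}(x)\in\mathbb{S}(A)$. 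Of the axioms for $(M_x,\bmu_x)\in\ourMinion^{(|A|)}$: $(c_1)$--$(c_3)$ and $(c_5)$ hold by construction; $(c_4)$ follows from item $1'$ of the $\CLAP$ condition applied to the unary symbol $\un$ (using $E^{(\un^\A,1)}=I_{|A|}$), yielding $\supp(g(x))\subseteq\supp(h_{\bar\bx,\bar\ba}(x))=\supp(M_x\be_1)$; and the skeletal condition $(c_6)$ holds because any $a\in A$ in the support of some column of $M_x$ lies in $s^{\un}(x)$ by item $2$ applied to $\un$, whence $M_x$ contains the column $h_{(x),(a)}(x)=\be_a$ supplied by item $1$.

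Next, to check that $\beta$ preserves every relation, I fix $\bx=(x_1,\dots,x_k)\in R^\X$ and exhibit an element $Q_\bx=(N_\bx,\bm{\nu}_\bx)\in\ourMinion^{(|R^\A|)}$ with $(Q_\bx)_{/\pi_i}=\beta(x_i)$ for each $i\in[k]$, where $\pi_i:R^\A\to A$ sends a tuple to its $i$-th entry. The column of $N_\bx$ indexed by $(\bx',\ba')\neq(\bar\bx,\bar\ba)$ is declared to be the vector $\bxi\in\mathbb{S}(R^\A)$ supplied by item $2$ of the $\CLAP$ condition applied to $h_{\bx',\ba'}$ and $\bx$; the column indexed by $(\bar\bx,\bar\ba)$ together with $\bm{\nu}_\bx\in\mathbb{Z}(R^\A)$ are taken to be the pair $(\bxi,\bzeta)$ jointly produced by item $1'$ applied to $g$ and $\bx$. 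The marginal identities in items $2$ and $1'$ then translate directly into $M_{x_i}=E^{(R^\A,i)}N_\bx$ and $\bmu_{x_i}=E^{(R^\A,i)}\bm{\nu}_\bx$, which is precisely $\beta(x_i)=(Q_\bx)_{/\pi_i}$.

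The main obstacle is the skeletal property $(c_6)$ of $N_\bx$. Item $1$ of the $\CLAP$ condition is used in a crucial way here: for $\ba\in s^R(\bx)$ the homomorphism $h_{\bx,\ba}$ pins each $x_i$ to $\be_{a_i}$, so the marginal identity $\be_{a_i}=E^{(R^\A,i)}\bxi$ imposed by item $2$ forces the vector $\bxi$ chosen for column $(\bx,\ba)$ of $N_\bx$ to be concentrated on the single tuple $\ba$; that is, $\bxi=\be_\ba\in\R^{|R^\A|}$, which is exactly the skeleton column of row $\ba$ required by Definition~\ref{def:skeletal}. The remaining verifications -- $(c_1)$--$(c_3)$ and $(c_5)$ for $N_\bx$, plus the support chain $\supp(\bm{\nu}_\bx)\subseteq\supp(N_\bx\be_1)$ required by $(c_4)$, which is precisely the extra information packaged into the joint pair $(\bxi,\bzeta)$ of item $1'$ -- are routine.
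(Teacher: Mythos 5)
Your proposal is correct and follows essentially the same route as the paper: Lemma~\ref{lem:free} plus Proposition~\ref{prop_2208_2504}, then building the map $x\mapsto(M_x,\bmu_x)$ from the homomorphisms $h_{\bx,\ba}$ and $g$ of the $\CLAP$ condition, verifying $(c_4)$ and $(c_6)$ via the unary symbol $\un$, and witnessing each constraint by the pair assembled from the distributions of items $2$ and $1'$, with the skeletal property of that witness forced exactly as you say (the pinned marginals make the column indexed by $(\bx,\ba)$ equal $\be_\ba$, while the support conditions kill all rows outside $s^R(\bx)$). The only cosmetic difference is that you argue the completeness direction directly on Algorithm~\ref{alg:clap} rather than through the $\CLAP$ condition, which is equally valid.
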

\begin{proof}

Let $\X$ be an instance of $\PCSP(\A,\B)$. 

First we show that if $\X\to\A$ then $\CLAP$ accepts $\X$, which is the easy direction.
Consider a homomorphism $f:\X\rightarrow\A$. Given $R\in\sigma$ of arity $k$ and
  $\bx=(x_1,\dots,x_k)\in R^\X$, let $s^R(\bx)=\{f(\bx)\}$. For $\bx\in R^\X$
  and $\ba=(a_1,\dots,a_k)=f(\bx)\in s^R(\bx)$, let $h_{\bx,\ba}:\X\rightarrow\mathbb{S}(\A)$ be the homomorphism obtained by composing $f$ with the canonical homomorphism from $\A$ to $\mathbb{S}(\A)$ of Remark~\ref{rem_canonical_homs_A_SA_ZA} -- i.e., $h_{\bx,\ba}(x)=\be_{f(x)}$ $\forall x\in X$. Observe that 
$
h_{\bx,\ba}(x_i)=\be_{f(x_i)}=\be_{a_i}  
$ for any $i\in [k]$ and, given $\tilde{R}\in\sigma$ of arity $\tilde{k}$ and $\tilde{\bx}=(\tilde{x}_1,\dots,\tilde{x}_{\tilde{k}})\in \tilde{R}^\X$, setting $\bxi=\be_{f(\tilde{\bx})}$ yields $h_{\bx,\ba}(\tilde{x}_i)=\be_{f(\tilde{x}_i)}={{E}^{(\tilde{R}^\A,i)}}\be_{f(\tilde{\bx})}={{E}^{(\tilde{R}^\A,i)}}\bxi$ for any $i\in [\tilde{k}]$, and $\supp(\bxi)=\supp(\be_{f(\tilde{\bx})})=\{f(\tilde{\bx})\}=s^{\tilde{R}}(\tilde{\bx})$. This shows that part $(I)$ of Definition~\ref{defn_CBLPC_2504_2212} is satisfied. As for part $(II)$, choose any $\bar{R}\in\sigma$ and $\bar{\bx}\in\bar{R}^\X$, let
  $\bar{\ba}=f(\bar{\bx})$, and consider the homomorphism
  $g:\X\rightarrow\mathbb{Z}(\A)$ obtained by composing $f$ with the canonical homomorphism from $\A$ to $\mathbb{Z}(\A)$ of Remark~\ref{rem_canonical_homs_A_SA_ZA} -- i.e., $g(x)=\be_{f(x)}$ $\forall x\in X$. Given $\tilde{R}\in\sigma$ of arity $\tilde{k}$ and $\tilde{\bx}=(\tilde{x}_1,\dots,\tilde{x}_{\tilde{k}})\in \tilde{R}^\X$, setting $\bxi=\bzeta=\be_{f(\tilde{\bx})}$ yields $g(\tilde{x_i})=h_{\bar\bx,\bar \ba}(\tilde{x}_i)=\be_{f(\tilde{x}_i)}={{E}^{(\tilde{R}^\A,i)}}\be_{f(\tilde{\bx})}={{E}^{(\tilde{R}^\A,i)}}\bxi={{E}^{(\tilde{R}^\A,i)}}\bzeta$ for any $i\in [\tilde{k}]$, and $\supp(\bzeta)=\supp(\bxi)=\{f(\tilde{\bx})\}=s^{\tilde{R}}(\tilde{\bx})$. It follows that $\X$ has the $\CLAP$ condition. By~Proposition~\ref{prop_2208_2504}, $\CLAP$ accepts $\X$.

Second we show that if $\X$ is accepted by $\CLAP$ then $\X\to\B$.
So, suppose that $\X$ is accepted by $\CLAP$. By Proposition~\ref{prop_2208_2504}, $\X$ has the $\CLAP$ condition. Using the terminology of Definition~\ref{defn_CBLPC_2504_2212}, consider the set $\{h_1,\dots,h_t\}=\{h_{\bx,\ba}:R\in\sigma,\bx\in R^\X,\ba\in s^R(\bx)\}$, where each $h_{\bx,\ba}$ is a homomorphism from $\X$ to $\mathbb{S}(\A)$ described in part $(I)$ of the definition. We also consider the homomorphism $g:\X\rightarrow\mathbb{Z}(\A)$ of part $(II)$ of the definition, corresponding to $\bar{R}\in\sigma, \bar{\bx}\in \bar{R}^\X, \bar{\ba}\in s^{\bar{R}}(\bar{\bx})$. Without loss of generality, we set $h_1=h_{\bar{\bx},\bar{\ba}}$. 

Let $n=|A|$. Given $x\in X$, consider the matrix $M_x\in\Q^{n,\aleph_0}$ and the vector $\bmu_x\in \Z^n$ defined by
\begin{align*}
\begin{array}{ll}
M_x\be_i=h_i(x) & \forall i\in [t],\\
M_x\be_i=h_t(x) & \forall i\in \N\setminus [t],\\
\bmu_x=g(x). &
\end{array}
\end{align*}
We claim that $(M_x,\bmu_x)\in \ourMinion^{(n)}$. The requirements
  $(c_1),(c_2),(c_3)$, and $(c_5)$ in Definition~\ref{def:minion} are easily seen to be satisfied.
  To check that $M_x$ is skeletal, take $a\in A$ and suppose that $\be_a^TM_x\neq
  \bzero_{\aleph_0}^T$. This means that $\be_a^TM_x\be_d\neq 0$ for some $d\in [t]$.
  Hence, $a\in \supp(M_x\be_d)=\supp(h_d(x))$. Recall that we are assuming (with
  no loss of generality) that the signature $\sigma$ of $\X$, $\A$, and $\B$
  contains a unary symbol $\un$ such that
  $\un^\X=X$, $\un^\A=A$ , and
  $\un^\B=B$. Notice that
  ${E}^{(\un^\A,1)}=I_n$. From part $(I)$ of Definition~\ref{defn_CBLPC_2504_2212}, we deduce that $\supp(h_d(x))\subseteq
  s^{\un}(x)$ and, hence, $a\in s^{\un}(x)$.
  We can then take the homomorphism $h_i=h_{x,a}$, which satisfies $h_i(x)=\be_a$,
  that is $M_x\be_i=\be_a$. So, $M_x$ is skeletal and $(c_6)$ is satisfied.
  Finally, to check $(c_4)$, choose $a\in A$ and suppose that $\be_a^TM_x\be_1=0$. Since $M_x\be_1=h_1(x)=h_{\bar{\bx},\bar{\ba}}(x)$, this implies that $a\not\in \supp(h_{\bar{\bx},\bar{\ba}}(x))$. Choosing $\un$ as $\tilde{R}$ and $x$ as $\tilde{\bx}$ in $1'$ of Definition~\ref{defn_CBLPC_2504_2212}, and using again the fact that ${E}^{(\un^\A,1)}=I_n$, we see that $\supp(g(x))\subseteq \supp(h_{\bar{\bx},\bar{\ba}}(x))$. Therefore, $a\not\in \supp(g(x))=\supp(\bmu_x)$. Hence, $(c_4)$ is satisfied, too, and the claim is proved.

Consider the map 
$\gamma:X\rightarrow \ourMinion^{(n)}$ defined by $x\mapsto (M_x,\bmu_x)$. We
  claim that $\gamma$ is a homomorphism from $\X$ to
  $\mathbb{F}_{\ourMinion}(\A)$. With this claim, we can finish the proof. By assumption, there is a minion homomorphism
  from $\ourMinion$ to $\Pol(\A,\B)$. By Lemma~\ref{lem:free} applied to
  $\ourMinion$, we have $\mathbb{F}_{\ourMinion}(\A)\to\B$. Composing $\gamma$
  with this homomorphism yields $\X\rightarrow\B$, as required. It remains to
  establish the claim.

\medskip
\noindent\textbf{Claim:} $\gamma$ is a homomorphism from $\X$ to $\mathbb{F}_{\ourMinion}(\A)$. 
\medskip

\noindent
  Take $R\in\sigma$ of arity $k$, and let $\bx=(x_1,\dots,x_k)\in R^\X$. We need to show that $((M_{x_1},\bmu_{x_1}),\dots,\allowbreak(M_{x_k},\bmu_{x_k}))\in R^{\mathbb{F}_{\ourMinion}(\A)}$. For each $i\in [t]\setminus\{1\}$, consider a probability distribution $\bxi_i\in \mathbb{S}(R^\A)$ corresponding to the homomorphism $h_i$ and witnessing part $2$ in Definition~\ref{defn_CBLPC_2504_2212}. Also, consider the probability distribution $\bxi_1\in \mathbb{S}(R^\A)$ and the integer distribution $\bzeta\in\mathbb{Z}(R^\A)$ corresponding to $h_1$ and $g$, respectively, and witnessing $1'$. We introduce the matrix $Q\in\Q^{|R^\A|,\aleph_0}$ and the vector $\bdelta\in\Z^{|R^\A|}$ defined by 
\begin{align*}
\begin{array}{ll}
Q\be_i=\bxi_i & \forall i\in [t],\\
Q\be_i=\bxi_t & \forall i\in \N\setminus [t],\\
\bdelta=\bzeta.
\end{array}
\end{align*}
We claim that $(Q,\bdelta)\in \ourMinion^{(|R^\A|)}$. The requirements $(c_1), (c_2), (c_3)$, and $(c_5)$ in Definition~\ref{def:minion} are easily seen to be satisfied. Suppose $\be_\ba^TQ\neq \bzero_{\aleph_0}^T$ for some $\ba=(a_1,\dots,a_k)\in R^\A$, so that there exists $d\in [t]$ such that $\be_\ba^TQ\be_d\neq 0$. Hence, $\ba\in \supp(Q\be_d)=\supp(\bxi_d)\subseteq s^R(\bx)$. Pick $h_j=h_{\bx,\ba}$. We have that
\begin{align*}
{{E}^{(R^\A,p)}}\bxi_j=h_j(x_p)=h_{\bx,\ba}(x_p)=\be_{a_p}& &\forall p\in [k].
\end{align*}
Suppose that $\bxi_j\neq \be_\ba$. Then, $\exists \ba'=(a'_1,\dots,a'_k)\in R^\A$ such that $\ba'\neq \ba$ and $\be_{\ba'}^T\bxi_j>0$. Choose $q\in [k]$ such that $a'_q\neq a_q$, and observe that
\begin{align*}
0=\be_{a'_q}^T\be_{a_q}=\be_{a'_q}^T {{E}^{(R^\A,q)}}\bxi_j
\geq \be_{\ba'}^T\bxi_j>0,
\end{align*}
which is a contradiction. Hence, $Q\be_j=\bxi_j=\be_\ba$. We conclude that $Q$ is skeletal and, therefore, $(c_6)$ is satisfied. Finally, suppose that $\ba\not\in\supp(Q\be_1)=\supp(\bxi_1)$ for some $\ba\in R^\A$. Recalling that $\bxi_1\in\mathbb{S}(R^\A)$ corresponds to the homomorphism $h_1=h_{\bar{\bx},\bar{\ba}}$, it follows from $1'$ that $\supp(\bzeta)\subseteq\supp(\bxi_1)$. Hence, $\ba\not\in\supp(\bzeta)=\supp(\bdelta)$, so that $(c_4)$ is satisfied, too. As a consequence, $(Q,\bdelta)\in \ourMinion^{(|R^\A|)}$, as claimed. 

Now, we need to show that $(M_{x_\alpha},\bmu_{x_\alpha})=(Q,\bdelta)_{/\pi_\alpha}$ for each $\alpha\in [k]$, where $\pi_\alpha:R^\A\rightarrow A$ maps $\ba\in R^\A$ to its $\alpha$-th coordinate. Observe first that, by definition, ${P_{\pi_\alpha}}={E}^{(R^\A,\alpha)}$ for each $\alpha\in [k]$. We see that
\begin{align*}
&Q_{/\pi_\alpha}\be_i={P_{\pi_\alpha}}Q\be_i={{E}^{(R^\A,\alpha)}}Q\be_i={{E}^{(R^\A,\alpha)}}\bxi_i=h_i(x_\alpha)=M_{x_\alpha}\be_i &&\mbox{for }i\in [t]&& \mbox{ and}\\
&Q_{/\pi_\alpha}\be_i={P_{\pi_\alpha}}Q\be_i={P_{\pi_\alpha}}Q\be_t=M_{x_\alpha}\be_t=M_{x_\alpha}\be_i &&\mbox{for }i\in\N\setminus [t],&&\\
\end{align*}
which yields $Q_{/\pi_\alpha}=M_{x_\alpha}$. Moreover,
\begin{align*}
\bdelta_{/\pi_\alpha}={P_{\pi_\alpha}}\bdelta={{E}^{(R^\A,\alpha)}}\bdelta={{E}^{(R^\A,\alpha)}}\bzeta=g(x_\alpha)=\bmu_{x_\alpha}.
\end{align*}
It follows that $(M_{x_\alpha},\bmu_{x_\alpha})=(Q_{/\pi_\alpha},\bdelta_{/\pi_\alpha})=(Q,\bdelta)_{/\pi_\alpha}$.
By Definition~\ref{def:free}, $((M_{x_1},\bmu_{x_1}),\dots, \linebreak (M_{x_k},\bmu_{x_k}))\in
R^{\mathbb{F}_{\ourMinion}(\A)}$, so $\gamma:\X\rightarrow\mathbb{F}_{\ourMinion}(\A)$ is a homomorphism. 
\end{proof}
\noindent Our second goal is to prove the following.
\begin{thm}\label{CBLP_works_implies_minion_homo}
If $\CLAP$ solves $\PCSP(\A,\B)$ then there is a minion homomorphism from $\ourMinion$ to $\Pol(\A,\B)$.
\end{thm}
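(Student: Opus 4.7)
The plan is to apply the compactness argument of Proposition~\ref{prop_compactness_argument}: for each $D\in\N$ I will produce a minion homomorphism $\xi_D:\ourMinion_D\to\Pol(\A,\B)$, which then combine to give a minion homomorphism $\ourMinion\to\Pol(\A,\B)$. By Lemma~\ref{lem:free}, having such a $\xi_D$ is equivalent to the existence of a homomorphism $\mathbb{F}_{\ourMinion_D}(\A)\to\B$. Since $\ourMinion_D^{(L)}$ is finite for every $L\in\N$, the free structure $\X_D\coloneqq\mathbb{F}_{\ourMinion_D}(\A)$ is a finite instance of $\PCSP(\A,\B)$, so the hypothesis that $\CLAP$ solves $\PCSP(\A,\B)$ reduces the task to showing that $\CLAP$ accepts $\X_D$. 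By Proposition~\ref{prop_2208_2504}, this amounts to verifying that $\X_D$ satisfies the $\CLAP$ condition of Definition~\ref{defn_CBLPC_2504_2212}.

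For condition $(I)$, I would fix, for each constraint $\bx\in R^{\X_D}$, one witness $(Q_\bx,\bdelta_\bx)\in\ourMinion_D^{(|R^\A|)}$ of its membership (as supplied by Definition~\ref{def:free}), and set
\[ s^R(\bx)\coloneqq\{\ba\in R^\A : Q_\bx\be_i=\be_\ba \text{ for some } i\in\N\}, \]
which is nonempty by the skeletality axiom $(c_6)$ applied to any nonzero row of $Q_\bx$. For each $\ba\in s^R(\bx)$, pick an index $i_{\bx,\ba}$ with $Q_\bx\be_{i_{\bx,\ba}}=\be_\ba$ and define $h_{\bx,\ba}(M,\bmu)\coloneqq M\be_{i_{\bx,\ba}}$. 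The identity $P_{\pi_j}=E^{(R^\A,j)}$ (where $\pi_j:R^\A\to A$ is the $j$-th projection) together with the minor relation $M_j=P_{\pi_j}Q_\bx$ makes $h_{\bx,\ba}$ a homomorphism $\X_D\to\mathbb{S}(\A)$ with $h_{\bx,\ba}(x_j)=\be_{a_j}$, and the vector $\bxi\coloneqq Q_{\tilde\bx}\be_{i_{\bx,\ba}}$ witnesses the second clause of $(I)$, with $\supp(\bxi)\subseteq s^{\tilde R}(\tilde\bx)$ immediate from skeletality of $Q_{\tilde\bx}$.

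For condition $(II)$ I would exploit the distinguished role of the first column built into axiom $(c_4)$. Since the domain of $\X_D$ equals all of $\ourMinion_D^{(|A|)}$, I can select a canonical element $(M^*,\bmu^*)\in\ourMinion_D^{(|A|)}$ whose first column is a standard unit vector $\be_{\bar a}$ --- for instance $M^*$ with every column equal to $\be_1$ and $\bmu^*=\be_1$. Taking $\bar R=\un$, $\bar\bx$ to be the unary constraint at $(M^*,\bmu^*)$, and $\bar\ba=(\bar a)$, the canonical witness for $\bar\bx$ is $(M^*,\bmu^*)$ itself, which forces $i_{\bar\bx,\bar\ba}=1$ and hence $h_{\bar\bx,\bar\ba}(M,\bmu)=M\be_1$. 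Defining $g(M,\bmu)\coloneqq\bmu$ yields a homomorphism $\X_D\to\mathbb{Z}(\A)$ by $(c_3)$ and the same minor computation, and the pair $(\bxi,\bzeta)\coloneqq(Q_{\tilde\bx}\be_1,\bdelta_{\tilde\bx})$ verifies $1'$; crucially, $\supp(\bzeta)\subseteq\supp(\bxi)$ is exactly axiom $(c_4)$ applied to the witness $(Q_{\tilde\bx},\bdelta_{\tilde\bx})$.

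The main difficulty I anticipate is precisely this last alignment: the $\CLAP$ condition demands a single global pair $(\bar\bx,\bar\ba)$ whose homomorphism $h_{\bar\bx,\bar\ba}$ interacts well, through $\AIP$, with \emph{every} constraint of $\X_D$ simultaneously, whereas axiom $(c_4)$ only privileges the first column of each witness matrix. Anchoring the argument at a unary constraint whose witness has a standard-unit first column reconciles the two and makes the argument go through uniformly. Once the $\CLAP$ condition has been verified for every $\X_D$, Proposition~\ref{prop_compactness_argument} (together with the finiteness of $\Pol(\A,\B)^{(L)}$ for each $L$) delivers the desired minion homomorphism $\ourMinion\to\Pol(\A,\B)$.
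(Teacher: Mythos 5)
Your proposal is correct and follows essentially the same route as the paper: reduce via compactness (Proposition~\ref{prop_compactness_argument}) to the finite free structures $\mathbb{F}_{\ourMinion_D}(\A)$, verify the $\CLAP$ condition there using the witnesses $(Q_{\bx},\bdelta_{\bx})$, with $s^R$ given by the unit-vector columns (equivalent, by skeletality, to the paper's union of column supports), $h_{\bx,\ba}$ given by evaluation at a column equal to $\be_{\ba}$, and part $(II)$ anchored at the unary constraint on $(\be_1\bone_{\aleph_0}^T,\be_1)$ with $g(M,\bmu)=\bmu$ and $(c_4)$ supplying the support inclusion. One small wording nit: the index $i_{\bar\bx,\bar\ba}$ is not \emph{forced} to be $1$ (every column of $M^*$ equals $\be_1$), but you may simply choose it to be $1$ (the paper takes the minimum), and the argument goes through.
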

\begin{rem}
The proof of Theorem~\ref{CBLP_works_implies_minion_homo} proceeds essentially by establishing that
  the free structure $\mathbb{F}_\ourMinion(\A)$ has the $\CLAP$ condition as an
  instance of $\PCSP(\A,\B)$. However, some care is needed when handling Proposition~\ref{prop_2208_2504}, which only applies to \textit{finite} structures, while $\mathbb{F}_\ourMinion(\A)$ is not finite in general. 
To overcome this problem, we use a compactness argument tailored to our minion $\ourMinion$ discussed in Section~\ref{sec_compactness_argument}, which follows the ideas of~\cite{BBKO21}. 

We remark that the compactness argument for relational structures in the form stated in~\cite[Lemma A.6]{bgwz20} does not entirely fit our proof structure, as the element $(\be_1\bone_{\aleph_0}^T,\be_1)$ having the role of $\bar\bx$ in Definition~\ref{defn_CBLPC_2504_2212} does not belong to every induced substructure of  $\mathbb{F}_\ourMinion(\A)$. A different option would have been to use the general compactness argument known as the (uncountable version of the) \emph{compactness theorem of logic}~\cite{malcev1937untersuchungen}, that applies to all \emph{minion tests}\footnote{Cf.~Remark~\ref{rem_CLAP_is_a_minion_test}.} as derived in~\cite[Proposition~6]{cz23soda:minions} through~\cite{RorabaughTW17}.
\end{rem}
\begin{proof}[Proof of Theorem~\ref{CBLP_works_implies_minion_homo}]
Let $n=|A|$. For $D\in\N$, denote $\mathbb{F}_{{\ourMinion}_D}(\A)$ by
  $\mathbf{F}$ (where ${\ourMinion}_D$ is the sub-minion of $\ourMinion$ introduced in Section~\ref{sec_compactness_argument}). Hence, the domain of $\mathbf{F}$ is $\ourMinion_D^{(n)}$, which
  is finite. We claim that $\mathbf{F}$ has the $\CLAP$ condition as an instance
  of $\PCSP(\A,\B)$.

For each $R\in \sigma$ of arity $k$ and for each
  ${\btau}=((M_1,\bmu_1),\dots,(M_k,\bmu_k))\in {R}^{\mathbf{F}}$,
  take $(Q_{\btau},{\bdelta}_{\btau})\in \ourMinion_D^{(|R^\A|)}$ satisfying
  $(M_j,\bmu_j)=(Q_{\btau},{\bdelta}_{\btau})_{/\pi_j}$ $\forall j\in [k]$, where
  $\pi_j:R^\A\rightarrow A$ maps $\ba\in R^\A$ to its $j$-th coordinate; i.e., $M_j={{E}^{(R^\A,j)}}Q_{\btau}$ and $\bmu_j={{E}^{(R^\A,j)}}{\bdelta}_{\btau}$ $\forall j\in [k]$.
Given $R\in \sigma$ of arity $k$, consider the map
\begin{align*}
s^{R}:{R}^{\mathbf{F}}&\rightarrow\power({R}^\A)\setminus \{\emptyset\}\\
{\btau}&\mapsto \bigcup_{i\in\N}\supp(Q_{\btau} \be_i).
\end{align*}
Let us first check part $(I)$ of Definition~\ref{defn_CBLPC_2504_2212}.
Pick ${\btau}=((M_1,\bmu_1),\dots,(M_k,\bmu_k))\in {R}^{\mathbf{F}}$ and $\ba=(a_1,\dots,a_k)\in s^R({\btau})$. We have that $\ba\in\supp(Q_{\btau} \be_\alpha)$ for some $\alpha\in\N$, i.e., $\be_\ba^TQ_{\btau} \be_\alpha\neq 0$. Since $Q_{\btau}$ is skeletal, the set $L_{{\btau},\ba}=\{\ell\in\N:Q_{\btau} \be_\ell=\be_\ba\}$ is nonempty; let $\ell({\btau},\ba)\coloneqq\min(L_{{\btau},\ba})$. Consider the map
\begin{align*}
h_{{\btau},\ba}:\ourMinion_D^{(n)}&\rightarrow\mathbb{S}(A)\\
(\hat M,\hat\bmu)&\mapsto \hat M\be_{\ell({\btau},\ba)}.
\end{align*}
  We claim that $h_{{\btau},\ba}$ is a homomorphism from
  $\mathbf{F}$ to $\mathbb{S}(\A)$. Take $\tilde{R}\in\sigma$
  of arity $\tilde{k}$, and let
  $\tilde{\btau}=((\tilde{M}_1,\tilde{\bmu}_1),\dots,(\tilde{M}_{\tilde{k}},\tilde{\bmu}_{\tilde{k}}))\in\tilde{R}^{\mathbf{F}}$. Consider the pair $(Q_{\tilde{{\btau}}},{\bdelta}_{\tilde{{\btau}}})\in \ourMinion_D^{(|\tilde{R}^\A|)}$. We have that
\begin{align*}
h_{{\btau},\ba}(\tilde{{\btau}})&=(\tilde{M}_1\be_{\ell({\btau},\ba)},\dots,\tilde{M}_{\tilde{k}}\be_{\ell({\btau},\ba)})
=\left( {{E}^{(\tilde{R}^\A,1)}}Q_{\tilde{{\btau}}}\be_{\ell({\btau},\ba)},\dots,{{E}^{(\tilde{R}^\A,\tilde{k})}}Q_{\tilde{{\btau}}}\be_{\ell({\btau},\ba)}  \right).
\end{align*}
Since $Q_{\tilde{{\btau}}}\be_{\ell({\btau},\ba)}\in\mathbb{S}(\tilde{R}^\A)$, we deduce
  that $h_{{\btau},\ba}(\tilde{{\btau}})\in\tilde{R}^{\mathbb{S}(\A)}$, as wanted.
  Therefore, $h_{{\btau},\ba}$ is a homomorphism from $\mathbf{F}$ to $\mathbb{S}(\A)$. We now check that the requirements $1$ and $2$ in Definition~\ref{defn_CBLPC_2504_2212} are met. The former follows from
\begin{align*}
h_{{\btau},\ba}((M_i,\bmu_i))=M_i\be_{\ell({\btau},\ba)}={{E}^{(R^\A,i)}}Q_{\btau} \be_{\ell({\btau},\ba)}={{E}^{(R^\A,i)}} \be_\ba=\be_{a_i} && \forall i\in [k].
\end{align*}
To check the latter requirement, take $\tilde R\in\sigma$ of arity $\tilde k$
  and $\tilde
  {\btau}=((\tilde{M}_1,\tilde{\bmu}_1),\dots,(\tilde{M}_{\tilde{k}},\tilde{\bmu}_{\tilde{k}}))\in
  \tilde{R}^{\mathbf{F}}$, and consider $\bxi\coloneqq Q_{\tilde{{\btau}}}\be_{\ell({\btau},\ba)}$. Observe that
\begin{align*}
\bullet &&&h_{{\btau},\ba}((\tilde{M}_i,\tilde{\bmu}_i))=\tilde{M}_i \be_{\ell({\btau},\ba)} = {{E}^{(\tilde{R}^\A,i)}}Q_{\tilde{{\btau}}}\be_{\ell({\btau},\ba)}={{E}^{(\tilde{R}^\A,i)}}\bxi && \forall i\in[\tilde{k}]\\
\bullet &&& \supp(\bxi)=\supp(Q_{\tilde{{\btau}}}\be_{\ell({\btau},\ba)})\subseteq \bigcup_{i\in\N}\supp(Q_{\tilde{{\btau}}}\be_i)=s^{\tilde{R}}(\tilde{{\btau}}).
\end{align*}
We now check part $(II)$ of Definition~\ref{defn_CBLPC_2504_2212}. Take $\un$ as $\bar{R}$, and observe that 
\begin{align*}
  \un^{\mathbf{F}}=
\{(M,\bmu)\in \ourMinion_D^{(n)}:\exists (Q,{\bdelta})\in\ourMinion_D^{(n)} \mbox{ such that }M={{E}^{(\un^\A,1)}}Q, \bmu={{E}^{(\un^\A,1)}}{\bdelta}\}=
\ourMinion_D^{(n)},
\end{align*}
where we have used that ${E}^{(\un^\A,1)}=I_n$. Consider the element $\bar{\tau}=(\be_1\bone_{\aleph_0}^T,\be_1)\in \ourMinion_D^{(n)}=\un^{\mathbf{F}}$. Using again that ${E}^{(\un^\A,1)}=I_n$, we see that $(Q_{\bar{\tau}},{\bdelta}_{\bar{\tau}})=\bar{\tau}$. We obtain
\begin{align*}
s^{\un}(\bar{\tau})=\bigcup_{i\in\N}\supp(\be_1\bone_{\aleph_0}^T\be_i)=
\bigcup_{i\in\N}\supp(\be_1)=\{1\}.
\end{align*}
Hence, we pick $\bar{a}=1$. Notice that
\begin{align*}
\ell(\bar{\tau},\bar{a})=\min\{\ell\in\N:\be_1\bone_{\aleph_0}^T\be_\ell=\be_1\}=
\min\{\ell\in\N:\be_1=\be_1\}=\min\N= 1.
\end{align*}
Consider the function
\begin{align*}
g:\ourMinion_D^{(n)}&\rightarrow\mathbb{Z}(A)\\
(\hat{M},\hat{\bmu})&\mapsto \hat{\bmu}.
\end{align*}
Following the same procedure as for $h_{{\btau},\ba}$, we easily check that $g$ is a
homomorphism from $\mathbf{F}$ to $\mathbb{Z}(\A)$. We now
verify that condition $1'$ of Definition~\ref{defn_CBLPC_2504_2212} is
satisfied. Given $\tilde{R}\in\sigma$ of arity $\tilde{k}$ and
$\tilde{{\btau}}=((\tilde{M}_1,\tilde{\bmu}_1),\dots,(\tilde{M}_{\tilde{k}},\tilde{\bmu}_{\tilde{k}}))\in\tilde{R}^{\mathbf{F}}$, let $\bxi\coloneqq Q_{\tilde{{\btau}}}\be_1\in\mathbb{S}(\tilde{R}^\A)$ and $\bzeta\coloneqq {\bdelta}_{\tilde{{\btau}}}\in\mathbb{Z}(\tilde{R}^\A)$. Then, given $i\in [\tilde{k}]$,
\begin{itemize}
\item[$\ast$] $\displaystyle h_{\bar{\tau},\bar{a}}((\tilde{M}_i,\tilde{\bmu}_i))=\tilde{M}_i\be_{\ell(\bar{\tau},\bar{a})}=
\tilde{M}_i\be_1=
{{E}^{(\tilde{R}^\A,i)}}Q_{\tilde{{\btau}}}\be_1=
{{E}^{(\tilde{R}^\A,i)}}\bxi$;
\item[$\ast$] $\displaystyle g((\tilde{M}_i,\tilde{\bmu}_i))=\tilde{\bmu}_i={{E}^{(\tilde{R}^\A,i)}}{\bdelta}_{\tilde{{\btau}}}={{E}^{(\tilde{R}^\A,i)}}\bzeta$;
\item[$\ast$] $\displaystyle \supp(\bzeta)=\supp({\bdelta}_{\tilde{{\btau}}})\subseteq \supp(Q_{\tilde{{\btau}}}\be_1)=\supp(\bxi)\subseteq \bigcup_{i\in \N}\supp(Q_{\tilde{{\btau}}}\be_i)=s^{\tilde{R}}(\tilde{{\btau}})$
\end{itemize}
where, for the first inclusion in the third line, we have used $(c_4)$ in Definition~\ref{def:minion}.

It follows that $\mathbf{F}$ has the $\CLAP$ condition as an
instance of $\PCSP(\A,\B)$, as claimed. Then, Proposition~\ref{prop_2208_2504}
implies that $\CLAP$ accepts $\textbf{F}$. Since, by hypothesis, $\CLAP$ solves
$\PCSP(\A,\B)$, we deduce that $\mathbb{F}_{\ourMinion_D}(\A)=\mathbf{F}\rightarrow\B$. By
Lemma~\ref{lem:free}, there exists a minion homomorphism from $\ourMinion_D$ to $\Pol(\A,\B)$. Finally, since the set of polymorphisms of $(\A,\B)$ of arity $L$ is finite for every $L\in\N$, Proposition~\ref{prop_compactness_argument} allows us to conclude that there exists a minion homomorphism from $\ourMinion$ to $\Pol(\A,\B)$.
\end{proof}

\begin{rem}
\label{rem_CLAP_is_a_minion_test}
It follows from the proofs of Theorems~\ref{minion_homo_implies_CBLP_works} and~\ref{CBLP_works_implies_minion_homo} that $\CLAP$ fits within the framework of \emph{minion tests} recently introduced in~\cite{cz23soda:minions}. More precisely, $\CLAP=\operatorname{Test}_{\ourMinion}$, which means that, for two $\sigma$-structures $\X$ and $\A$, $\CLAP(\X,\A)$ accepts if and only if $\X\to\mathbb{F}_{\ourMinion}(\A)$. Additionally, it follows from~\cite{cz23soda:minions} that $\CLAP$ is a \emph{conic} minion test, which essentially means that one can build a progressively tighter hierarchy of relaxations based on $\CLAP$ whose $k$-th level correctly classifies all instances of size $k$. 
\end{rem}

\section{H-symmetric polymorphisms}
\label{sec_H_symmetric}

This section contains the proof of Theorem~\ref{thm:Hsymm}. We remark that the machinery developed here can be extended to the more general setting of $H$-block-symmetric polymorphisms, at the only cost of dealing with a more cumbersome notation. This is done in Appendix~\ref{sec:block_sym_pol} and results in Theorem~\ref{thm_CLAP_works_H_block_sym} -- a slightly stronger version of Theorem~\ref{thm:Hsymm}.

We shall need two helpful lemmas.
The first lemma shows a property of $H$-symmetric functions that will be useful in the proof of Theorem~\ref{thm:Hsymm}. Throughout this section, without loss of generality, we consider $A=[n]$.
\begin{lem}
\label{lem_1347_1805}
Let $f:A^L\rightarrow B$ be $H$-symmetric for some $m\times n$ tie matrix $H$, with $m\in\N$. Consider two maps $\pi,\tilde{\pi}:[L]\rightarrow [n]$ such that ${P_\pi}\bone_L={P_{\tilde{\pi}}}\bone_L$ and the vector ${P_\pi}\bone_L$ is $H$-tieless. Then
\begin{align*}
f_{/\pi}(1,\dots,n)=f_{/\tilde{\pi}}(1,\dots,n).
\end{align*}
\end{lem}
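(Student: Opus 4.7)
The plan is to unfold both sides of the desired equality into evaluations of $f$ on concrete tuples in $A^L$, and then invoke the $H$-symmetry hypothesis directly. By definition of the minor operation~\eqref{eq_minor_functions}, with $L'=n$ and inputs $a_i=i$, we have
\begin{align*}
f_{/\pi}(1,\dots,n)=f(\pi(1),\dots,\pi(L))\quad\text{and}\quad f_{/\tilde{\pi}}(1,\dots,n)=f(\tilde{\pi}(1),\dots,\tilde{\pi}(L)).
\end{align*}
Denote $\ba\coloneqq(\pi(1),\dots,\pi(L))\in A^L$ and $\tilde{\ba}\coloneqq(\tilde{\pi}(1),\dots,\tilde{\pi}(L))\in A^L$; the goal becomes $f(\ba)=f(\tilde{\ba})$.

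The first step is the key observation that identifies the matrix product $P_\pi\bone_L$ with the multiplicity vector $\ba^\#$. Indeed, the $i$-th entry of $P_\pi\bone_L$ equals $|\pi^{-1}(i)|$, which is the number of indices $j\in [L]$ with $\pi(j)=i$, i.e., $\ba^\#_i$. Hence $P_\pi\bone_L=\ba^\#$ and similarly $P_{\tilde{\pi}}\bone_L=\tilde{\ba}^\#$. The first hypothesis $P_\pi\bone_L=P_{\tilde{\pi}}\bone_L$ therefore says $\ba^\#=\tilde{\ba}^\#$, i.e., $\ba$ and $\tilde{\ba}$ have the same multiset of entries.

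Second, I would translate the latter condition into the existence of a coordinate permutation: there exists a permutation $\sigma:[L]\to [L]$ such that $\tilde{\pi}(i)=\pi(\sigma(i))$ for all $i\in [L]$, equivalently $\tilde{\ba}=(\ba_{\sigma(1)},\dots,\ba_{\sigma(L)})$. By the definition of the minor operation, this gives $f(\tilde{\ba})=f_{/\sigma}(\ba)$.

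Finally, the second hypothesis -- that $P_\pi\bone_L$ is $H$-tieless -- now reads that $\ba^\#$ is $H$-tieless. Applying Definition~\ref{def_H_tieless_symmetric} to the $H$-symmetric function $f$, the permutation $\sigma$, and the tuple $\ba$, we conclude $f_{/\sigma}(\ba)=f(\ba)$, which together with the previous display yields $f(\tilde{\ba})=f(\ba)$ and hence the claim. This proof is essentially a matter of unpacking definitions, so no real obstacle is expected; the only point worth double-checking is the combinatorial identification $P_\pi\bone_L=\ba^\#$, which is the crux that lets the $H$-tieless assumption connect with the hypothesis of $H$-symmetry.
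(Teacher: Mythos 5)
Your proof is correct and follows essentially the same route as the paper's: identify $P_\pi\bone_L$ (resp.\ $P_{\tilde\pi}\bone_L$) with the multiplicity vector of the tuple $(\pi(1),\dots,\pi(L))$ (resp.\ $(\tilde\pi(1),\dots,\tilde\pi(L))$), build a coordinate permutation relating the two tuples from bijections between the fibres of $\pi$ and $\tilde\pi$, and invoke $H$-symmetry on the $H$-tieless tuple. The only (immaterial) difference is that you apply the symmetry identity at $\ba$ with a permutation $\sigma$ satisfying $\pi\circ\sigma=\tilde\pi$, whereas the paper applies it at $\tilde\ba$ with $\varphi$ satisfying $\tilde\pi\circ\varphi=\pi$.
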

\begin{proof}
For ${a}\in [n]$, we have
\begin{align*}
|\pi^{-1}({a})|=\sum_{i\in [L]}({P_\pi})_{{a}i}=\sum_{i\in [L]}\be_{a}^T{P_\pi}\be_i=
\be_{a}^T{P_\pi}\bone_L
=
\be_{a}^T{P_{\tilde\pi}}\bone_L
=
|\tilde\pi^{-1}({a})|.
\end{align*}
Hence, we can consider bijections $\varphi_{a}:\pi^{-1}({a})\rightarrow\tilde{\pi}^{-1}({a})$ for each ${a}\in [n]$. Clearly, their union 
\begin{align*}
\varphi=\bigcup_{{a}\in [n]}\varphi_{a}:[L]\rightarrow [L]
\end{align*}
 is also a bijection. For each $i\in [L]$, we have
\begin{align*}
(\tilde{\pi}\circ \varphi)(i)=\tilde{\pi}(\varphi(i))=
\tilde{\pi}(\varphi_{\pi(i)}(i))=\pi(i) 
\end{align*}
and, hence, $\tilde{\pi}\circ\varphi=\pi$.
Let $\tilde{\ba}=(\tilde{\pi}(1),\dots,\tilde{\pi}(L))$. Notice that, for each $a\in [n]$,
\begin{align*}
\be_a^T\tilde{\ba}^\#=|\{i\in [L]:\tilde{\pi}(i)=a\}|=\be_a^T{P_{\tilde{\pi}}}\bone_L
\end{align*} 
and, therefore, $\tilde{\ba}^\#={P_{\tilde{\pi}}}\bone_L={P_\pi}\bone_L$, which is $H$-tieless.
Using that $f$ is $H$-symmetric, we find
\begin{align*}
f_{/\tilde{\pi}}(1,\dots,n)=f(\tilde{\ba})=f_{/\varphi}(\tilde{\ba})=(f_{/\varphi})_{/\tilde{\pi}}(1,\dots,n)=
f_{/\tilde{\pi}\circ\varphi}(1,\dots,n)=f_{/\pi}(1,\dots,n),
\end{align*}
as required.
\end{proof}

One intriguing property of skeletal matrices is that they can be simultaneously reduced to $H$-tieless vectors, in the sense of the next lemma. We say that a vector is \emph{finitely supported} if it only has
a finite number of nonzero entries.
\begin{lem}[Tiebreak Lemma]
\label{lemma_tie_terminator}
For $k,p,m\in\N$, let $M_1,\dots,M_k\in\Q^{p,\aleph_0}$ be skeletal matrices, and let $H$ be an $m\times p$ tie matrix. Then there exists a stochastic finitely supported vector $\bv\in\Q^{\aleph_0}$ with $\be_1^T\bv>0$ such that $M_j\bv$ is $H$-tieless for any $j\in[k]$. 
\end{lem}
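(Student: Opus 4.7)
The plan is to reduce to a finite-dimensional setting built from the skeletons of the $M_j$, and to locate $\bv$ by a genericity argument supplemented by a case analysis that leverages the tielessness of the columns of $H$.

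First, for each $j\in[k]$, let $S_j\subseteq[p]$ be the set of indices of nonzero rows of $M_j$; skeletality provides, for every $r\in S_j$, a column index $c_{j,r}\in\N$ with $M_j\be_{c_{j,r}}=\be_r$. Define $D\coloneqq\{1\}\cup\bigcup_{j\in[k],\,r\in S_j}\{c_{j,r}\}$, a finite set containing $1$, and look for $\bv$ supported on $D$ (then viewed as an element of $\Q^{\aleph_0}$ by zero-padding). Set $\tilde P\coloneqq\{\bv\in\R^D:\bone_D^T\bv=1\}$ and let $P^\circ\coloneqq\{\bv\in\tilde P:v_d>0\;\forall d\in D\}$ be its relative interior. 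For each $j\in[k]$ and each pair $r\neq r'\in[m]$, consider the linear functional $F_{j,r,r'}(\bv)\coloneqq(HM_j\bv)_r-(HM_j\bv)_{r'}$ and its zero locus $L_{j,r,r'}$; the vector $HM_j\bv$ fails to be tieless exactly when some $\bv\in L_{j,r,r'}$ satisfies $(HM_j\bv)_r\neq 0$.

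The main technical step (and the principal obstacle) is the degenerate case in which $F_{j,r,r'}$ vanishes identically on $\tilde P$. Evaluating at $\bv=\be_{c_{j,\tilde r}}$ for each $\tilde r\in S_j$ gives $H_{r,\tilde r}-H_{r',\tilde r}=F_{j,r,r'}(\be_{c_{j,\tilde r}})=0$; since $r\neq r'$ and the $\tilde r$-th column of $H$ is tieless by hypothesis, this forces $H_{r,\tilde r}=0$ for each $\tilde r\in S_j$. Because $M_j\bv$ is supported on $S_j$ (the zero rows of $M_j$ contributing nothing), this yields
\[
(HM_j\bv)_r=\sum_{\tilde r\in S_j}H_{r,\tilde r}(M_j\bv)_{\tilde r}=0
\]
for every $\bv$ supported on $D$, so such degenerate triples never spoil tielessness. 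This is where the tie-matrix assumption on $H$ enters in an essential way.

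For the remaining triples $(j,r,r')$, the functional $F_{j,r,r'}$ is not identically zero on $\tilde P$, so $L_{j,r,r'}\cap\tilde P$ is a proper affine subspace of $\tilde P$, cut out by a rational linear equation (since $H$ and $M_j$ have rational entries). Only finitely many such triples exist, so their union is closed and nowhere dense in $\tilde P$, and misses an open dense subset of $P^\circ$. Rational vectors are dense in the rational affine subspace $\tilde P$, so one can pick a rational $\bv\in P^\circ$ avoiding all proper $L_{j,r,r'}$. The resulting $\bv$ is finitely supported on $D$, stochastic, and satisfies $\be_1^T\bv=v_1>0$ since $\bv\in P^\circ$. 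For every $(j,r,r')$, either the degenerate-case argument forces $(HM_j\bv)_r=0$, or $\bv\notin L_{j,r,r'}$ forces $(HM_j\bv)_r\neq(HM_j\bv)_{r'}$; in either case no tie arises, so $HM_j\bv$ is tieless for every $j\in[k]$, as required.
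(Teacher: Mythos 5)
Your proof is correct, but it takes a genuinely different route from the paper's. The paper proves the Tiebreak Lemma by an extremal/perturbation argument: it takes a rational stochastic finitely supported vector $\bv$ with $\be_1^T\bv>0$ minimising the total number of ties among the vectors $HM_j\bv$, and, if a tie $\be_i^THM_j\bv=\be_{i'}^THM_j\bv\neq 0$ survives, it perturbs $\bv$ to $(1-\epsilon)\bv+\epsilon\be_\alpha$, where $M_j\be_\alpha=\be_\beta$ is a skeleton column hitting a row $\beta$ with $\be_i^TH\be_\beta\neq 0$; tielessness of the column $H\be_\beta$ breaks that tie while, for $\epsilon$ small, no new ties appear, contradicting minimality. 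You instead fix in advance a finite support $D$ consisting of the index $1$ together with one skeleton column index per nonzero row of each $M_j$, and run a genericity argument on the rational affine slice $\tilde P$ of stochastic vectors supported on $D$: each potential tie functional $F_{j,r,r'}$ either vanishes identically on $\tilde P$ -- in which case evaluating at the skeleton unit vectors and using tielessness of the columns of $H$ forces $H_{r,\tilde r}=0$ on $\supp$-relevant rows, so $(HM_j\bv)_r\equiv 0$ and the tie is harmless -- or cuts out a proper rational affine subspace, and a rational point of the relative interior avoiding the finitely many such subspaces does the job; this point also has $v_1>0$, giving $\be_1^T\bv>0$. Both arguments use exactly the same two ingredients (skeleton columns being unit vectors, and tielessness of the columns of $H$), but your one-shot dimension/genericity argument makes the finite support explicit and shows that a generic point of the simplex over $D$ works, whereas the paper's minimisation argument avoids fixing a support and has a more algorithmic, tie-reducing flavour. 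Your handling of the degenerate case (including the possibility $S_j=\emptyset$, where the sum is empty and the conclusion is trivial) and of the density of rational points in $\tilde P$ is sound, so no gap remains.
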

\begin{proof}
Let $\Omega$ be the set of rational stochastic finitely supported vectors of size ${\aleph_0}$ whose first entry is nonzero, and consider the map
\begin{align*}
f:\Omega&\rightarrow \N_0\\
\hat \bv&\mapsto \sum_{j\in [k]}|\{(i,i')\in [m]^2:i\neq i' \mbox{ and } \be_i^THM_j\hat{\bv}=\be_{i'}^THM_j\hat{\bv}\neq 0\}|.
\end{align*}
In other words, $f(\hat{\bv})$ counts the total number of ties in the set of vectors $\{HM_j\hat{\bv}:j\in[k]\}$.
Let $\bv$ attain the minimum of $f$ over $\Omega$. If $f(\bv)=0$, we are done.
  Otherwise, let $j\in [k]$, $i,i'\in [m]$ be such that $i\neq i'$ and
  $\be_i^THM_j\bv=\be_{i'}^THM_j\bv\neq 0$. From $\be_i^THM_j\bv\neq 0$, we see that
  $\exists \beta\in [p]$ such that $\be_i^TH\be_\beta\neq 0$ and $\be_\beta^TM_j\bv\neq
  0$. In particular, we have $\be_\beta^TM_j\neq \bzero_{\aleph_0}^T$; since $M_j$ is skeletal, this implies that $M_j\be_\alpha=\be_\beta$ for some $\alpha\in \N$. For $\epsilon\in\Q$, $0<\epsilon< 1$, consider the vector $\bv_\epsilon=(1-\epsilon)\bv+\epsilon \be_\alpha$. Observe that $\bv_\epsilon\in\Omega$. For $g\in [k]$, we have $HM_g\bv_\epsilon=(1-\epsilon)HM_g\bv+\epsilon HM_g\be_\alpha$. By choosing $\epsilon$ sufficiently small, we can assume that, for each $g\in [k]$, $HM_g\bv_\epsilon$ does not have new ties other than those in $HM_g\bv$. Moreover, 
\begin{align*}
HM_j\bv_\epsilon&= (1-\epsilon)HM_j\bv+\epsilon HM_j \be_\alpha= (1-\epsilon)HM_j\bv + \epsilon H\be_\beta
\intertext{and, hence,}
\be_i^THM_j\bv_\epsilon &=(1-\epsilon)\be_i^THM_j\bv+\epsilon \be_i^TH\be_\beta
=
(1-\epsilon)\be_{i'}^THM_j\bv+\epsilon \be_i^TH\be_\beta\\
&\neq
(1-\epsilon)\be_{i'}^THM_j\bv+\epsilon \be_{i'}^TH\be_\beta
=
\be_{i'}^THM_j\bv_\epsilon,
\end{align*}
where the disequality follows from $\be_i^TH\be_\beta\neq 0$ and from the fact that $H\be_\beta$ is a tieless vector by the definition of tie matrix. We conclude that $f(\bv_\epsilon)<f(\bv)$, which contradicts our assumption.
\end{proof}

\begin{thm*}[Theorem~\ref{thm:Hsymm} restated]
Let $(\A,\B)$ be a PCSP template and suppose $\Pol(\A,\B)$ contains
  $H$-symmetric operations of arbitrarily large arity for some $m\times |A|$ tie
  matrix $H$, $m\in\N$. Then there exists a minion homomorphism from
  $\ourMinion$ to $\Pol(\A,\B)$. 
\end{thm*}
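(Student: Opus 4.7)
The proof would proceed in two stages: a compactness reduction, followed by an explicit construction of the minion homomorphism using a single $H$-symmetric polymorphism of large arity.

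First, by Proposition~\ref{prop_compactness_argument} it suffices to construct, for each $D\in\N$, a minion homomorphism $\xi_D:\ourMinion_D\to\Pol(\A,\B)$, since $\Pol(\A,\B)^{(L)}$ is finite for every $L$. Within each $\ourMinion_D$ one can apply a further K\H{o}nig-style compactness step, analogous to Proposition~\ref{prop_compactness_argument} but over the arity bound $K$, reducing the task to constructing consistent partial homomorphisms $\xi_{D,K}$ on the finite collections $\bigcup_{L\leq K}\ourMinion_D^{(L)}$ (closed under minors whose target arity is at most $K$) and then stitching them into a single $\xi_D$.

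Given $D,K\in\N$, the plan is to pick an $H$-symmetric polymorphism $g\in\Pol(\A,\B)$ of sufficiently large arity $N$ (to be specified). Applying the Tiebreak Lemma (Lemma~\ref{lemma_tie_terminator}) to the finite family of skeletal matrices $\{P_\ba M:(M,\bmu)\in\bigcup_{L\leq K}\ourMinion_D^{(L)},\ \ba\in A^L\}$ will produce a single rational stochastic finitely supported vector $\bv$ with $\be_1^T\bv>0$ such that $P_\ba M\bv$ is $H$-tieless for every relevant pair $(M,\ba)$; skeletality is preserved under left-multiplication by selection matrices $P_\ba$, as in the proof that $\ourMinion$ is a minion. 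I would then choose $N$ a common multiple of the denominators of $M\bv$ across the family, so that $\bw_M\coloneqq NM\bv\in\N^L$ with $\bone_L^T\bw_M=N$. For each $(M,\bmu)$ of arity $L\leq K$, set $\xi_{D,K}(M,\bmu)\coloneqq g_{/\sigma_M}$ where $\sigma_M:[N]\to[L]$ is any map with fibres of sizes $(\bw_M)_i$. By Lemma~\ref{lem_1347_1805} applied to $g$, the function $g_{/\sigma_M}(\ba)$ depends only on the multiplicity vector on $A$, which is $NP_\ba M\bv$ and is $H$-tieless by construction; hence $\xi_{D,K}(M,\bmu)$ does not depend on the particular choice of $\sigma_M$.

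That $\xi_{D,K}(M,\bmu)\in\Pol(\A,\B)$ is automatic because every minor of a polymorphism is a polymorphism. For the minor law, one uses the algebraic identity $P_{\ba'}M_{/\pi}=P_{\ba'\circ\pi}M$, which shows that the Tiebreak conditions required for $M_{/\pi}$ are a subset of those required for $M$, so the common $\bv$ serves both. With the same $\bv$ one gets $\bw_{M_{/\pi}}=NM_{/\pi}\bv=P_\pi\bw_M$; a second application of Lemma~\ref{lem_1347_1805} then equates $\xi_{D,K}(M,\bmu)_{/\pi}$ with $\xi_{D,K}((M,\bmu)_{/\pi})$ pointwise on each $\ba'\in A^{L'}$, since the multiplicity vector on $A$ in both cases equals $NP_{\ba'\circ\pi}M\bv$. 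Note that $\bmu$ plays no active role: the homomorphism effectively factors through the matrix component, and condition $(c_4)$ of Definition~\ref{def:minion} is only used to ensure that the set of valid matrices $M$ is preserved under minors. The main obstacle is precisely the simultaneous choice of $\bv$ across all of $\ourMinion_D$---the Tiebreak Lemma as stated applies only to a finite family, forcing the two-level compactness argument and careful canonicalisation so that local $\xi_{D,K}$'s patch into a globally consistent $\xi_D$.
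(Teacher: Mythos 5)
There is a genuine gap, and it is exactly at the point where you declare that ``$\bmu$ plays no active role.'' Your construction requires an $H$-symmetric polymorphism $g$ whose arity is \emph{exactly} $N$, where $N$ is a common multiple of the denominators of the vectors $M\bv$ over your finite family (you need $\bone_L^T\bw_M=N$ so that the fibre sizes of $\sigma_M:[N]\to[L]$ sum to the arity of $g$). But the hypothesis of Theorem~\ref{thm:Hsymm} only provides $H$-symmetric polymorphisms of \emph{arbitrarily large} arity, not of every arity, and in particular not of any prescribed arity $N$ or of any prescribed multiple of it. This mismatch is precisely what the affine vector $\bmu$ is for in the paper's proof: one takes whatever $H$-symmetric $f$ of arity $c\geq N^2$ is available, writes $c=N\alpha+\beta$ with $0\le\beta<N$, and uses fibre sizes $\be_i^T(\alpha N M\bv+\beta\bmu)$, which is linear in $(M,\bmu)$ (so minors are preserved) and sums to $c$ because $\bone_L^T\bmu=1$. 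Condition $(c_4)$ is then needed not merely ``to keep the set of matrices closed under minors'' but to prove these fibre sizes are nonnegative, and the quantitative ingredients you omit (the factor $2\lceil\sigma_1^H+1\rceil D^2N'$ built from the largest singular value of $H$ and the bound $\bone_L^T|\bmu|\le D$) are what guarantee that the perturbation $\beta H P_{\pi_{\ba}\circ\pi}\bmu$ neither creates ties in $\alpha N H P_{\pi_{\ba}\circ\pi}M\bv$ nor spoils anything else. Without a canonical linear way to distribute the remainder $\beta$ (rounding breaks compatibility with $P_\pi$), your construction only goes through under the strictly stronger hypothesis that $H$-symmetric polymorphisms of \emph{all} arities exist -- which is exactly the situation isolated in Remark~\ref{rem_H_sym_all_arities}, where indeed the homomorphism factors through the matrix component. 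As written, your argument proves that weaker statement, not the theorem.

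Two secondary remarks. First, your extra K\H{o}nig-style compactness layer over the arity bound $K$ inside each $\ourMinion_D$ is avoidable: since every evaluation of $\xi_D(M,\bmu)$ at a tuple $\ba$ factors through the arity-$|A|$ minor $M_{/\pi_{\ba}\circ\pi}$, and $\ourMinion_D$ is closed under minors, it suffices to apply the Tiebreak Lemma once to the finite set $S=\{M:(M,\bmu)\in\ourMinion_D^{(|A|)}\}$; a single $\bv$ and a single polymorphism then define $\xi_D$ on all of $\ourMinion_D$ simultaneously, and only the one compactness step of Proposition~\ref{prop_compactness_argument} is needed. Second, your use of the identity $P_{\ba'}M_{/\pi}=P_{\ba'\circ\pi}M$ and of Lemma~\ref{lem_1347_1805} for well-definedness and for the minor law is sound and mirrors the paper's argument for the matrix part; the defect is solely the missing $\bmu$-correction for arities.
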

\begin{rem}
Before proving Theorem~\ref{thm:Hsymm}, we provide some intuition on the construction of the minion homomorphism whose existence shall establish the result. First, one fixes an $H$-symmetric polymorphism $f$. Then, the image of an $L$-ary element $(M,\bmu)$ of $\ourMinion$ under the homomorphism is the function that $(i)$ takes a tuple $(a_1,\dots,a_L)$ of variables in $A$ as input, $(ii)$ \emph{deforms} the tuple by changing the frequency of each variable according to the information carried by $M$ and $\bmu$, and $(iii)$ returns as output the evaluation of $f$ on the deformed tuple. The deformation in step $(ii)$ is encoded by the map $\varphi$ defined in~\eqref{expr_P_phi}. Essentially, $\varphi$ decides what frequency to assign to a variable $a_i$ on the basis of the weight of $i$ in the probability distribution $M\bv$ -- where $\bv$ is the \emph{tie-breaking} vector from Lemma~\ref{lemma_tie_terminator}. 
 The integer distribution $\bmu$ is also taken into account by $\varphi$, and its role is essentially to fill the gap between the size of the deformed tuple obtained above and the arity of $f$. If $\Pol(\A,\B)$ is rich enough to provide $H$-symmetric polymorphisms of whichever arity we need, $\bmu$ is inessential (cf.~Remark~\ref{rem_H_sym_all_arities}).  
\end{rem}

\begin{proof}[Proof of Theorem~\ref{thm:Hsymm}]
For $D\in\N$, consider the subminion $\ourMinion_D$ of $\ourMinion$ described in Section~\ref{sec_compactness_argument}. Observe that $S=\{M:(M,\bmu)\in\ourMinion_D^{(n)}\}$ is a finite set of skeletal matrices. Therefore, we can apply the Tiebreak Lemma~\ref{lemma_tie_terminator} to find a stochastic finitely supported vector ${\bv}\in \Q^{\aleph_0}$ with $\be_1^T{\bv}>0$ such that $M{\bv}$ is $H$-tieless for any $M\in S$. Since ${\bv}$ is finitely supported, we can find $N'\in\N$ such that $N'{\bv}$ has integer entries. Let $\sigma_1^H$ denote the largest singular value of $H$ -- i.e., the square root of the largest eigenvalue of $H^TH$. Set $N=2\lceil\sigma_1^H+1\rceil D^2N'$, and let $f$ be an $H$-symmetric polymorphism of arity ${c}\geq N^2$. Write ${c}=N{\alpha}+{\beta}$ with ${\alpha},{\beta}\in \N_0$, ${\beta}\leq N-1$. Note that $N^2\leq {c}= N{\alpha}+{\beta}\leq N{\alpha}+N-1<N({\alpha}+1)$, so $N<{\alpha}+1$ and, hence, ${\beta}<{\alpha}$.

Consider the function 
\begin{align*}
\xi_D:\ourMinion_D\rightarrow\Pol(\A,\B)
\end{align*} 
defined as follows. Given $L\in\N$ and $(M,\bmu)\in\ourMinion_D^{(L)}$, take the map $\varphi:[{c}]\rightarrow [L]$ such that the corresponding $L\times {c}$ matrix ${P_\varphi}$ is 
\begin{align}
\label{expr_P_phi}
{P_\varphi}=
\begin{pmatrix}
\bone^T_{\be_1^T({\alpha}NM{\bv}+{\beta}\bmu)} & \bzero^T & \dots & \bzero^T \\ 
\bzero^T & \bone^T_{\be_2^T({\alpha}NM{\bv}+{\beta}\bmu)} & \dots & \bzero^T \\ 
\vdots & \vdots & \ddots & \vdots \\ 
\bzero^T & \bzero^T & \dots & \bone^T_{\be_L^T({\alpha}NM{\bv}+{\beta}\bmu)}
\end{pmatrix}.
\end{align}
To verify that~\eqref{expr_P_phi} is well defined, observe first that
\begin{align*}
\sum_{i=1}^L \be_i^T({\alpha}NM{\bv}+{\beta}\bmu)=\bone_L^T({\alpha}NM{\bv}+{\beta}\bmu)={\alpha}N\bone_L^TM{\bv}+{\beta}\bone_L^T\bmu={\alpha}N\bone_{\aleph_0}^T{\bv}+{\beta}={\alpha}N+{\beta}={c}.
\end{align*}
Moreover, for each $i\in [L]$, $\be_i^T({\alpha}NM{\bv}+{\beta}\bmu)=\be_i^T(2{\alpha}\lceil\sigma_1^H+1\rceil D(DM)(N'{\bv})+{\beta}\bmu)$ is an integer. If $\be_i^T({\alpha}NM{\bv}+{\beta}\bmu)$ was negative, then $\be_i^T\bmu<0$. By the requirement $(c_4)$ in Definition~\ref{def:minion}, this would imply that $\be_i^TM\be_1>0$ and, hence, $0<\be_i^TM\be_1\be_1^T{\bv}\leq \be_i^TM{\bv}$. As a consequence, $\be_i^T(DM)(N'{\bv})\geq 1$ so that
\begin{align*}
\be_i^T({\alpha}NM{\bv}+{\beta}\bmu)=2{\alpha}\lceil\sigma_1^H+1\rceil D\be_i^T(DM)(N'{\bv})+{\beta}\be_i^T\bmu\geq 2{\alpha}\lceil\sigma_1^H+1\rceil D+{\beta}\be_i^T\bmu\geq {\alpha}D-{\beta}D>0,
\end{align*}
which is a contradiction. In conclusion, the numbers $\be_i^T({\alpha}NM{\bv}+{\beta}\bmu)$ are nonnegative integers summing up to ${c}$, so~\eqref{expr_P_phi} is well defined.

We define $\xi_D((M,\bmu))\coloneqq f_{/\varphi}$. Clearly, $\xi_D((M,\bmu))\in \Pol(\A,\B)$. We claim that the map $\xi_D$ is a minion homomorphism. It is straightforward to check that $\xi_D$ preserves arities so, to conclude, we need to show that it also preserves minors. Take $L'\in\N$ and choose a map $\pi:[L]\rightarrow [L']$. Letting $\tilde{\varphi}:[{c}]\rightarrow [L']$ be the map corresponding to the matrix
\begin{align*}
{P_{\tilde\varphi}}=
\begin{pmatrix}
\bone^T_{\be_1^T({\alpha}N{P_\pi}M{\bv}+{\beta}{P_\pi}\bmu)} & \bzero^T & \dots & \bzero^T \\ 
\bzero^T & \bone^T_{\be_2^T({\alpha}N{P_\pi}M{\bv}+{\beta}{P_\pi}\bmu)} & \dots & \bzero^T \\ 
\vdots & \vdots & \ddots & \vdots \\ 
\bzero^T & \bzero^T & \dots & \bone^T_{\be_{L'}^T({\alpha}N{P_\pi}M{\bv}+{\beta}{P_\pi}\bmu)}
\end{pmatrix}, 
\end{align*}
we see that $\xi_D((M,\bmu)_{/\pi})=f_{/\tilde{\varphi}}$. Moreover, $\xi_D((M,\bmu))_{/\pi}=(f_{/\varphi})_{/\pi}=f_{/\pi\circ\varphi}$, where $\varphi$ corresponds to the matrix ${P_\varphi}$ in~\eqref{expr_P_phi}. Take ${\ba}=(a_1,\dots,a_{L'})\in A^{L'}$, and consider the map 
\begin{align*}
\pi_{\ba}:[L']&\rightarrow [n]\\
i&\mapsto a_i.
\end{align*}
Observe that 
\begin{align}
\label{eqn_f_phi_phi_tilde}
f_{/\tilde{\varphi}}(\ba)&=&(f_{/\tilde{\varphi}})_{/\pi_{\ba}}(1,\dots,n)&=&f_{/\pi_{\ba}\circ \tilde{\varphi}}(1,\dots,n)&&\mbox{and, similarly,}\notag\\
f_{/\pi\circ\varphi}(\ba)&=&(f_{/\pi\circ\varphi})_{/\pi_{\ba}}(1,\dots,n)&=&f_{/\pi_{\ba}\circ\pi\circ\varphi}(1,\dots,n).
\end{align}
Notice that
\begin{align*}
{P_{\pi_{\ba}\circ \tilde{\varphi}}}\bone_{c}&=
{P_{\pi_{\ba}}}{P_{\tilde{\varphi}}}\bone_{c}=
{P_{\pi_{\ba}}}({\alpha}N{P_\pi}M{\bv}+{\beta}{P_\pi}\bmu)\\
&=
{P_{\pi_{\ba}}}{P_\pi}({\alpha}NM{\bv}+{\beta}\bmu)=
{P_{\pi_{\ba}}}{P_\pi}{P_\varphi}\bone_{c}={P_{\pi_{\ba}\circ\pi\circ\varphi}}\bone_{c}.
\end{align*}
We claim that the vector ${P_{\pi_{\ba}\circ \tilde{\varphi}}}\bone_{c}$ is $H$-tieless. Let ${\bu}=(u_i)=H{P_{\pi_{\ba}\circ \tilde{\varphi}}}\bone_{c}$; the claim is equivalent to ${\bu}$ being tieless. Let ${\bw}=(w_i)={\alpha}NH{P_{\pi_{\ba}\circ\pi}}M{\bv}$ and ${\bz}=(z_i)={\beta}H{P_{\pi_{\ba}\circ\pi}}\bmu$, so that ${\bu}={\bw}+{\bz}$. Choose $i,i'\in [m]$ such that $i\neq i'$ and $u_i\neq 0$. We need to show that $u_i\neq u_{i'}$. Suppose $w_i=0$. We can write $H^T\be_i=\sum_{g\in G}\lambda_g \be_g$ for $G=\supp(H^T\be_i)$, where each $\lambda_g$ is a positive integer (note that $G\neq\emptyset$ since, otherwise, $H^T\be_i=\bzero_n$, which would imply $u_i=0$). Let $F=(\pi_{\ba}\circ \pi)^{-1}(G)$. From $w_i=0$, we obtain
\begin{align*}
&&0=\be_i^TH{P_{\pi_{\ba}\circ\pi}}M{\bv}=(H^T\be_i)^T{P_{\pi_{\ba}\circ\pi}}M{\bv}
&=&
\sum_{g\in G}\lambda_g\be_g^T{P_{\pi_{\ba}\circ\pi}}M{\bv}
=
\sum_{g\in G}\lambda_g\sum_{j\in (\pi_{\ba}\circ\pi)^{-1}(g)}\be_j^TM{\bv}
\end{align*}
and, hence, the following chain of implications holds:
\begin{align*}
&&\begin{array}{lllll}
&\displaystyle
0=\sum_{g\in G}\sum_{j\in (\pi_{\ba}\circ\pi)^{-1}(g)}\be_j^TM{\bv}
=
\sum_{j\in F}\be_j^TM{\bv}
&\displaystyle
\hspace{.2cm}\Rightarrow\hspace{.2cm}
&\displaystyle
\be_j^TM{\bv}=0&{\forall j\in F}
\\[7pt]
\Rightarrow\hspace{.2cm}
&\displaystyle
\be_j^TM\be_1=0\hspace{1cm}{\forall j\in F}
&\hspace{.2cm}\Rightarrow\hspace{.2cm}
&\displaystyle
\be_j^T\bmu=0&{\forall j\in F}
\end{array}
\end{align*}
(where the second implication follows from $\be_1^T{\bv}>0$, and the third follows from $(c_4)$ in Definition~\ref{def:minion}). Hence,
\begin{align*}
z_i={\beta}\be_i^TH{P_{\pi_{\ba}\circ\pi}}\bmu
=
{\beta}\sum_{g\in G}\lambda_g\be_g^T{P_{\pi_{\ba}\circ\pi}}\bmu
=
{\beta}\sum_{g\in G}\lambda_g\sum_{j\in (\pi_{\ba}\circ\pi)^{-1}(g)}\be_j^T\bmu=0,
\end{align*}
so that $u_i=w_i+z_i=0$, a contradiction. Hence, $w_i>0$. Observe that $(M_{/\pi_{\ba}\circ\pi},\bmu_{/\pi_{\ba}\circ\pi})\in \ourMinion_D^{(n)}$ and, hence, $M_{/\pi_{\ba}\circ\pi}\in S$. By the choice of ${\bv}$, this implies that the vector ${P_{\pi_{\ba}\circ \pi}}M{\bv}=M_{/\pi_{\ba}\circ \pi}{\bv}$ is $H$-tieless; i.e., $H{P_{\pi_{\ba}\circ \pi}}M{\bv}$ is tieless. It follows that the vector $H{P_{\pi_{\ba}\circ \pi}}(DM)(N'{\bv})=\frac{1}{2{\alpha}\lceil\sigma_1^H+1\rceil D}{\bw}$ is also tieless; being it entrywise integer, and since $\frac{1}{2{\alpha}\lceil\sigma_1^H+1\rceil D}w_i>0$, we obtain
\begin{align*}
\left|\frac{1}{2{\alpha}\lceil\sigma_1^H+1\rceil D}w_i-\frac{1}{2{\alpha}\lceil\sigma_1^H+1\rceil D}w_{i'}\right|\geq 1&&\mbox{that yields}&&|w_i-w_{i'}|\geq 2{\alpha}\lceil\sigma_1^H+1\rceil D.
\end{align*}
Denote the $\ell_1$-norm\ and the $\ell_2$-norm of a vector by $\|\cdot\|_1$ and
$\|\cdot\|_2$, respectively. Recall that the largest singular value of a matrix is its spectral operator norm -- i.e., $\sigma_1^H=\max_{\bzero\neq\bx\in \R^{n}}\frac{\|H\bx\|_2}{\|\bx\|_2}$ (see~\cite{Horn2012matrix}). In particular, $\|H\bx\|_2\leq\sigma_1^H\|\bx\|_2$ for each vector $\bx$ of size $n$.
Using the Cauchy-Schwarz inequality and the fact that the $\ell_1$-norm of a vector is greater than or equal to its $\ell_2$-norm, we find
\begin{align*}
|z_i-z_{i'}|&={\beta}|(\be_i-\be_{i'})^TH{P_{\pi_{\ba}\circ\pi}}\bmu|\leq 
{\beta}\|\be_i-\be_{i'}\|_2\|H{P_{\pi_{\ba}\circ\pi}}\bmu\|_2
\leq
{\beta}\|\be_i-\be_{i'}\|_2\sigma_1^H\|{P_{\pi_{\ba}\circ\pi}}\bmu\|_2\\
&\leq 
{\beta}\|\be_i-\be_{i'}\|_1\lceil\sigma_1^H+1\rceil\|{P_{\pi_{\ba}\circ\pi}}\bmu\|_1
=
2{\beta}\lceil\sigma_1^H+1\rceil \bone_n^T|{P_{\pi_{\ba}\circ\pi}}\bmu|
\leq
2{\beta}\lceil\sigma_1^H+1\rceil \bone_n^T{P_{\pi_{\ba}\circ\pi}}|\bmu|\\
&=2{\beta}\lceil\sigma_1^H+1\rceil \bone_L^T|\bmu|
\leq 
2{\beta}\lceil\sigma_1^H+1\rceil D < 2{\alpha}\lceil\sigma_1^H+1\rceil D.
\end{align*}
We conclude the proof of the claim by noting that 
\begin{align*}
|u_i-u_{i'}|=|(w_i-w_{i'})-(z_{i'}-z_i)|\geq |w_i-w_{i'}|-|z_i-z_{i'}|>2{\alpha}\lceil\sigma_1^H+1\rceil D-2{\alpha} \lceil\sigma_1^H+1\rceil D=0,
\end{align*}
which implies $u_i\neq u_{i'}$. As a consequence, the vector ${P_{\pi_{\ba}\circ
\tilde{\varphi}}}\bone_{c}$ is $H$-tieless. We can then apply Lemma~\ref{lem_1347_1805} to conclude that $f_{/\pi_{\ba}\circ
\tilde{\varphi}}(1,\dots,n)=f_{/\pi_{\ba}\circ\pi\circ\varphi}(1,\dots,n)$. Hence, by~\eqref{eqn_f_phi_phi_tilde}, $f_{/\tilde{\varphi}}=f_{/\pi\circ\varphi}$. Therefore,
$\xi_D((M,\bmu)_{/\pi})=\xi_D((M,\bmu))_{/\pi}$, as required. It follows that $\xi_D$ is a minion homomorphism. 

Since the set of polymorphisms of $(\A,\B)$ of arity $L$ is finite for every $L\in\N$, we can apply Proposition~\ref{prop_compactness_argument} to conclude that there exists a minion homomorphism $\zeta:\ourMinion\rightarrow\Pol(\A,\B)$.
\end{proof}
\begin{rem}
\label{rem_H_sym_all_arities}
If $\Pol(\A,\B)$ contains $H$-symmetric operations of \emph{all} arities -- as it
  happens for the PCSP template $(\A,\B)$ from Example~\ref{ex:new1}, cf.~Example~\ref{ex:new2}
  -- the $\AIP$ part of $\CLAP$ is not required. Indeed, in that case, we can
  choose $f$ in the proof of Theorem~\ref{thm:Hsymm} to be an $H$-symmetric
  polymorphism of arity $c=N^2$, which implies $\beta=0$. Therefore, the affine
  vector $\bmu$ does not have any role in the definition of ${P_\varphi}$
  in~\eqref{expr_P_phi}, nor in the definition of the minion homomorphism
  $\xi_D$. It follows that, under this stronger hypothesis, $\Pol(\A,\B)$ admits
  a minion homomorphism from a minion $\hat\ourMinion$ whose $L$-ary elements
  are matrices in $\Q^{L, \aleph_0}$ satisfying the requirements
  $(c_1),(c_2),(c_5),(c_6)$ of Definition~\ref{def:minion}; notice that the
  projection $(M,\bmu)\mapsto M$ yields a natural minion homomorphism from
  $\ourMinion$ to $\hat\ourMinion$. The proofs of
  Theorems~\ref{minion_homo_implies_CBLP_works}
  and~\ref{CBLP_works_implies_minion_homo} can be straightforwardly modified to
  show that $\hat{\ourMinion}$ captures the power of the algorithm $\CBLP$ --
  i.e., the simplified version of $\CLAP$ that does not run $\BLP+\AIP$ at the
  end (cf.~the discussion in Section~\ref{sec:clap}).
\end{rem}

\section*{Acknowledgements}

We would like to thank the anonymous referees of both
the conference~\cite{Ciardo22:soda} and this full version of the paper.

\appendix

\section{Existing relaxations for PCSPs}
\label{app:relaxations}

Every CSP can be equivalently expressed as a $0$--$1$ integer program in a
standard way. 

If the variables are allowed to take values in $[0,1]$, we obtain the so-called
basic linear programming relaxation ($\BLP$)~\cite{Kun12:itcs}. This naturally extends
to PCSPs~\cite{BBKO21}, as we describe in Appendix~\ref{subsec:blp}.

If the variables are allowed to take integer values,
we obtain the so-called basic affine integer programming
relaxation ($\AIP$)~\cite{BG21}, studied in detail in~\cite{BBKO21}, as we describe in
Appendix~\ref{subsec:aip}.

A combination of the two relaxations, called the $\BLP+\AIP$ relaxation, was proposed in~\cite{bgwz20} and its power
characterised in~\cite{bgwz20}, as we describe in Appendix~\ref{subsec:blp-aip}.

Let $(\A,\B)$ be a PCSP template with signature $\sigma$ and let $\X$ be an
instance of $\PCSP(\A,\B)$. In all three relaxations described below, we assume without loss of generality that $\sigma$
contains a unary symbol $\un$ such that $\un^\X=X$, $\un^\A=A$, and $\un^\B=B$.
If this is not the case, the signature and the instance can be extended without
changing the set of solutions.

\subsection{BLP}\label{subsec:blp}
The \emph{basic linear programming relaxation} ($\BLP$) of $\X$, denoted by $\BLP(\X,\A)$,
is defined as follows.\footnote{The definition does not depend on $\B$ and is
the same as the $\BLP$ of an instance $\X$ of $\CSP(\A)$; the same holds for $\AIP$ and $\BLP+\AIP$.} The variables are
$\lambda_{\bx,R}(\ba)$ for every $R\in\sigma$, $\bx\in R^\X$,
and $\ba\in R^\A$, and the constraints are given in Figure~\ref{fig:blp}.
\begin{figure}[hbt]
\begin{align*}
  0\ \leq\ \lambda_{\bx,R}(\ba)\ &\leq\ 1 & \forall R\in\sigma, \forall \bx\in R^\X, \forall \ba\in R^\A\\
  \sum_{\ba\in R^\A} \lambda_{\bx,R}(\ba)\ &=\ 1 & \hspace*{3cm} \forall R\in\sigma, \forall \bx\in R^\X\\
  \sum_{\ba\in R^\A,a_i=a} \lambda_{\bx,R}(\ba)\ &=\ \lambda_{x_i,\un}(a) & \hspace*{3cm} \forall R\in\sigma, \forall \bx\in R^\X, \forall a\in A,\forall i\in [\ar(R)]
\end{align*}
\caption{Definition of $\BLP(\X,\A)$.}
\label{fig:blp}
\end{figure}

\noindent We say that $\BLP(\X,\A)$ accepts if the LP in Figure~\ref{fig:blp} is
feasible, and rejects otherwise. By construction, if $\X\to\A$ then
$\BLP(\X,\A)$ accepts. We say that $\BLP$ \emph{solves} $\PCSP(\A,\B)$ if for
every instance $\X$ accepted by $\BLP(\X,\A)$ we have $\X\to\B$. 

We denote by $\Qconv$ the minion of stochastic vectors
on $\Q$ with the minor
operation defined as in Section~\ref{sec:minion}; i.e., if $\bq\in\Qconv^{(L)}$
and $\pi:[L]\to[L']$, then $\bq_{/\pi}={P_\pi}\bq$, where $P_\pi$ is the $L'\times
L$ matrix whose $(i,j)$-th entry is $1$ if $\pi(j)=i$, and $0$ otherwise.

An $L$-ary operation $f:A^L\to B$ is called \emph{symmetric} if
$f(a_1,\ldots,a_L)=f(a_{\pi(1)},\ldots,a_{\pi(L)})$ for every $a_1,\ldots,a_L\in
A$ and every permutation $\pi:[L]\to[L]$.

The power of $\BLP$ for PCSPs is characterised in the following result.

\begin{thm}[\cite{BBKO21}]\label{thm:blp}
Let $(\A,\B)$ be a PCSP template. The following are equivalent:
  \begin{enumerate}
    \item[(1)] $\BLP$ solves $\PCSP(\A,\B)$.
    \item[(2)] $\Pol(\A,\B)$ admits a minion homomorphism from $\Qconv$.
    \item[(3)] $\Pol(\A,\B)$ contains symmetric operations of all arities.
  \end{enumerate}
\end{thm}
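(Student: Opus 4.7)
The plan is to prove the three-way equivalence via the chain $(2)\Rightarrow(1)$, $(1)\Rightarrow(2)$, $(2)\Rightarrow(3)$, and $(3)\Rightarrow(2)$. The overall structure mirrors the proofs of Theorems~\ref{minion_homo_implies_CBLP_works} and~\ref{CBLP_works_implies_minion_homo} but is considerably simpler, since only the $\BLP$ stage is involved without any $\AIP$ refinement. For $(2)\Rightarrow(1)$, given a minion homomorphism $\xi:\Qconv\to\Pol(\A,\B)$ and an instance $\X$ accepted by $\BLP$, I would extract from a feasible solution the map $x\mapsto (\lambda_{x,\un}(a))_{a\in A}$, whose codomain is $\Qconv^{(|A|)}$, and use the $\BLP$ marginal equalities to verify that this is a homomorphism $\X\to\mathbb{F}_{\Qconv}(\A)=\mathbb{S}(\A)$; composing with the homomorphism $\mathbb{F}_{\Qconv}(\A)\to\B$ provided by Lemma~\ref{lem:free} then yields $\X\to\B$. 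For $(1)\Rightarrow(2)$, the dual idea is to regard $\mathbb{F}_{\Qconv}(\A)$ itself as an instance of $\PCSP(\A,\B)$: the tautological assignment $\bq\mapsto\bq$ witnesses that $\BLP$ accepts it, so the hypothesis forces $\mathbb{F}_{\Qconv}(\A)\to\B$, and Lemma~\ref{lem:free} supplies the desired minion homomorphism. Because $\mathbb{F}_{\Qconv}(\A)$ has infinite domain, this step requires a compactness argument of the type developed in Proposition~\ref{prop_compactness_argument}, applied to sub-minions of $\Qconv$ consisting of vectors with bounded common denominator.

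The implication $(2)\Rightarrow(3)$ is immediate from the uniform-vector trick: setting $\bu_L=\tfrac{1}{L}\bone_L\in\Qconv^{(L)}$, every permutation matrix $P_\pi$ fixes $\bu_L$, so $\xi(\bu_L)_{/\pi}=\xi((\bu_L)_{/\pi})=\xi(\bu_L)$ for all $\pi:[L]\to[L]$, meaning that $\xi(\bu_L)$ is a symmetric polymorphism of arity $L$; letting $L$ vary yields symmetric polymorphisms of all arities.

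The main obstacle is $(3)\Rightarrow(2)$, where one must build a minion homomorphism from a family of symmetric polymorphisms of unbounded arity. The plan is, for each $D\in\N$, to define a minion homomorphism $\xi_D$ on the sub-minion $\Qconv_D$ of vectors whose entries have denominator dividing $D$ by fixing a symmetric polymorphism $f$ of arity $D$ and setting $\xi_D(\bq)=f_{/\varphi}$, where $\varphi:[D]\to[L]$ is any map whose $i$-th fibre has size $Dq_i$. Since the sum of the $Dq_i$'s equals $D$, such a $\varphi$ exists, and symmetry of $f$ ensures that $\xi_D(\bq)$ does not depend on the specific arrangement of the fibres; this same symmetry also gives compatibility with the minor operation, because for any $\pi:[L]\to[L']$ the composite $\pi\circ\varphi:[D]\to[L']$ has fibre sizes $D(P_\pi\bq)_j$, so both $\xi_D(\bq_{/\pi})$ and $\xi_D(\bq)_{/\pi}=f_{/\pi\circ\varphi}$ are minors of $f$ by maps with identical fibre sizes and hence coincide. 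Assembling the $\xi_D$'s into a single minion homomorphism $\Qconv\to\Pol(\A,\B)$ is then accomplished by a König-type compactness argument in the spirit of Proposition~\ref{prop_compactness_argument}, which handles the coherence between different choices of $f$ across denominators.
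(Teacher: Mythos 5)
This theorem is imported from~\cite{BBKO21}; the paper gives no proof of it, but your proposal is correct and follows essentially the same route as the cited source and as the paper's own analogous arguments for $\CLAP$. Specifically, your $(1)\Leftrightarrow(2)$ argument is the free-structure/compactness pattern of Lemma~\ref{lem:free}, Proposition~\ref{prop_compactness_argument} and Theorems~\ref{minion_homo_implies_CBLP_works}--\ref{CBLP_works_implies_minion_homo} (using $\mathbb{S}(\A)=\mathbb{F}_{\Qconv}(\A)$ and finite sub-minions of bounded denominator), and your $(3)\Rightarrow(2)$ construction via minors of a symmetric polymorphism of arity $D$ along maps with fibre sizes $Dq_i$ is exactly the special case of the proof of Theorem~\ref{thm:Hsymm} in which $H$ is a single row (so every tuple is $H$-tieless) and polymorphisms of all arities are available, making the $\bmu$-part and the tie-breaking machinery unnecessary.
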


\subsection{AIP}\label{subsec:aip}
The \emph{basic affine integer programming relaxation} ($\AIP$) of $\X$, denoted by
$\AIP(\X,\A)$, is defined as follows. The variables are
$\tau_{\bx,R}(\ba)$ for every
$R\in\sigma$, $\bx\in R^\X$, and $\ba\in R^\A$, and the constraints are given in
Figure~\ref{fig:aip}.
\begin{figure}[hbt]
\begin{align*}
  \tau_{\bx,R}(\ba)\ &\in\ \Z & \forall R\in\sigma, \forall \bx\in R^\X, \forall \ba\in R^\A\\
  \sum_{\ba\in R^\A} \tau_{\bx,R}(\ba)\ &=\ 1 & \hspace*{3cm} \forall R\in\sigma, \forall \bx\in R^\X\\
  \sum_{\ba\in R^\A,a_i=a} \tau_{\bx,R}(\ba)\ &=\ \tau_{x_i,\un}(a) & \hspace*{3cm} \forall R\in\sigma, \forall \bx\in R^\X, \forall a\in A, \forall i\in [\ar(R)]
\end{align*}
\caption{Definition of $\AIP(\X,\A)$.}
\label{fig:aip}
\end{figure}

\noindent We say that $\AIP(\X,\A)$ accepts if the affine program in
Figure~\ref{fig:aip} is feasible, and rejects otherwise. By construction, if
$\X\to\A$ then $\AIP(\X,\A)$ accepts. We say that $\AIP$ \emph{solves}
$\PCSP(\A,\B)$ if for every instance $\X$ accepted by $\AIP(\X,\A)$ we have
$\X\to\B$. 

We denote by $\Zaff$ the minion of affine vectors
on $\Z$ with the minor operation defined as in
Section~\ref{sec:minion}; i.e., if $\bz\in\Zaff^{(L)}$
and $\pi:[L]\to[L']$, then $\bz_{/\pi}={P_\pi}\bz$, where ${P_\pi}$ is the $L'\times
L$ matrix whose $(i,j)$-th entry is $1$ if $\pi(j)=i$, and $0$ otherwise.

A $(2L+1)$-ary operation $f:A^{2L+1}\to B$ is called \emph{alternating} if
$f(a_1,\ldots,a_{2L+1})=f(a_{\pi(1)},\ldots,\allowbreak a_{\pi(2L+1)})$ for every
$a_1,\ldots,a_{2L+1}\in A$ and every permutation $\pi:[2L+1]\to[2L+1]$ that preserves
parity, and $f(a_1,\ldots,a_{2L-1},a,a)=f(a_1,\ldots,a_{2L-1},a',a')$ for every
$a_1,\ldots,a_{2L-1},a,a'\in A$. Intuitively, an alternating operation is
invariant under permutations of its odd and even coordinates and has the
property that adjacent coordinates cancel each other out.

The power of $\AIP$ for PCSPs is characterised in the following result.

\begin{thm}[\cite{BBKO21}]\label{thm:aip}
Let $(\A,\B)$ be a PCSP template. The following are equivalent:
  \begin{enumerate}
    \item[(1)] $\AIP$ solves $\PCSP(\A,\B)$.
    \item[(2)] $\Pol(\A,\B)$ admits a minion homomorphism from $\Zaff$.
    \item[(3)] $\Pol(\A,\B)$ contains alternating operations of all odd arities.
  \end{enumerate}
\end{thm}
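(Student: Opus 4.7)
The plan is to establish the equivalence by combining $(1)\Leftrightarrow(2)$, which goes through the free-structure machinery of Lemma~\ref{lem:free}, with $(2)\Leftrightarrow(3)$, which pairs distinguished elements of $\Zaff$ with alternating polymorphisms.

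For $(1)\Leftrightarrow(2)$, the key observation is that $\mathbb{F}_{\Zaff}(\A)=\mathbb{Z}(\A)$ and, by the Remark in Section~\ref{sec:cnd}, $\AIP$ accepts an instance $\X$ if and only if $\X\to\mathbb{Z}(\A)$. For $(1)\Rightarrow(2)$, the identity map witnesses $\mathbb{Z}(\A)\to\mathbb{Z}(\A)$, so $\AIP$ accepts $\mathbb{Z}(\A)$; applying $(1)$ gives $\mathbb{Z}(\A)\to\B$, and Lemma~\ref{lem:free} converts this into a minion homomorphism $\Zaff\to\Pol(\A,\B)$. Conversely, for $(2)\Rightarrow(1)$, any instance $\X$ accepted by $\AIP$ satisfies $\X\to\mathbb{Z}(\A)$, which composes with $\mathbb{Z}(\A)\to\B$ (obtained from Lemma~\ref{lem:free} applied to $(2)$) to yield $\X\to\B$.

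For $(2)\Rightarrow(3)$, given a minion homomorphism $\xi\colon\Zaff\to\Pol(\A,\B)$, I would evaluate $\xi$ at the alternating affine vector $\bz^L=(1,-1,1,-1,\dots,1)\in\Z^{2L+1}$ to obtain a $(2L+1)$-ary polymorphism $f_L:=\xi(\bz^L)$. Invariance of $f_L$ under parity-preserving permutations is immediate because ${P_\pi}\bz^L=\bz^L$ for any such $\pi$ (odd coordinates of $\bz^L$ are all $+1$ and even ones are all $-1$), so $(f_L)_{/\pi}=\xi(\bz^L_{/\pi})=f_L$. For the cancellation identity, I would use the minor $\pi\colon[2L+1]\to[2L]$ identifying the last two coordinates, for which $\bz^L_{/\pi}$ has a zero in position $2L$ (from $-1+1$); the further minor $\sigma\colon[2L]\to[2L]$ sending position $2L$ to position $1$ leaves $\bz^L_{/\pi}$ fixed (adding the new zero to position $1$ is inert). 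By minor-preservation of $\xi$, this yields $(f_L)_{/\sigma\circ\pi}=(f_L)_{/\pi}$, which unpacked gives $f_L(a_1,\dots,a_{2L-1},a,a)=f_L(a_1,\dots,a_{2L-1},a_1,a_1)$ for every $a$, the required cancellation.

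For $(3)\Rightarrow(2)$, given alternating polymorphisms $g_{2M+1}$ of all odd arities, I would define $\xi\colon\Zaff\to\Pol(\A,\B)$ by sending $\bz\in\Zaff^{(L)}$ to the $L$-ary operation $\xi(\bz)(a_1,\dots,a_L):=g_{2M+1}(\vec{t})$, where $\vec{t}$ is built from the positive and negative parts $\bz^{+},\bz^{-}$ of $\bz$ (each $a_i$ appearing with multiplicity $\bz^+_i$ at odd positions and multiplicity $\bz^-_i$ at even positions of $\vec{t}$) and padded symmetrically with a fixed $a_0\in A$ to reach arity $2M+1$. The cancellation identity of $g_{2M+1}$ ensures the value is independent of both $M$ and the padded element. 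The main obstacle is verifying minor-preservation: an arbitrary minor map $P_\pi$ applied to $\bz$ may merge same-sign entries or force opposite-sign entries to collide. Same-sign merges translate to permuting $\vec{t}$ within one parity class and are handled by the parity-preserving invariance of $g_{2M+1}$; opposite-sign collisions translate to inserting equal-input pairs into $\vec{t}$ and are absorbed by the cancellation identity. A careful tracking of these two effects---which is the algebraic essence of ``alternating equals affine-integer-stable''---shows that $\xi(\bz_{/\pi})=\xi(\bz)_{/\pi}$, completing the construction.
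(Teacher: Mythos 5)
First, a point of context: the paper does not prove this statement — it is quoted from~\cite{BBKO21} in Appendix~\ref{subsec:aip} — so there is no internal proof to compare against. Judged on its own, your outline follows the standard route: free structures via Lemma~\ref{lem:free} for $(1)\Leftrightarrow(2)$ (using that $\mathbb{F}_{\Zaff}(\A)=\mathbb{Z}(\A)$ and that $\AIP$ accepts $\X$ iff $\X\to\mathbb{Z}(\A)$), evaluation at the alternating affine vectors for $(2)\Rightarrow(3)$, and multiplicities-plus-padding for $(3)\Rightarrow(2)$. The directions $(2)\Rightarrow(1)$ and $(2)\Rightarrow(3)$ are correct as written. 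However, there are two genuine gaps, both instances of the same missing idea: compactness.

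In $(1)\Rightarrow(2)$ you feed $\mathbb{Z}(\A)$ to the hypothesis that $\AIP$ solves $\PCSP(\A,\B)$. But $\mathbb{Z}(\A)$ has an infinite domain (all integer vectors of size $|A|$ summing to $1$), whereas instances — and hence the meaning of ``$\AIP$ solves $\PCSP(\A,\B)$'' — are confined to finite structures. You must first pass to the free structures of the finite, minor-closed sub-minions $\{\bz\in\Zaff^{(L)}:\bone_L^T|\bz|\leq D\}$, deduce a homomorphism to $\B$ from each of these finite instances, and then glue the resulting minion homomorphisms by a K\H{o}nig's-Lemma argument exactly as in Proposition~\ref{prop_compactness_argument}. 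The second gap is in $(3)\Rightarrow(2)$: you assert that the cancellation identity makes $\xi(\bz)$ ``independent of both $M$ and the padded element''. Independence of the pad is fine, but independence of $M$ does not follow: the identity $g_{2M+1}(a_1,\dots,a_{2M-1},a,a)=g_{2M+1}(a_1,\dots,a_{2M-1},a',a')$ relates values of $g_{2M+1}$ only to other values of $g_{2M+1}$; nothing in the hypothesis ties $g_{2M+1}$ to $g_{2M+3}$, so alternating operations of different arities need not cohere. Since $\bz_{/\pi}$ may naturally call for a different padding length than $\bz$, your verification of $\xi(\bz_{/\pi})=\xi(\bz)_{/\pi}$ silently compares two unrelated operations. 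The repair is the same compactness device: fix $M$, observe that $\bone^T|\bz_{/\pi}|\leq\bone^T|\bz|$ so the bounded sub-minions are closed under minors, build a minion homomorphism from each bounded sub-minion using the single operation $g_{2M+1}$ (where your analysis of same-sign merges versus opposite-sign collisions is the right one), and then pass to the limit via Proposition~\ref{prop_compactness_argument}.
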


\subsection{BLP+AIP}\label{subsec:blp-aip}
The \emph{combined basic LP and affine IP algorithm} ($\BLP+\AIP$) 
is presented in Algorithm~\ref{alg:blp-aip}.

\begin{algorithm}[tbh] 
	\SetAlgoLined
  \KwIn{\quad\ \ \ an instance $\X$ of $\PCSP(\A,\B)$ of signature $\sigma$}
  \KwOut{\quad \textsc{yes} if $\X\to\A$ and \textsc{no} if $\X\not\to\B$}
  \medskip
  find a relative interior point $(\lambda_{\bx,R}(\ba))_{R\in\sigma, \bx\in R^\X,\ba\in R^\A}$ of $\BLP(\X,\A)$\;
  \If{no relative interior point exists}{return \textsc{no}\;}
  refine $\AIP(\X,\A)$ by setting $\tau_{\bx,R}(\ba)=0$ if $\lambda_{\bx,R}(\ba)=0$\;
  \If{the refined $\AIP(\X,\A)$ accepts}{return \textsc{yes}\;}{return \textsc{no}\;}
\caption{The $\BLP+\AIP$ algorithm} \label{alg:blp-aip}
\end{algorithm}

If $\X\to\A$ then $\BLP+\AIP$ accepts $\X$~\cite{bgwz20}. We say that $\BLP+\AIP$
\emph{solves} $\PCSP(\A,\B)$ if for every instance $\X$ accepted by $\BLP+\AIP$
we have $\X\to\B$. 

We denote by $\Mblpaip$
the minion whose $L$-ary objects are
pairs $(\bq,\bz)$, where $\bq\in\Q^L$ is a stochastic vector and $\bz\in\Z^L$ is
an affine vector, with the property that, for every $i\in [L]$, $q_i=0$
implies $z_i=0$. As before, the minor operation is defined as in
Section~\ref{sec:minion}; i.e., if $(\bq,\bz)\in\Mblpaip^{(L)}$
and $\pi:[L]\to[L']$, then $(\bq,\bz)_{/\pi}=({P_\pi}\bq,{P_\pi}\bz)$, where ${P_\pi}$
is the $L'\times L$ matrix whose $(i,j)$-th entry is $1$ if $\pi(j)=i$, and
$0$ otherwise.

A $(2L+1)$-ary operation $f:A^{2L+1}\to B$ is called \emph{$2$-block symmetric}
if $f(a_1,\ldots,a_{2L+1})=f(a_{\pi(1)},\ldots,a_{\pi(2L+1)})$ for every
$a_1,\ldots,a_{2L+1}\in A$ and every permutation $\pi:[2L+1]\to[2L+1]$ that
preserves parity.

The power of $\BLP+\AIP$ for PCSPs is characterised in the following result.

\begin{thm}[\cite{bgwz20}]\label{thm:blp-aip}
Let $(\A,\B)$ be a PCSP template. The following are equivalent:
  \begin{enumerate}
    \item[(1)] $\BLP+\AIP$ solves $\PCSP(\A,\B)$.
    \item[(2)] $\Pol(\A,\B)$ admits a minion homomorphism from $\Mblpaip$.
    \item[(3)] $\Pol(\A,\B)$ contains $2$-block-symmetric operations of all odd arities.
  \end{enumerate}
\end{thm}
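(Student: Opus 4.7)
The plan is to build, for each $D\in\N$, a minion homomorphism $\xi_D:\ourMinion_D\rightarrow\Pol(\A,\B)$ and then combine them via Proposition~\ref{prop_compactness_argument} (using that the set of $L$-ary polymorphisms of a finite template is finite). Fix $D$ and set $n=|A|$. The crucial preparation is to apply the Tiebreak Lemma~\ref{lemma_tie_terminator} to the finite set $S=\{M:(M,\bmu)\in\ourMinion_D^{(n)}\}$ of skeletal matrices, obtaining a rational stochastic finitely supported vector $\bv$ with $\be_1^T\bv>0$ such that $M\bv$ is $H$-tieless for every $M\in S$. Let $N'\in\N$ clear denominators of $\bv$, and set $N=2\lceil\sigma_1^H+1\rceil D^2N'$, where $\sigma_1^H$ is the largest singular value of $H$. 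Pick an $H$-symmetric polymorphism $f\in\Pol(\A,\B)$ of some arity $c\geq N^2$, and write $c=\alpha N+\beta$ with $0\leq\beta<N\leq\alpha$.

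Given $(M,\bmu)\in\ourMinion_D^{(L)}$, I would define $\xi_D((M,\bmu))\coloneqq f_{/\varphi}$, where $\varphi:[c]\to[L]$ is the map for which the $i$-th row of $P_\varphi$ contains $\be_i^T(\alpha NM\bv+\beta\bmu)$ ones (i.e., the preimage sizes encode both the $\CBLP$ information carried by $M\bv$ and the $\AIP$ information carried by $\bmu$). To see this is well defined, note that the row sums total $c$ by $(c_2)$ and $(c_3)$, and they are nonnegative integers: if $\be_i^T\bmu<0$ then $(c_4)$ forces $\be_i^TM\be_1>0$, so $\be_i^TM\bv\geq \be_i^TM\be_1\be_1^T\bv>0$, and the dominant factor $\alpha N$ (with $N\geq 2\lceil\sigma_1^H+1\rceil D$ and $\alpha>\beta$) swamps the contribution of $\bmu$.

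The heart of the argument is verifying that $\xi_D$ preserves minors. Given $\pi:[L]\to[L']$, both $\xi_D((M,\bmu)_{/\pi})=f_{/\tilde\varphi}$ and $\xi_D((M,\bmu))_{/\pi}=f_{/\pi\circ\varphi}$ have the same column sum vector, namely ${P_{\pi_\ba\circ\tilde\varphi}}\bone_c={P_{\pi_\ba\circ\pi\circ\varphi}}\bone_c$ when evaluated via $\pi_\ba:[L']\to[n]$, $i\mapsto a_i$, for any $\ba\in A^{L'}$. By Lemma~\ref{lem_1347_1805} it therefore suffices to show that this common vector is $H$-tieless. Decomposing it as $\bw+\bz$ with $\bw=\alpha NH{P_{\pi_\ba\circ\pi}}M\bv$ and $\bz=\beta H{P_{\pi_\ba\circ\pi}}\bmu$, I would argue at each pair of distinct coordinates $i,i'$ of nonzero entries that $|w_i-w_{i'}|\geq 2\alpha\lceil\sigma_1^H+1\rceil D$ (leveraging the fact that $M_{/\pi_\ba\circ\pi}\in S$ so $M_{/\pi_\ba\circ\pi}\bv$ is $H$-tieless, and that integer entries differ by at least one after clearing denominators), whereas Cauchy--Schwarz combined with the operator-norm bound $\|H\bx\|_2\leq \sigma_1^H\|\bx\|_2$ and $\bone_L^T|\bmu|\leq D$ yields $|z_i-z_{i'}|<2\alpha\lceil\sigma_1^H+1\rceil D$, forcing $u_i\neq u_{i'}$. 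A subtle point requiring care is that if $w_i=0$ then the support condition $(c_4)$ propagates through $H$ to force $z_i=0$, which rules out the case that the $\AIP$ part could fabricate a nonzero entry where the $\BLP$ part vanishes.

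The main obstacle is precisely this last step: reconciling the two distinct scales of perturbation (the discrete integer-jump bound coming from tielessness of $M\bv$ versus the analytic bound on the $\bmu$ contribution) requires the careful choice of $N$ in terms of both $D$ and $\sigma_1^H$, along with delicate use of $(c_4)$ to control situations where $\bw$ vanishes in a coordinate. Once that tieless verification is in place, minor-preservation is immediate and the compactness argument of Proposition~\ref{prop_compactness_argument} delivers the desired minion homomorphism $\zeta:\ourMinion\to\Pol(\A,\B)$.
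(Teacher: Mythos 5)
There is a genuine gap: your proposal does not prove the stated theorem at all. Theorem~\ref{thm:blp-aip} is the characterisation of the power of $\BLP+\AIP$ from~\cite{bgwz20} (which this paper cites rather than reproves): it asserts the equivalence of (1) $\BLP+\AIP$ solving $\PCSP(\A,\B)$, (2) the existence of a minion homomorphism from $\Mblpaip$ -- the minion of pairs $(\bq,\bz)$ of a stochastic vector and an affine vector with $q_i=0\Rightarrow z_i=0$ -- to $\Pol(\A,\B)$, and (3) the presence of $2$-block-symmetric polymorphisms of all odd arities. None of these objects appears in your argument: you never discuss Algorithm~\ref{alg:blp-aip}, the relative-interior refinement of $\AIP$ by a $\BLP$ solution, the minion $\Mblpaip$, or $2$-block-symmetric operations, and you give no argument for any of the three implications.

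What you have written is, essentially, the paper's proof of Theorem~\ref{thm:Hsymm}: starting from $H$-symmetric polymorphisms of arbitrarily large arity, you apply the Tiebreak Lemma~\ref{lemma_tie_terminator} to the skeletal matrices in $\ourMinion_D^{(n)}$, build the minor map $\varphi$ from $\alpha N M\bv+\beta\bmu$, verify $H$-tielessness of ${P_{\pi_{\ba}\circ\tilde\varphi}}\bone_c$ via the two-scale estimate on $\bw$ and $\bz$, invoke Lemma~\ref{lem_1347_1805}, and finish with Proposition~\ref{prop_compactness_argument}. That is a correct outline of a \emph{different} result about the minion $\ourMinion$ and the $\CLAP$ algorithm; it neither implies nor addresses the $\BLP+\AIP$ characterisation. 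A proof of Theorem~\ref{thm:blp-aip} would instead need, for instance, the identification of acceptance by $\BLP+\AIP$ with a homomorphism into the free structure $\mathbb{F}_{\Mblpaip}(\A)$ (giving (1)$\Leftrightarrow$(2) via Lemma~\ref{lem:free}), and an argument passing between $\Mblpaip$ and $2$-block-symmetric polymorphisms of odd arity, e.g.\ by evaluating a homomorphism on the elements $(\frac{1}{2L+1}\bone,\be_1-\be_2+\cdots+\be_{2L+1})$ in one direction and by encoding a pair $(\bq,\bz)$ as a minor of a large odd-arity $2$-block-symmetric operation (rational approximation of $\bq$ on one block, $\bz$ on the other) in the converse direction.
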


\section{Proof of Lemma~\ref{lem:free}} \label{sec:free-proof}

In this section, we shall prove Lemma~\ref{lem:free}, which we restate below.
The proof is based on that of~\cite[Lemma~4.4]{BBKO21},
which 
concerns minions of functions.

\begin{lem*}[Lemma~\ref{lem:free} restated]
Let $\mathscr{M}$ be a minion and let $(\A,\B)$ be a PCSP template. Then there is a minion homomorphism from $\mathscr{M}$ to $\Pol(\A,\B)$ if and only if $\mathbb{F}_\mathscr{M}(\A)\to\B$.
\end{lem*}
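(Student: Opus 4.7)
The plan is to prove both directions of the equivalence by explicitly constructing the two objects involved, using the $n$-ary \enquote{evaluation on the canonical tuple} $(1,2,\dots,n)$ (where $n=|A|$, as we are assuming $A=[n]$) as the bridge between polymorphisms and elements of $\mathscr{M}^{(n)}$. This mirrors the construction in~\cite[Lemma~4.4]{BBKO21} for minions of functions; since here $\mathscr{M}$ is abstract, I need to replace the \enquote{identify the function with its value on the canonical tuple} step by a genuine application of the minor operation in $\mathscr{M}$, but the combinatorial structure of the proof is identical.

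For the forward direction, assume $\xi:\mathscr{M}\rightarrow\Pol(\A,\B)$ is a minion homomorphism. Each $M\in\mathscr{M}^{(n)}$ is sent to an $n$-ary polymorphism $\xi(M):A^n\rightarrow B$, so I define $h:\mathbb{F}_\mathscr{M}(\A)\rightarrow B$ by $h(M)\coloneqq \xi(M)(1,\dots,n)$. To check that $h$ preserves a relation $R\in\sigma$ of arity $k$, I would take $(M_1,\dots,M_k)\in R^{\mathbb{F}_\mathscr{M}(\A)}$ witnessed by some $Q\in\mathscr{M}^{(|R^\A|)}$ with $M_i=Q_{/\pi_i}$, where $\pi_i:R^\A\rightarrow A$ picks the $i$-th coordinate. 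Using that $\xi$ preserves minors and then unfolding the definition of the minor of a function, $h(M_i)$ becomes exactly the application of the polymorphism $\xi(Q)$ to the $i$-th column of the $|R^\A|\times k$ matrix whose rows enumerate $R^\A$. Since $\xi(Q)\in\Pol(\A,\B)$, the resulting $k$-tuple lies in $R^\B$, so $h$ is a homomorphism.

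For the converse, assume $h:\mathbb{F}_\mathscr{M}(\A)\rightarrow\B$ is a homomorphism. Given $M\in\mathscr{M}^{(L)}$, I define $\xi(M):A^L\rightarrow B$ by $\xi(M)(a_1,\dots,a_L)\coloneqq h(M_{/\rho})$, where $\rho:[L]\rightarrow[n]$ is $\rho(j)=a_j$; note that $M_{/\rho}\in\mathscr{M}^{(n)}$ really lies in the domain of $\mathbb{F}_\mathscr{M}(\A)$. To verify that $\xi(M)$ is a polymorphism, I would take an $L\times k$ matrix whose rows are tuples $\br_1,\dots,\br_L\in R^\A$ and consider, for each column index $i\in[k]$, the map $\pi_i:[L]\rightarrow[n]$ picking the $i$-th coordinate of each row. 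Setting $Q\coloneqq M_{/\sigma}$ where $\sigma:[L]\rightarrow R^\A$ is $j\mapsto\br_j$, the identity $\rho_i\circ\sigma=\pi_i$ (where $\rho_i:R^\A\rightarrow A$ selects the $i$-th coordinate) together with the associativity of the minor operation yields $M_{/\pi_i}=Q_{/\rho_i}$, so $(M_{/\pi_1},\dots,M_{/\pi_k})\in R^{\mathbb{F}_\mathscr{M}(\A)}$ by Definition~\ref{def:free}. Applying $h$ then places the image tuple in $R^\B$, as required.

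Finally, I must check that $\xi$ preserves arities (immediate) and minors. For any $\pi:[L]\rightarrow[L']$ and any $(a_1,\dots,a_{L'})\in A^{L'}$, unfolding both $\xi(M_{/\pi})(a_1,\dots,a_{L'})$ and $\xi(M)_{/\pi}(a_1,\dots,a_{L'})$ using~\eqref{eq_minor_functions} and the definition of $\xi$ gives $h((M_{/\pi})_{/\rho})$ and $h(M_{/\rho\circ\pi})$ respectively, where $\rho(i)=a_i$; these agree by the associativity axiom of Definition~\ref{defn_minion}. The main thing to watch for throughout is careful bookkeeping between the two uses of the minor notation (on elements of $\mathscr{M}$ and on functions), since the proof is driven entirely by the compatibility of these two operations under the map $M\mapsto\xi(M)(1,\dots,n)$; no hard step is anticipated beyond this bookkeeping.
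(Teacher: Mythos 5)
Your proposal is correct and follows essentially the same route as the paper's proof: in both directions you use the map $M\mapsto\xi(M)(1,\dots,n)$ (resp.\ $\xi(M)(a_1,\dots,a_L)=h(M_{/\rho})$) and verify relation/minor preservation via the compatibility of the abstract minor operation with minors of functions, exactly as in the paper's adaptation of~\cite[Lemma~4.4]{BBKO21}. The only quibble is notational: you reuse $\sigma$ (already the signature) for the map $[L]\to R^\A$, and as in the paper you should fix an enumeration identifying $R^\A$ with $[|R^\A|]$ so that the minor operation of Definition~\ref{defn_minion} applies.
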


\begin{proof}
Let $A=[n]$, and let $\sigma$ be the signature of $\A$ and $\B$. Suppose
  $\xi:\mathscr{M}\to\Pol(\A,\B)$ is a minion homomorphism, and consider the function
\begin{align*}
f:\mathscr{M}^{(n)}&\to B\\
M&\mapsto \xi(M)(1,\dots,n).
\end{align*}
For $R\in\sigma$ of arity $k$, consider a tuple $(M_1,\dots,M_k)\in R^{\mathbb{F}_\mathscr{M}(\A)}$. List the elements of $R^\A$ as $\ba^{(1)},\dots,\ba^{(m)}$. From Definition~\ref{def:free}, $\exists Q\in \mathscr{M}^{(m)}$ such that $M_i=Q_{/\pi_i}$ for each $i\in [k]$, where $\pi_i:[m]\to A$ maps $j$ to the $i$-th coordinate of $\ba^{(j)}$. It follows that, for each $i\in [k]$,
\begin{align*}
f(M_i)&=f(Q_{/\pi_i})=\xi(Q_{/\pi_i})(1,\dots,n)=\xi(Q)_{/\pi_i}(1,\dots,n)=\xi(Q)(\pi_i(1),\dots,\pi_i(m)).
\end{align*}
Hence,
\begin{align*}
f(M_1,\dots,M_k)&=(\xi(Q)(\pi_1(1),\dots,\pi_1(m)), \dots,\xi(Q)(\pi_k(1),\dots,\pi_k(m)))=\xi(Q)(\ba^{(1)},\dots,\ba^{(m)})\in R^\B
\end{align*}
  since $\xi(Q)$ is a polymorphism of $(\A,\B)$. Therefore, $f$ is a homomorphism from $\mathbb{F}_\mathscr{M}(\A)$ to $\B$.

Conversely, let $f:\mathbb{F}_\mathscr{M}(\A)\to\B$ be a homomorphism, and consider the function $\xi:\mathscr{M}\to\Pol(\A,\B)$ defined by $\xi(M)(a_1,\dots,a_L)=f(M_{/\rho})$ for each $L\in\N$, $M\in\mathscr{M}^{(L)}$, $(a_1,\dots,a_L)\in A^L$, where
\begin{align*}
\rho:[L]&\to[n]\\
i&\mapsto a_i.
\end{align*}
Let us first check that $\xi$ is well defined -- i.e., that $\xi(M)\in\Pol(\A,\B)$. For $R\in\sigma$ of arity $k$, consider a matrix $Z\in A^{L,k}$ such that each row of $Z$ corresponds to a tuple in $R^\A$. We need to show that $\xi(M)(Z)\in R^\B$. Consider the maps
\begin{align*}
\tau:[L]&\to R^\A & \rho_j:[L]&\to[n]& \pi_j:R^\A&\to [n]\\
i&\mapsto Z^T\be_i,& i&\mapsto \be_i^TZ\be_j, & \ba&\mapsto \be_j^T\ba,
\end{align*}
for $j\in[k]$. Observe that $\rho_j=\pi_j\circ\tau$, and set $Q=M_{/\tau}\in\mathscr{M}^{(|R^\A|)}$. We obtain
\begin{align*}
\xi(M)(Z)&=f(M_{/\rho_1},\dots,M_{/\rho_k})=
f(M_{/\pi_1\circ\tau},\dots,M_{/\pi_k\circ\tau})
=
f(Q_{/\pi_1},\dots,Q_{/\pi_k})\in R^\B
\end{align*}
since $(Q_{/\pi_1},\dots,Q_{/\pi_k})\in R^{\mathbb{F}_\mathscr{M}(\A)}$ and $f$ is a homomorphism.
Finally, we show that $\xi$ is a minion homomorphism. Clearly, $\xi$ preserves arities. To check that it preserves minors, let $M\in\mathscr{M}^{(L)}$ and take a map $\pi:[L]\to [L']$. Given $(a_1,\dots,a_{L'})\in A^{L'}$, consider the maps
\begin{align*}
\rho':[L']&\to [n]&\rho'':[L]&\to [n]\\
i&\mapsto a_i, & i&\mapsto a_{\pi(i)},
\end{align*}
and observe that $\rho''=\rho'\circ\pi$. We obtain
\begin{align*}
\xi(M_{/\pi})(a_1,\dots,a_{L'})&=f((M_{/\pi})_{/\rho'})=f(M_{/\rho'\circ\pi})=f(M_{/\rho''})=\xi(M)(a_{\pi(1)},\dots,a_{\pi(L)})\\
&=\xi(M)_{/\pi}(a_1,\dots,a_{L'}),
\end{align*}
which yields $\xi(M_{/\pi})=\xi(M)_{/\pi}$, as desired.
\end{proof}

\section{H-block-symmetric polymorphisms} \label{sec:block_sym_pol}
Let $\mathcal{C}=(\mathcal{C}_1,\dots,\mathcal{C}_\ell)$ be a partition of ${c}\in \N$; i.e., the sets $\mathcal{C}_i$ are pairwise disjoint and their union is $[{c}]$. Let ${c}_i=|\mathcal{C}_i|$, so that ${c}=\sum_{i\in[\ell]}{c}_i$. For each $i\in[\ell]$, we consider the unique monotonically increasing function $\vartheta_i:[{c}_i]\rightarrow[{c}]$ such that $\vartheta_i([{c}_i])=\mathcal{C}_i$. We also consider the function $\chi_i:\mathcal{C}_i\to[{c}_i]$ such that $\vartheta_i\circ \chi_i$ is the inclusion map of $\mathcal{C}_i$ in $[{c}]$. Given ${c}'\in\N$ and a map $\pi:[{c}]\rightarrow[{c}']$, we let $\pi_{(i)}=\pi\circ \vartheta_i$.

\begin{defn}
Let $A,B$ be finite sets, and consider a function $f:A^{c}\rightarrow B$ for
  some ${c}\in\N$. Given an $m\times |A|$ tie matrix $H$ and a partition
  $\mathcal{C}=(\mathcal{C}_1,\dots,\mathcal{C}_\ell)$ of ${c}$, we say that $f$
  is \emph{$H$-\,$\mathcal{C}$-block-symmetric} if
\begin{align*}
f_{/\pi}(\ba)=f(\ba) &&& \forall \pi: [{c}]\rightarrow [{c}] \mbox{ permutation such that }\pi(\mathcal{C}_i)=\mathcal{C}_i \hspace{.2cm}\forall i\in [\ell],\\
&&&\forall \ba\in A^{c}
\mbox{ such that }({P^T_{\vartheta_i}}\ba)^\# \mbox{ is $H$-tieless}\hspace{.2cm}\forall i\in [\ell].
\end{align*}
\end{defn}
\noindent We say that $f$ is \emph{$H$-block-symmetric} with \emph{width} $W$ if $W$ is
the largest integer for which there is a partition $\mathcal{C}$ of ${c}$ such
that each part of $\mathcal{C}$ has size at least $W$ and $f$ is
\emph{$H$-\,$\mathcal{C}$-block-symmetric}.\footnote{The notion of
$H$-block-symmetric operation is the $H$-analogue of that of block-symmetric operation in~\cite{bgwz20} (cf.~Theorem~\ref{thm:blp-aip}).} Without loss of generality, we consider $A=[n]$.
\begin{lem}
\label{lem_H_tieless_blocks}
Let $f:A^{c}\rightarrow B$ be $H$-\,$\mathcal{C}$-block-symmetric for some $m\times n$ tie matrix $H$ ($m\in\N$) and some partition $\mathcal{C}=(\mathcal{C}_1,\dots,\mathcal{C}_\ell)$ of ${c}$. Consider two maps $\pi,\tilde{\pi}:[{c}]\rightarrow [n]$ such that, for each $i\in [\ell]$, ${P_{\pi_{(i)}}}\bone_{{c}_i}={P_{\tilde{\pi}_{(i)}}}\bone_{{c}_i}$ and the vector ${P_{\pi_{(i)}}}\bone_{{c}_i}$ is $H$-tieless. Then
\begin{align*}
f_{/\pi}(1,\dots,n)=f_{/\tilde{\pi}}(1,\dots,n).
\end{align*}
\end{lem}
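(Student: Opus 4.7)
The plan is to adapt the proof of Lemma~\ref{lem_1347_1805} blockwise, using the partition $\mathcal{C}$ to assemble a block-preserving permutation of $[c]$ out of the separate bijections on each block.

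First, I would run the counting argument of Lemma~\ref{lem_1347_1805} on each block separately. For a fixed $i\in[\ell]$ and each $a\in[n]$, the hypothesis ${P_{\pi_{(i)}}}\bone_{c_i}={P_{\tilde\pi_{(i)}}}\bone_{c_i}$ gives
\begin{align*}
|\pi_{(i)}^{-1}(a)|=\be_a^T{P_{\pi_{(i)}}}\bone_{c_i}=\be_a^T{P_{\tilde\pi_{(i)}}}\bone_{c_i}=|\tilde\pi_{(i)}^{-1}(a)|,
\end{align*}
so there exist bijections $\varphi_{i,a}:\pi_{(i)}^{-1}(a)\rightarrow \tilde\pi_{(i)}^{-1}(a)$. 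Their union over $a\in[n]$ yields a bijection $\varphi_i:[c_i]\rightarrow[c_i]$ satisfying $\tilde\pi_{(i)}\circ\varphi_i=\pi_{(i)}$, exactly as in the original lemma.

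Next, I would paste these blockwise bijections together into a single permutation $\varphi:[c]\rightarrow[c]$ that preserves the partition $\mathcal{C}$. The natural definition is: for each $j\in\mathcal{C}_i$, set $\varphi(j)=\vartheta_i(\varphi_i(\chi_i(j)))$. By construction, $\varphi(\mathcal{C}_i)=\mathcal{C}_i$ for every $i\in[\ell]$, and a direct check gives $\tilde\pi\circ\varphi=\pi$, since for $j\in\mathcal{C}_i$,
\begin{align*}
\tilde\pi(\varphi(j))=\tilde\pi_{(i)}(\varphi_i(\chi_i(j)))=\pi_{(i)}(\chi_i(j))=\pi(j).
\end{align*}

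Finally, set $\tilde\ba=(\tilde\pi(1),\dots,\tilde\pi(c))$ and verify the tieless hypothesis of block-symmetry. For each $i\in[\ell]$ and each $a\in[n]$, we have $\be_a^T({P^T_{\vartheta_i}}\tilde\ba)^\#=|\{k\in[c_i]:\tilde\pi_{(i)}(k)=a\}|=\be_a^T{P_{\tilde\pi_{(i)}}}\bone_{c_i}$, so $({P^T_{\vartheta_i}}\tilde\ba)^\#={P_{\tilde\pi_{(i)}}}\bone_{c_i}={P_{\pi_{(i)}}}\bone_{c_i}$, which is $H$-tieless by hypothesis. Thus the $H$-$\mathcal{C}$-block-symmetry of $f$ applies to $\tilde\ba$ along the block-preserving permutation $\varphi$, giving
\begin{align*}
f_{/\tilde\pi}(1,\dots,n)=f(\tilde\ba)=f_{/\varphi}(\tilde\ba)=(f_{/\varphi})_{/\tilde\pi}(1,\dots,n)=f_{/\tilde\pi\circ\varphi}(1,\dots,n)=f_{/\pi}(1,\dots,n),
\end{align*}
as required. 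There is no real obstacle here beyond bookkeeping with $\vartheta_i$ and $\chi_i$: the only substantive point is to ensure that the glued permutation $\varphi$ preserves the blocks (which is automatic from the pointwise construction) so that block-symmetry can be invoked.
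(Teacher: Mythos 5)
Your proposal is correct and follows essentially the same route as the paper's proof: the per-block counting argument yielding bijections $\varphi_{i,a}$, their union $\varphi_i$, the glued block-preserving permutation $\varphi$ with $\tilde\pi\circ\varphi=\pi$, the verification that $({P^T_{\vartheta_i}}\tilde\ba)^\#={P_{\pi_{(i)}}}\bone_{c_i}$ is $H$-tieless, and the concluding chain of minor identities are all identical to the paper's argument. No gaps.
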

\begin{proof}
For $i\in [\ell]$ and $a\in [n]$, we have
\begin{align*}
|\pi_{(i)}^{-1}(a)|=\be_a^T{P_{\pi_{(i)}}}\bone_{{c}_i}
=
\be_a^T{P_{\tilde\pi_{(i)}}}\bone_{{c}_i}
=
|\tilde\pi_{(i)}^{-1}(a)|.
\end{align*}
Hence, we can consider bijections $\varphi_{i,a}:\pi_{(i)}^{-1}(a)\rightarrow\tilde{\pi}_{(i)}^{-1}(a)$ for each $i\in[\ell],a\in [n]$. The union 
\begin{align*}
\varphi_i=\bigcup_{a\in [n]}\varphi_{i,a}:[{c}_i]\rightarrow [{c}_i]
\end{align*}
is also a bijection. Define $\varphi:[{c}]\to[{c}]$ by letting $\varphi\big{|}_{\mathcal{C}_i}=\vartheta_i\circ\varphi_i\circ\chi_i$ for each $i\in [\ell]$. Notice that $\varphi(\mathcal{C}_i)=\mathcal{C}_i$ for each $i\in[\ell]$, so $\varphi$ is a bijection. Take $j\in [{c}]$ and suppose that $j\in\mathcal{C}_i$. We have
\begin{align*}
(\tilde{\pi}\circ \varphi)(j)=\tilde{\pi}(\varphi(j))=\tilde{\pi}(\vartheta_i(\varphi_i(\chi_i(j))))=
\tilde{\pi}_{(i)}(\varphi_{i,\pi_{(i)}(\chi_i(j))}(\chi_i(j)))=\pi_{(i)}(\chi_i(j))=(\pi\circ\vartheta_i\circ\chi_i)(j)=\pi(j)
\end{align*}
and, hence, $\tilde{\pi}\circ\varphi=\pi$.
Let $\tilde{\ba}=(\tilde{\pi}(1),\dots,\tilde{\pi}({c}))$. Notice that, for each $i\in[\ell]$ and $a\in [n]$,
\begin{align*}
\be_a^T({P^T_{\vartheta_i}}\tilde{\ba})^\#
&=
|\{j\in[{c}_i]:\be_j^T{P^T_{\vartheta_i}}\tilde{\ba}=a \}|
=
|\{j\in[{c}_i]:\be_{\vartheta_i(j)}^T\tilde{\ba}=a \}|
=
|\{j\in[{c}_i]:\tilde{\pi}(\vartheta_i(j))=a \}|\\
&=
|\{j\in[{c}_i]:\tilde{\pi}_{(i)}(j)=a \}|
=
\be_a^T{P_{\tilde{\pi}_{(i)}}}\bone_{{c}_i}
\end{align*}
and, therefore, $({P^T_{\vartheta_i}}\tilde{\ba})^\#={P_{\tilde{\pi}_{(i)}}}\bone_{{c}_i}={P_{{\pi}_{(i)}}}\bone_{{c}_i}$, which is $H$-tieless.
Using that $f$ is $H$-\,$\mathcal{C}$-block-symmetric, we find
\begin{align*}
f_{/\tilde{\pi}}(1,\dots,n)=f(\tilde{\ba})=f_{/\varphi}(\tilde{\ba})=(f_{/\varphi})_{/\tilde{\pi}}(1,\dots,n)=
f_{/\tilde{\pi}\circ\varphi}(1,\dots,n)=f_{/\pi}(1,\dots,n),
\end{align*}
as required.
\end{proof}
\begin{thm}
\label{thm_CLAP_works_H_block_sym}
Let $(\A,\B)$ be a PCSP template and suppose $\Pol(\A,\B)$ contains
  $H$-block-symmetric operations of arbitrarily large width for some $m\times |A|$ tie
  matrix $H$, $m\in\N$. Then there exists a minion homomorphism from
  $\ourMinion$ to $\Pol(\A,\B)$. 
\end{thm}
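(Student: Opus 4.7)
The plan is to mimic the proof of Theorem~\ref{thm:Hsymm}, replacing the single globally $H$-symmetric polymorphism by an $H$-$\mathcal{C}$-block-symmetric one and distributing the construction across its blocks. As before, I would fix $D\in\N$ and produce a minion homomorphism $\xi_D:\ourMinion_D\to\Pol(\A,\B)$, then invoke Proposition~\ref{prop_compactness_argument} to obtain the desired homomorphism $\ourMinion\to\Pol(\A,\B)$.

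Letting $S=\{M:(M,\bmu)\in\ourMinion_D^{(n)}\}$, I would first apply the Tiebreak Lemma~\ref{lemma_tie_terminator} to $S$, obtaining a stochastic finitely supported vector $\bv\in\Q^{\aleph_0}$ with $\be_1^T\bv>0$ such that $M\bv$ is $H$-tieless for every $M\in S$, and set $N=2\lceil\sigma_1^H+1\rceil D^2N'$ where $N'\in\N$ is chosen so that $N'\bv$ is integer. Using the hypothesis, I would then take an $H$-$\mathcal{C}$-block-symmetric polymorphism $f:A^{c}\to B$ whose partition $\mathcal{C}=(\mathcal{C}_1,\dots,\mathcal{C}_\ell)$ consists of blocks of size $c_i=|\mathcal{C}_i|\geq N^2$; writing $c_i=N\alpha_i+\beta_i$ with $0\leq\beta_i<N$ yields automatically $\beta_i<\alpha_i$. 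Given $(M,\bmu)\in\ourMinion_D^{(L)}$, I would define $\varphi:[c]\to[L]$ blockwise: each restriction $\varphi|_{\mathcal{C}_i}$ is encoded by the $L\times c_i$ matrix built as in~\eqref{expr_P_phi} with $\alpha,\beta$ replaced by $\alpha_i,\beta_i$, so row $j$ contains $\be_j^T(\alpha_iNM\bv+\beta_i\bmu)$ ones. The same computation as in the proof of Theorem~\ref{thm:Hsymm} (using $(c_4)$ of Definition~\ref{def:minion} together with $\alpha_iD>\beta_iD$) shows that these entries are nonnegative integers summing to $c_i$, so $\varphi$ is well defined; I would set $\xi_D((M,\bmu))\coloneqq f_{/\varphi}\in\Pol(\A,\B)$.

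The main step -- and the chief technical obstacle -- is verifying that $\xi_D$ preserves minors, which this time amounts to establishing $H$-tielessness of certain multiplicity vectors not globally but within \emph{each} block of $\mathcal{C}$. For a minor map $\pi:[L]\to[L']$ and an auxiliary tuple $\ba\in A^{L'}$ with corresponding $\pi_\ba:[L']\to[n]$, the construction immediately gives $P_{(\pi_\ba\circ\tilde\varphi)_{(i)}}\bone_{c_i}=P_{(\pi_\ba\circ\pi\circ\varphi)_{(i)}}\bone_{c_i}=\alpha_iN P_{\pi_\ba\circ\pi}M\bv+\beta_i P_{\pi_\ba\circ\pi}\bmu$ for every block $i$. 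The $H$-tielessness of this common vector in block $i$ follows by repeating the argument in Theorem~\ref{thm:Hsymm} with $\alpha_i,\beta_i$ in place of $\alpha,\beta$: the lower bound $|w_i-w_{i'}|\geq 2\alpha_i\lceil\sigma_1^H+1\rceil D$ and the upper bound $|z_i-z_{i'}|<2\alpha_i\lceil\sigma_1^H+1\rceil D$ remain valid, the latter precisely because $\beta_i<\alpha_i$. Invoking Lemma~\ref{lem_H_tieless_blocks} (in place of Lemma~\ref{lem_1347_1805}) then yields $f_{/\pi_\ba\circ\tilde\varphi}(1,\dots,n)=f_{/\pi_\ba\circ\pi\circ\varphi}(1,\dots,n)$, and hence $\xi_D((M,\bmu)_{/\pi})=\xi_D((M,\bmu))_{/\pi}$. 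A final application of Proposition~\ref{prop_compactness_argument} promotes the family $\{\xi_D\}_{D\in\N}$ to a minion homomorphism $\zeta:\ourMinion\to\Pol(\A,\B)$, completing the proof.
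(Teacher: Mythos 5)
Your proposal is correct and follows essentially the same route as the paper's own proof: the blockwise adaptation of the matrix~\eqref{expr_P_phi} with $\alpha_i,\beta_i$ per block, the Tiebreak Lemma~\ref{lemma_tie_terminator}, the per-block $H$-tielessness estimates, Lemma~\ref{lem_H_tieless_blocks}, and the compactness step via Proposition~\ref{prop_compactness_argument} are exactly the ingredients used there. The only detail left implicit in your sketch -- the verification that $P_{(\pi_\ba\circ\tilde\varphi)_{(i)}}\bone_{c_i}$ and $P_{(\pi_\ba\circ\pi\circ\varphi)_{(i)}}\bone_{c_i}$ coincide, via $\varphi\circ\vartheta_i=\varphi_i$ -- is a routine computation carried out in the paper.
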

\begin{proof}
For $D\in\N$, consider the subminion $\ourMinion_D$ of $\ourMinion$ described in Section~\ref{sec_compactness_argument}. Observe that $S=\{M:(M,\bmu)\in\ourMinion_D^{(n)}\}$ is a finite set of skeletal matrices. Therefore, we can apply the Tiebreak Lemma~\ref{lemma_tie_terminator} to find a stochastic finitely supported vector ${\bv}\in \Q^{\aleph_0}$ with $\be_1^T{\bv}>0$ such that $M{\bv}$ is $H$-tieless for any $M\in S$. Since ${\bv}$ is finitely supported, we can find $N'\in\N$ such that $N'{\bv}$ has integer entries. Let $\sigma_1^H$ denote the largest singular value of $H$ -- i.e., the square root of the largest eigenvalue of $H^TH$. Set $N=2\lceil\sigma_1^H+1\rceil D^2N'$, and let $f$ be an $H$-block-symmetric polymorphism of width ${W}\geq N^2$. Letting $c$ be the arity of $f$, consider a partition $\mathcal{C}=(\mathcal{C}_1,\dots,\mathcal{C}_\ell)$ of $c$ such that $c_i=|\mathcal{C}_i|\geq W$ for each $i\in [\ell]$ and $f$ is $H$-\,$\mathcal{C}$-block-symmetric. Write ${c}_i=N\alpha_i+\beta_i$ with $\alpha_i,\beta_i\in \N_0$, $\beta_i\leq N-1$. Note that $N^2\leq W\leq {c}_i= N\alpha_i+\beta_i\leq N\alpha_i+N-1<N(\alpha_i+1)$, so $N<\alpha_i+1$ and, hence, $\beta_i<\alpha_i$.

Consider the function 
\begin{align*}
\xi_D:\ourMinion_D\rightarrow\Pol(\A,\B)
\end{align*} 
defined as follows. Given $L\in\N$ and $(M,\bmu)\in\ourMinion_D^{(L)}$, for each $i\in [\ell]$ take the map $\varphi_i:[{c}_i]\rightarrow [L]$ such that the corresponding $L\times {c}_i$ matrix $P_{\varphi_i}$ is
\begin{align}
\label{expr_P_phi_block}
{P_{\varphi_i}}=
\begin{pmatrix}
\bone^T_{\be_1^T(\alpha_iNM{\bv}+\beta_i\bmu)} & \bzero^T & \dots & \bzero^T \\ 
\bzero^T & \bone^T_{\be_2^T(\alpha_iNM{\bv}+\beta_i\bmu)} & \dots & \bzero^T \\ 
\vdots & \vdots & \ddots & \vdots \\ 
\bzero^T & \bzero^T & \dots & \bone^T_{\be_L^T(\alpha_iNM{\bv}+\beta_i\bmu)}
\end{pmatrix}.
\end{align}
To verify that~\eqref{expr_P_phi_block} is well defined, observe first that
\begin{align*}
\sum_{j=1}^L \be_j^T(\alpha_iNM{\bv}+\beta_i\bmu)=\bone_L^T(\alpha_iNM{\bv}+\beta_i\bmu)=\alpha_iN\bone_L^TM{\bv}+\beta_i\bone_L^T\bmu=\alpha_iN\bone_{\aleph_0}^T{\bv}+\beta_i=\alpha_iN+\beta_i={c}_i.
\end{align*}
Moreover, for each $j\in [L]$, $\be_j^T(\alpha_iNM{\bv}+\beta_i\bmu)=\be_j^T(2\alpha_i\lceil\sigma_1^H+1\rceil D(DM)(N'{\bv})+\beta_i\bmu)$ is an integer. If $\be_j^T(\alpha_iNM{\bv}+\beta_i\bmu)$ was negative, then $\be_j^T\bmu<0$. By the requirement $(c_4)$ in Definition~\ref{def:minion}, this would imply that $\be_j^TM\be_1>0$ and, hence, $0<\be_j^TM\be_1\be_1^T{\bv}\leq \be_j^TM{\bv}$. As a consequence, $\be_j^T(DM)(N'{\bv})\geq 1$ so that
\begin{align*}
\be_j^T(\alpha_iNM{\bv}+\beta_i\bmu)=2\alpha_i\lceil\sigma_1^H+1\rceil D\be_j^T(DM)(N'{\bv})+\beta_i\be_j^T\bmu\geq 2\alpha_i\lceil\sigma_1^H+1\rceil D+\beta_i\be_j^T\bmu\geq \alpha_iD-\beta_iD>0,
\end{align*}
which is a contradiction. In conclusion, the numbers $\be_j^T(\alpha_iNM{\bv}+\beta_i\bmu)$ are nonnegative integers summing up to ${c}_i$, so~\eqref{expr_P_phi_block} is well defined.

Consider the function $\varphi:[c]\to [L]$ defined by $\varphi\big{|}_{\mathcal{C}_i}=\varphi_i\circ\chi_i$ $\forall i\in [\ell]$, and let $\xi_D((M,\bmu))\coloneqq f_{/\varphi}$. Clearly, $\xi_D((M,\bmu))\in \Pol(\A,\B)$. We claim that the map $\xi_D$ is a minion homomorphism. It is straightforward to check that $\xi_D$ preserves arities so, to conclude, we need to show that it also preserves minors. Take $L'\in\N$ and choose a map $\pi:[L]\rightarrow [L']$. Letting $\tilde{\varphi}_i:[{c}_i]\rightarrow [L']$ be the map corresponding to the matrix
\begin{align*}
{P_{\tilde\varphi_i}}=
\begin{pmatrix}
\bone^T_{\be_1^T(\alpha_iN{P_\pi}M{\bv}+\beta_i{P_\pi}\bmu)} & \bzero^T & \dots & \bzero^T \\ 
\bzero^T & \bone^T_{\be_2^T(\alpha_iN{P_\pi}M{\bv}+\beta_i{P_\pi}\bmu)} & \dots & \bzero^T \\ 
\vdots & \vdots & \ddots & \vdots \\ 
\bzero^T & \bzero^T & \dots & \bone^T_{\be_{L'}^T(\alpha_iN{P_\pi}M{\bv}+\beta_i{P_\pi}\bmu)}
\end{pmatrix}
\end{align*}
for each $i\in[\ell]$, and considering $\tilde{\varphi}:[c]\to [L']$ such that $\tilde{\varphi}\big{|}_{\mathcal{C}_i}=\tilde{\varphi}_i\circ\chi_i$ $\forall i\in[\ell]$, we see that $\xi_D((M,\bmu)_{/\pi})=f_{/\tilde{\varphi}}$. Moreover, $\xi_D((M,\bmu))_{/\pi}=(f_{/\varphi})_{/\pi}=f_{/\pi\circ\varphi}$, where $\varphi$ is the map defined above. Take ${\ba}=(a_1,\dots,a_{L'})\in A^{L'}$, and consider the map 
\begin{align*}
\pi_{\ba}:[L']&\rightarrow [n]\\
i&\mapsto a_i.
\end{align*}
Observe that 
\begin{align}
\label{eqn_f_phi_phi_tilde_block}
f_{/\tilde{\varphi}}(\ba)&=&(f_{/\tilde{\varphi}})_{/\pi_{\ba}}(1,\dots,n)&=&f_{/\pi_{\ba}\circ \tilde{\varphi}}(1,\dots,n)&&\mbox{and, similarly,}\notag\\
f_{/\pi\circ\varphi}(\ba)&=&(f_{/\pi\circ\varphi})_{/\pi_{\ba}}(1,\dots,n)&=&f_{/\pi_{\ba}\circ\pi\circ\varphi}(1,\dots,n).
\end{align}
Notice that, for each $i\in[\ell]$, $\varphi\circ\vartheta_i=\varphi_i$ and $\tilde\varphi\circ\vartheta_i=\tilde\varphi_i$. Hence,
\begin{align*}
{P_{(\pi_{\ba}\circ \tilde{\varphi})_{(i)}}}\bone_{c_i}
&=
{P_{\pi_{\ba}\circ \tilde{\varphi}\circ\vartheta_i}}\bone_{c_i}
=
{P_{\pi_{\ba}}}{P_{\tilde{\varphi}\circ\vartheta_i}}\bone_{c_i}
=
{P_{\pi_{\ba}}}{P_{\tilde\varphi_i}}\bone_{c_i}
=
{P_{\pi_{\ba}}}(\alpha_iN{P_\pi}M{\bv}+\beta_i{P_\pi}\bmu)\\
&=
{P_{\pi_{\ba}}}{P_\pi}(\alpha_iNM{\bv}+\beta_i\bmu)=
{P_{\pi_{\ba}}}{P_\pi}{P_{\varphi_i}}\bone_{c_i}=
{P_{\pi_{\ba}\circ\pi\circ\varphi\circ\vartheta_i}}\bone_{c_i}=
{P_{(\pi_{\ba}\circ\pi\circ\varphi)_{(i)}}}\bone_{c_i}.
\end{align*}
We claim that the vector ${P_{(\pi_{\ba}\circ \tilde{\varphi})_{(i)}}}\bone_{c_i}={P_{\pi_{\ba}\circ \tilde{\varphi}_i}}\bone_{c_i}$ is $H$-tieless. Let ${\bu}=(u_t)=H{P_{\pi_{\ba}\circ \tilde{\varphi}_i}}\bone_{c_i}$; the claim is equivalent to ${\bu}$ being tieless. Let ${\bw}=(w_t)=\alpha_iNH{P_{\pi_{\ba}\circ\pi}}M{\bv}$ and ${\bz}=(z_t)=\beta_iH{P_{\pi_{\ba}\circ\pi}}\bmu$, so that ${\bu}={\bw}+{\bz}$. Choose ${t},{t}'\in [m]$ such that ${t}\neq {t}'$ and $u_{t}\neq 0$. We need to show that $u_{t}\neq u_{{t}'}$. Suppose $w_{t}=0$. We can write $H^T\be_{t}=\sum_{g\in G}\lambda_g \be_g$ for $G=\supp(H^T\be_t)$, where each $\lambda_g$ is a positive integer (note that $G\neq\emptyset$ since, otherwise, $H^T\be_t=\bzero_n$, which would imply $u_t=0$). Let ${F}=(\pi_{\ba}\circ \pi)^{-1}(G)$. From $w_{t}=0$, we obtain
\begin{align*}
&&0=\be_{t}^TH{P_{\pi_{\ba}\circ\pi}}M{\bv}=(H^T\be_{t})^T{P_{\pi_{\ba}\circ\pi}}M{\bv}
&=&
\sum_{g\in G}\lambda_g\be_g^T{P_{\pi_{\ba}\circ\pi}}M{\bv}
=
\sum_{g\in G}\lambda_g\sum_{j\in (\pi_{\ba}\circ\pi)^{-1}(g)}\be_j^TM{\bv}
\end{align*}
and, hence, the following chain of implications holds:
\begin{align*}
&&\begin{array}{lllll}
&\displaystyle
0=\sum_{g\in G}\sum_{j\in (\pi_{\ba}\circ\pi)^{-1}(g)}\be_j^TM{\bv}
=
\sum_{j\in {F}}\be_j^TM{\bv}
&\displaystyle
\hspace{.2cm}\Rightarrow\hspace{.2cm}
&\displaystyle
\be_j^TM{\bv}=0&{\forall j\in {F}}
\\[7pt]
\Rightarrow\hspace{.2cm}
&\displaystyle
\be_j^TM\be_1=0\hspace{1cm}{\forall j\in {F}}
&\hspace{.2cm}\Rightarrow\hspace{.2cm}
&\displaystyle
\be_j^T\bmu=0&{\forall j\in {F}}
\end{array}
\end{align*}
(where the second implication follows from $\be_1^T{\bv}>0$, and the third follows from $(c_4)$ in Definition~\ref{def:minion}). Hence,
\begin{align*}
z_{t}=\beta_i\be_{t}^TH{P_{\pi_{\ba}\circ\pi}}\bmu
=
\beta_i\sum_{g\in G}\lambda_g\be_g^T{P_{\pi_{\ba}\circ\pi}}\bmu
=
\beta_i\sum_{g\in G}\lambda_g\sum_{j\in (\pi_{\ba}\circ\pi)^{-1}(g)}\be_j^T\bmu=0,
\end{align*}
so that $u_{t}=w_{t}+z_{t}=0$, a contradiction. Hence, $w_{t}>0$. Observe that $(M_{/\pi_{\ba}\circ\pi},\bmu_{/\pi_{\ba}\circ\pi})\in \ourMinion_D^{(n)}$ and, hence, $M_{/\pi_{\ba}\circ\pi}\in S$. By the choice of ${\bv}$, this implies that the vector ${P_{\pi_{\ba}\circ \pi}}M{\bv}=M_{/\pi_{\ba}\circ \pi}{\bv}$ is $H$-tieless; i.e., $H{P_{\pi_{\ba}\circ \pi}}M{\bv}$ is tieless. It follows that the vector $H{P_{\pi_{\ba}\circ \pi}}(DM)(N'{\bv})=\frac{1}{2\alpha_i\lceil\sigma_1^H+1\rceil D}{\bw}$ is also tieless; being it entrywise integer, and since $\frac{1}{2\alpha_i\lceil\sigma_1^H+1\rceil D}w_{t}>0$, we obtain
\begin{align*}
\left|\frac{1}{2\alpha_i\lceil\sigma_1^H+1\rceil D}w_{t}-\frac{1}{2\alpha_i\lceil\sigma_1^H+1\rceil D}w_{{t}'}\right|\geq 1&&\mbox{that yields}&&|w_{t}-w_{{t}'}|\geq 2\alpha_i\lceil\sigma_1^H+1\rceil D.
\end{align*}
Denote the $\ell_1$-norm\ and the $\ell_2$-norm of a vector by $\|\cdot\|_1$ and
$\|\cdot\|_2$, respectively. Recall that the largest singular value of a matrix is its spectral operator norm -- i.e., $\sigma_1^H=\max_{\bzero\neq\bx\in \R^{n}}\frac{\|H\bx\|_2}{\|\bx\|_2}$ (see~\cite{Horn2012matrix}). In particular, $\|H\bx\|_2\leq\sigma_1^H\|\bx\|_2$ for each vector $\bx$ of size $n$.
Using the Cauchy-Schwarz inequality and the fact that the $\ell_1$-norm of a vector is greater than or equal to its $\ell_2$-norm, we find
\begin{align*}
|z_{t}-z_{{t}'}|&=\beta_i|(\be_{t}-\be_{{t}'})^TH{P_{\pi_{\ba}\circ\pi}}\bmu|\leq 
\beta_i\|\be_{t}-\be_{{t}'}\|_2\|H{P_{\pi_{\ba}\circ\pi}}\bmu\|_2
\leq
\beta_i\|\be_{t}-\be_{{t}'}\|_2\sigma_1^H\|{P_{\pi_{\ba}\circ\pi}}\bmu\|_2\\
&\leq 
\beta_i\|\be_{t}-\be_{{t}'}\|_1\lceil\sigma_1^H+1\rceil\|{P_{\pi_{\ba}\circ\pi}}\bmu\|_1
=
2\beta_i\lceil\sigma_1^H+1\rceil \bone_n^T|{P_{\pi_{\ba}\circ\pi}}\bmu|
\leq
2\beta_i\lceil\sigma_1^H+1\rceil \bone_n^T{P_{\pi_{\ba}\circ\pi}}|\bmu|\\
&=2\beta_i\lceil\sigma_1^H+1\rceil \bone_L^T|\bmu|
\leq 
2\beta_i\lceil\sigma_1^H+1\rceil D < 2\alpha_i\lceil\sigma_1^H+1\rceil D.
\end{align*}
We conclude the proof of the claim by noting that 
\begin{align*}
|u_{t}-u_{{t}'}|=|(w_{t}-w_{{t}'})-(z_{{t}'}-z_{t})|\geq |w_{t}-w_{{t}'}|-|z_{t}-z_{{t}'}|>2\alpha_i\lceil\sigma_1^H+1\rceil D-2\alpha_i \lceil\sigma_1^H+1\rceil D=0,
\end{align*}
which implies $u_{t}\neq u_{{t}'}$. As a consequence, the vector ${P_{(\pi_{\ba}\circ \tilde{\varphi})_{(i)}}}\bone_{c_i}$ is $H$-tieless. We can then apply Lemma~\ref{lem_H_tieless_blocks} to conclude that $f_{/\pi_{\ba}\circ
\tilde{\varphi}}(1,\dots,n)=f_{/\pi_{\ba}\circ\pi\circ\varphi}(1,\dots,n)$. Hence, by~\eqref{eqn_f_phi_phi_tilde_block}, $f_{/\tilde{\varphi}}=f_{/\pi\circ\varphi}$. Therefore,
$\xi_D((M,\bmu)_{/\pi})=\xi_D((M,\bmu))_{/\pi}$, as required. It follows that $\xi_D$ is a minion homomorphism. 

Since the set of polymorphisms of $(\A,\B)$ of arity $L$ is finite for every $L\in\N$, we can apply Proposition~\ref{prop_compactness_argument} to conclude that there exists a minion homomorphism $\zeta:\ourMinion\rightarrow\Pol(\A,\B)$.
\end{proof}

{\small
\bibliographystyle{plainurl}
\bibliography{cz}
}

\end{document}